\newtheorem{theorem}{Theorem}[section]
\newtheorem{proposition}[theorem]{Proposition}
\newtheorem*{remark}{Remark}
\theoremstyle{definition}
\newtheorem{definition}{Definition}[section]
\DeclareMathOperator*{\argmax}{arg\,max}
\DeclareMathOperator*{\argmin}{arg\,min}
    \newcommand{\identifier}{\censor}
    \newcommand{\identifier}{}
\journal{Games and Economic Behaviour}
\begin{document}

\begin{frontmatter}

%% Title, authors and addresses

%% use the tnoteref command within \title for footnotes;
%% use the tnotetext command for theassociated footnote;
%% use the fnref command within \author or \affiliation for footnotes;
%% use the fntext command for theassociated footnote;
%% use the corref command within \author for corresponding author footnotes;
%% use the cortext command for theassociated footnote;
%% use the ead command for the email address,
%% and the form \ead[url] for the home page:
%% \title{Title\tnoteref{label1}}
%% \tnotetext[label1]{}
%% \author{Name\corref{cor1}\fnref{label2}}
%% \ead{email address}
%% \ead[url]{home page}
%% \fntext[label2]{}
%% \cortext[cor1]{}
%% \affiliation{organization={},
%%            addressline={}, 
%%            city={},
%%            postcode={}, 
%%            state={},
%%            country={}}
%% \fntext[label3]{}

\title{A Game of Competition for Risk}

%% use optional labels to link authors explicitly to addresses:
%% \author[label1,label2]{}
%% \affiliation[label1]{organization={},
%%             addressline={},
%%             city={},
%%             postcode={},
%%             state={},
%%             country={}}
%%
%% \affiliation[label2]{organization={},
%%             addressline={},
%%             city={},
%%             postcode={},
%%             state={},
%%             country={}}

\author[inst1,inst2]{Louis Abraham}
\ead{louis.abraham@yahoo.fr}

\address[inst1]{Université Paris 1 Panthéon--Sorbonne}

\address[inst2]{Association Française d'Épargne et de Retraite (AFER)}

% \affiliation[inst1]{organization={Université Paris 1 Panthéon--Sorbonne}}%Department and Organization
            % addressline={Address One}, 
            % city={City One},
            % postcode={00000}, 
            % state={State One},
            % country={Country One}

% \author[inst2]{Author Two}
% \author[inst1,inst2]{Author Three}

% \affiliation[inst2]{organization={Department Two},%Department and Organization
%             addressline={Address Two}, 
%             city={City Two},
%             postcode={22222}, 
%             state={State Two},
%             country={Country Two}}

\begin{abstract}
%% Text of abstract
In this study, we present models where participants strategically select their risk levels and earn corresponding rewards, mirroring real-world competition across various sectors. 
Our analysis starts with a normal form game involving two players in a continuous action space, confirming the existence and uniqueness of a Nash equilibrium and providing an analytical solution. We then extend this analysis to multi-player scenarios, introducing a new numerical algorithm for its calculation. 
A key novelty of our work lies in using regret minimization algorithms to solve continuous games through discretization. This groundbreaking approach enables us to incorporate additional real-world factors like market frictions and risk correlations among firms. We also experimentally validate that the Nash equilibrium in our model also serves as a correlated equilibrium.
Our findings illuminate how market frictions and risk correlations affect strategic risk-taking. We also explore how policy measures can impact risk-taking and its associated rewards, with our model providing broader applicability than the Diamond-Dybvig framework. We make our methodology and code open-source\footnote{available at \identifier{\url{https://github.com/louisabraham/cfrgame}}}.
Finally, we contribute methodologically by advocating the use of algorithms in economics, shifting focus from finite games to games with continuous action sets. Our study provides a solid framework for analyzing strategic interactions in continuous action games, emphasizing the importance of market frictions, risk correlations, and policy measures in strategic risk-taking dynamics.

\end{abstract}

%%Graphical abstract
% \begin{graphicalabstract}
% \includegraphics{grabs}
% \end{graphicalabstract}

%%Research highlights
% \begin{highlights}
% \item Research highlight 1
% \item Research highlight 2
% \end{highlights}

% \begin{keyword}
% %% keywords here, in the form: keyword \sep keyword
% % keyword one \sep keyword two
% %% PACS codes here, in the form: \PACS code \sep code
% % \PACS 0000 \sep 1111
% %% MSC codes here, in the form: \MSC code \sep code
% %% or \MSC[2008] code \sep code (2000 is the default)
% % \MSC 0000 \sep 1111
% \end{keyword}

\end{frontmatter}

%% \linenumbers
\section{Introduction}
\label{sec:introduction}

Risk-taking during competition is an everyday occurrence, spanning numerous scenarios from financial markets to environmental policies. In these settings, individuals and organizations must balance the lure of potential rewards against the potential for negative outcomes such as bankruptcy or ecological disasters. Understanding and predicting behaviors in these contexts is crucial for an array of parties, including policymakers, regulators, and investors.

Game theory provides a compelling lens for analyzing these situations. It helps model strategic interactions among players and outlines the incentives prompting their actions. Our work focuses on normal form games - situations where each player selects a strategy and earns a payoff based on the collective actions of all players. In this article, we explore continuous models of competition, where players can choose their level of risk, receiving higher rewards for taking on more risk.

Nash equilibrium is a key concept for our study. It's a state of stability in the game, where no player sees an advantage in deviating from their chosen strategy. In a normal form game, a Nash equilibrium consists of strategies where each player's strategy is the best response to the strategies of others. This concept is fundamental to game theory and has been widely used in various fields like economics and political science to model strategic behavior \citep{moulin1986game, varoufakis2008game}.

Our exploration starts with a straightforward normal form game involving just two players. For this setup, we provide solid proof for both the existence and uniqueness of a Nash equilibrium, and we go further by presenting an analytical solution. This simple model serves as our fundamental building block, a starting point that offers a solid base of understanding.

Subsequently, we enhance our model to incorporate the complexity of multiple players. This extension allows us to probe deeper into the strategic dynamics in more realistic, multi-actor competitive environments. Even with the additional complexity, we manage to maintain the uniqueness of the Nash equilibrium and solve the game analytically.

The third stage of our investigation introduces two vital real-world components: market frictions and risk correlations among firms. We begin by defining these elements in a two-player context, paving the way for more complex scenarios.

The final phase of our study marks a significant departure from conventional approaches. Given the complexities introduced by market frictions and risk correlations, we adopt a novel technique—using regret minimization algorithms to discretize and solve our game. This innovation, which opens new vistas in the study of strategic interactions, proves especially valuable in the face of the potentially intractable analytical solutions that these intricate scenarios might present. 

Our experimental validation establishes that the Nash equilibria in our model also function as correlated equilibria, endorsing the use of correlated equilibria to model strategic behavior. To compute these equilibria, we employ an array of algorithms, prominently featuring regret matching and counterfactual regret minimization, thus highlighting the expanding potential of algorithmic solutions for tackling complex strategic interactions.

Next, we examine the impact of penalties and market frictions on strategic behavior and results in our continuous model. We find that penalties reduce both the average risk taken by players and their total rewards. Market frictions, on the other hand, lower average risk but increase total rewards. These frictions have a more significant effect on total rewards in high-penalty environments. In especially inefficient markets with high market frictions, raising penalties can promote cooperation and increase total rewards.

We also assess the effects of risk correlations among firms on strategic behavior and performance. We find that players take more risks in negative correlation situations, which boosts their payoff compared to a no-correlation scenario. On the flip side, in positively correlated settings, risk-taking is reduced. The impact on performance varies, being negative in efficient markets but potentially positive in less predictable markets.

Our model interestingly aligns with the Diamond-Dybvig framework, where financial institutions can choose a parameter affecting their utility function and their likelihood of bankruptcy. This parallel allows our model to explore situations such as competition among banks over deposit contract interest rates, akin to the dynamic modeled by Diamond and Dybvig. But our model is distinct and more generalized, focusing not on specific financial metrics, but on a broader notion of failure probability, enabling us to explore strategic competition dynamics in a broader array of scenarios beyond baking.

Our findings offer valuable insights for policymakers, regulators, and investors who need to understand behavior in competitive, risk-laden situations. We highlight the significant influence of penalties and market frictions on strategic behavior and outcomes, and show how risk correlations can considerably alter strategic behavior and performance in competitive dynamics. By clarifying these elements, we contribute to the discussion on how to design effective interventions and policies that encourage cooperation and improve outcomes in competitive situations involving risk.

\section{A simple model of competition for risk}
\label{sec:simplemodel}
\subsection{Description}

In this section, we introduce a simple model of competition for risk that serves as a backbone for our study. We consider a situation where two actors, denoted as Player 1 and Player 2, engage in competition by taking actions that make them more attractive to customers but also increase their risk of failure. For example, firms may choose to lower their prices to attract more customers but in doing so, they increase the likelihood of not being able to repay their loans. Similarly, insurance companies may lower their premiums to attract more customers but this comes at the cost of a higher risk of failure. Banks may increase their deposit rates to attract more customers but this also increases their vulnerability to liquidity crises.

In our model, each player directly sets their failure probability, denoted as $r_p$. While this assumption may not be realistic in practice, we note that in many situations, firms use models that map real-world actions, such as setting prices or premiums, to failure probabilities. This mapping is often a monotonous function that can be inverted to yield real-world actions from failure probabilities, making our model practical.
Based on the failure probabilities set by each player, the players can randomly ``lose'' the game. In our simple model, this translates into being applied a penalty, denoted as $P$. We assume $P>0$. We assume that the failure events are independent, meaning that each player draws a uniform random variable $f_p$ from the interval $[0,1]$ and fails if $f_p < r_p$. We will later introduce correlations between the variables $f_p$ to model real-life situations where correlations may be positive or negative.

After the failure events are determined, the players that did not fail compare their risk levels, and the player that played the highest risk level is rewarded with a payoff, denoted as $R$. Since the game is unchanged when scaling both $P$ and $R$, we assume $R = 1$. In the case of ties between risk levels, we consider several ways of resolving them, such as none of the players receiving the reward, the reward being shared equally between them, or the reward being randomly given to one of them.

We note that, as shown later, the optimal strategies in our model are modeled by real distributions, which means that the probability of ties is zero. However, when we use discrete action sets to compute approximate Nash equilibria, the action sets can overlap, and we implement the first two variations of resolving ties (the last two are equivalent in expectation). This simple model serves as a foundation for our study, and we will extend it by introducing correlations between the players' failure probabilities and market frictions in subsequent sections.
For two players, assuming $r_1 > r_2$, the outcome matrix will be:

\begin{center}
\begin{tabular}{|c|c|c|}
\cline{2-3}
\multicolumn{1}{c|}{} & $f_1\ge r_1$ & $f_1<r_1$ \\
\hline
$f_2\ge r_2$ & $R=1, 0$ & $-P, 1$ \\
\hline
$f_2<r_2$ & $1, -P$ &  $-P, -P$ \\
\hline
\end{tabular}
\end{center}

Each cell contains the rewards to each player. For example, in the upper left cell, no failure happens. Since we assumed $r_1 > r_2$, player 1 gets $R=1$ and player 2 gets $0$.

\subsection{Equivalence to a normal-form game}

We can represent the game described above in the framework of extensive-form games \citep{hart1992games} by modeling the drawing of the random variables $f_p$ using Chance nodes. Since the outcomes are subject to randomness, it is natural to assume that the actors operate under the expected utility hypothesis, which implies that they possess a von Neumann–Morgenstern utility function \citep{neumann1944theory}. Consequently, we can define a normal-form game with payoffs equal to the expected payoffs of the corresponding extensive-form game. By doing so, we can leverage the theory of normal-form games and apply various solution concepts, such as Nash equilibria, to analyze the competition between the actors.
\begin{proposition}
The expected utilities $u_p$ are computed as follows in the 2-player game:
\begin{align*}
    u_2(r_1, r_2) &= u_1(r_2, r_1) \text{~(symmetry)}\\
    u_1(r_1, r_2) &= r_2 (1-r_1) R - r_1 P + [ r_1 > r_2 ] (1-r_1)(1-r_2) R
\end{align*}
where $[ \cdot ]$ is the Iverson bracket.
\end{proposition}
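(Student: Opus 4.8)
The plan is to compute $u_1$ directly as an expectation over the two independent failure events, invoking the expected utility hypothesis to equate the normal-form payoff with the expected payoff of the corresponding extensive-form game. First I would record the marginals implied by the uniform draws: since $f_p$ is uniform on $[0,1]$, we have $P(f_p < r_p) = r_p$ and $P(f_p \ge r_p) = 1 - r_p$, and by the independence assumption the four joint events factor as products of these marginals.

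Next I would read Player~1's payoff off each of the four cells of the outcome matrix and weight it by the corresponding joint probability. Because the matrix as written presupposes $r_1 > r_2$, I would treat the two orderings separately. For $r_1 > r_2$, Player~1 collects $R$ both when both players survive (upper-left cell, probability $(1-r_1)(1-r_2)$) and when Player~1 survives while Player~2 fails (lower-left cell, probability $(1-r_1)r_2$), and incurs $-P$ in the two cells where Player~1 fails. The penalty contribution factors as $-P[r_1(1-r_2) + r_1 r_2] = -r_1 P$, and the reward contribution $(1-r_1)(1-r_2)R + (1-r_1)r_2 R$ I would regroup as $r_2(1-r_1)R + (1-r_1)(1-r_2)R$ to expose the two terms of the claimed formula.

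For $r_1 < r_2$ the "both survive" cell flips: Player~2 now holds the higher risk and takes $R$, so Player~1's only reward comes from the event that Player~2 fails and Player~1 survives, contributing $r_2(1-r_1)R$; the penalty contribution is again $-r_1 P$ by the identical factoring. This is exactly the claimed formula with the Iverson bracket evaluating to zero, and it also covers the measure-zero tie $r_1 = r_2$ under the convention that a tie awards no reward. Finally, the symmetry relation $u_2(r_1,r_2) = u_1(r_2,r_1)$ follows immediately from relabeling, since the game is invariant under the simultaneous swap of player identities and strategies.

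I expect no genuine obstacle here, as the statement is a finite expectation over four atoms. The only point requiring care is the case split on the ordering of $r_1$ and $r_2$: one must check that the penalty term collapses to $-r_1 P$ in \emph{both} orderings, and that the Iverson bracket correctly isolates the single reward term that is present only when Player~1 holds the strictly higher risk level.
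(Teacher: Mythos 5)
Your proof is correct and follows essentially the same route as the paper's: the paper also computes $u_1$ by decomposing the expectation over the independent failure events (penalty with probability $r_1$, reward $R$ when Player~2 fails and Player~1 survives, and the conditional reward when both survive and $r_1 > r_2$), merely stating the three contributing terms directly rather than enumerating all four cells and regrouping. Your extra care with the case split and the tie convention is consistent with the paper's remark that the Iverson bracket can be set to $\tfrac{1}{2}$ at $r_1 = r_2$ to encode shared rewards.
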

\begin{proof}
Player 1 can fail with probability $r_1$, in which case they lose $P$. If Player 2 loses and Player 1 does not, which happens with probability $r_2(1-r_1)$, Player 1 wins $R$. Finally, if none of the players fails, when $r_1 > r_2$, Player 1 can win $R$.
\end{proof}

It is possible to encompass the shared payoff in case of ties by defining the Iverson bracket to be $\frac{1}{2}$ when $r_1 = r_2$.
Figure \ref{fig:reward} shows what the reward function of Player 1 looks like when Player 2 adopts the fixed strategy $r_2=0.2$.

\begin{figure}[htbp]
    \centering
    \includesvg[width=0.8\linewidth]{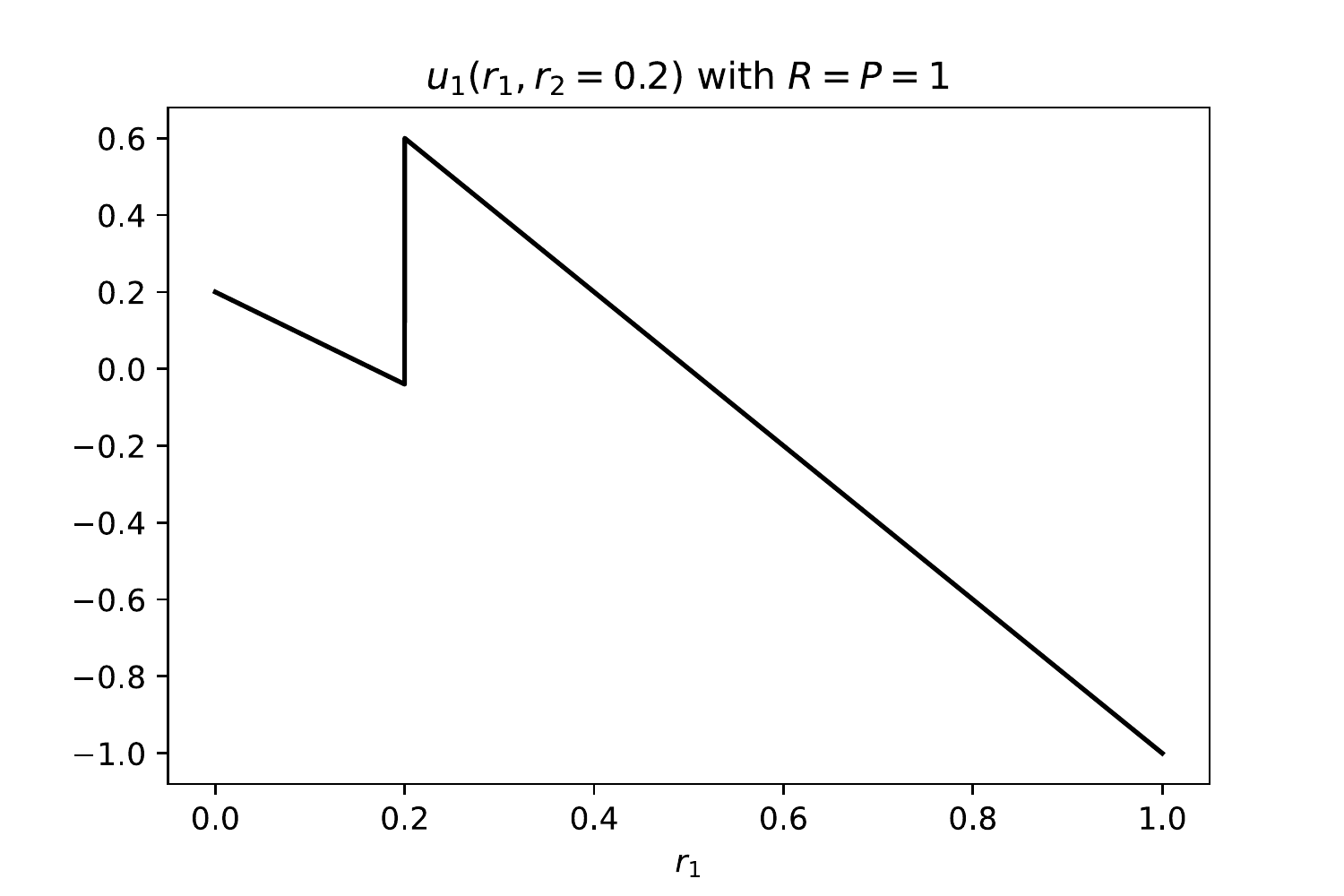}
    \caption{Reward function}
    \label{fig:reward}
\end{figure}

The discontinuity of our game is similar to two games: the War of Attrition game from \cite{smith1974theory} and the visibility game from \cite{Lotker2008-tx}. In the War of Attrition game, each player independently chooses a time to quit the game. The player who stays in the game for the longest time wins a prize. However, both players incur a cost that increases over time while they are still in the game. In the visibility game, the payoff of each player is the difference with the next player, or 1 for the player that plays the largest move. A major difference between our game and those two games is that we model the probability of failure. This means, for example, that the player taking less risk can still win the reward if the first player fails.

However, the structure of our problem and the analytical solution of the Nash equilibrium are similar to \cite{Lotker2008-tx}. We name our game the Competition for Risk game and will write it CfR in the rest of the article.

\subsection{Nash equilibrium}

A Nash equilibrium is a set of strategies, one for each player, such that no player can improve their payoff by unilaterally changing their strategy, given the strategies of the other players. In other words, each player's strategy is the best response to the strategies chosen by the other players. Nash equilibria are important because they provide a way to predict the outcome of a game if each player acts rationally and selfishly. They can also help explain why certain outcomes occur in real-world situations.

In our model of competition for risk, finding Nash equilibria can help us understand how firms, banks, and insurance companies behave when they compete for prices and take different levels of risk. By analyzing the Nash equilibria of our model, we can predict how different players will act and what the resulting outcomes will be. Moreover, we can compare the efficiency of different equilibria and use them as a benchmark to evaluate the performance of different strategies.

As in the game of \cite{Lotker2008-tx}, we can prove that there is no pure Nash equilibrium, that is, a deterministic optimal strategy.

\begin{theorem}
The CfR game does not admit any pure Nash equilibrium.
\end{theorem}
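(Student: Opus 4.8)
The plan is to show that from \emph{any} pure strategy profile $(r_1,r_2)$ at least one player has a profitable deviation, which rules out a pure Nash equilibrium. The starting point is to make the best-response structure explicit by simplifying $u_1(\cdot,r_2)$ for a fixed opponent action. For $r_1<r_2$ the Iverson term vanishes and $u_1(r_1,r_2)=r_2-r_1(r_2+P)$, while for $r_1>r_2$ the term is present and the expression collapses to $u_1(r_1,r_2)=1-r_1(1+P)$. The crucial qualitative fact is that \emph{both} branches are strictly decreasing in the player's own risk $r_1$, so on $[0,r_2)$ the payoff is maximized at $r_1=0$, whereas on $(r_2,1]$ the payoff increases as $r_1$ decreases toward $r_2$ but its supremum $1-r_2(1+P)$ is only approached and never attained.

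By the symmetry $u_2(r_1,r_2)=u_1(r_2,r_1)$ I may assume $r_1\ge r_2$ and split into two cases. If $r_1>r_2$, then Player~1 sits on the upper branch and any $r_1'\in(r_2,r_1)$ yields $1-r_1'(1+P)>1-r_1(1+P)$, a strict improvement; hence no profile with $r_1\neq r_2$ can be an equilibrium (the case $r_1<r_2$ follows by symmetry, with Player~2 deviating downward toward $r_1$). This reduces the problem to tie profiles $r_1=r_2=r$.

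For the tie case I would use the convention that the Iverson bracket equals $\tfrac12$ when $r_1=r_2$, giving $u_1(r,r)=\tfrac12-\tfrac12 r^2-rP$ after simplification. Comparing this with the overbidding supremum shows $1-r(1+P)-u_1(r,r)=\tfrac12(1-r)^2$, which is strictly positive for every $r<1$. Consequently, for $r<1$ Player~1 can choose $r_1'>r$ close enough to $r$ that $1-r_1'(1+P)>u_1(r,r)$, a profitable overbid; and for the remaining endpoint $r=1$ one checks directly that $u_1(1,1)=-P<1=u_1(0,1)$, so deviating to $0$ is profitable. Thus no tie profile is an equilibrium either, and combining the cases establishes the claim.

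I expect the main obstacle to be conceptual rather than computational: the incentive to deviate in the decisive region is an incentive to \emph{overbid the opponent by an infinitesimal amount}, so the relevant supremum is never attained and there is no single optimal deviation to exhibit. The argument must therefore proceed by approaching the discontinuity and invoking continuity of the upper branch, and it must treat the tie convention and the degenerate endpoint $r=1$ as separate checks rather than folding them into the generic case.
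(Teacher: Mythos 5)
Your proof is correct and takes essentially the same route as the paper's: undercut when $r_1 > r_2$, overbid at a tie $r_1 = r_2 = r < 1$, and deviate to $0$ at the degenerate profile $(1,1)$. The only difference is presentational — you make the branch formulas and the deviation gap $\tfrac12(1-r)^2$ explicit, where the paper argues the same three cases qualitatively with $\varepsilon$-deviations.
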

\begin{proof}
Suppose the existence of an equilibrium $s_1, s_2$. Suppose that $s_1 > s_2$. Then Player 1 can improve their payoff by playing $s_1 - \varepsilon$ since they still get the reward and take less risk. By symmetry, this implies that $s_1 = s_2$. If $s_1 < 1$, then player 1 can improve their situation by playing $s_1 + \varepsilon$ since they get $R$ (or $\frac{R}{2}$ if the reward is shared). If $s_1 = 1$ then the payoff is $-P < 0$ with probability $1$ so it is better to play $0$ which gives payoff $0$ with probability $1$.
\end{proof}

\begin{definition}A strategy $s$ (a couple of strategies) is Pareto optimal if there is no other strategy $s'$ such that $\forall p, u_p(s) \le u_p(s')$ and $\exists p, u_p(s) < u_p(s')$. It is $\varepsilon$-Pareto optimal if there is no strategy $s'$ such that $\forall p, u_p(s) \le u_p(s')$ and $\exists p, u_p(s) + \varepsilon < u_p(s')$.
\end{definition}
\begin{remark} If the reward is shared in case of tie, the pure strategy $(0,0)$ gives reward $\frac{R}{2}$ to each player. This strategy is Pareto-optimal.
\end{remark}
\begin{theorem}
\label{thm:pareto}
For every $\varepsilon$, there is a $\varepsilon$-Pareto optimal strategy that gives $\frac{R - \varepsilon}{2}$ to each player.
\end{theorem}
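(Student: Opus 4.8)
The plan is to reduce the statement to a bound on social welfare. I would first show that for \emph{every} strategy profile $s$ (pure, mixed, or the kind of randomized coordination used below), the total payoff satisfies $u_1(s) + u_2(s) \le R$, with the bound approached as both players take vanishing risk. Granting this, it suffices to exhibit a single profile $s^\star$ whose payoffs are symmetric and whose welfare equals $R - \varepsilon$: each player then receives exactly $\frac{R-\varepsilon}{2}$, and $\varepsilon$-Pareto optimality will follow almost for free from the welfare ceiling.

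For the welfare bound I would compute $u_1 + u_2$ directly from the Proposition. Assuming $r_1 > r_2$, substituting the two expressions and simplifying the reward terms collapses the bracketed factor to $1 - r_1 r_2$, giving $u_1 + u_2 = R(1 - r_1 r_2) - (r_1 + r_2)P$, which is $\le R$ since $r_1 r_2 \ge 0$ and $(r_1+r_2)P \ge 0$. The tie case $r_1 = r_2 = r$ (with the shared-reward convention of the Remark) gives $u_1+u_2 = R - r^2 R - 2rP \le R$, so the ceiling holds for all pure profiles; by linearity of expectation it then extends to all mixed or correlated profiles as well.

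For the construction I would set $\delta = \varepsilon/P$ and let each player independently play $0$ and $\delta$ with probability $\tfrac12$ each (or, equivalently for the expected payoff, coordinate publicly between $(\delta,0)$ and $(0,\delta)$, which additionally avoids ties). Direct evaluation shows this profile is symmetric in payoffs and has welfare $R - \delta P = R - \varepsilon$, so each player earns $\frac{R-\varepsilon}{2}$; continuity of the payoff in $\delta$ (or an explicit solve) lets me hit the target value exactly, and for $\varepsilon$ beyond the range of this family one simply coordinates between more aggressive symmetric pairs, whose welfare sweeps continuously down from $R$.

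Finally, for $\varepsilon$-Pareto optimality: if some $s'$ satisfied $u_p(s^\star) \le u_p(s')$ for both $p$ and $u_q(s^\star) + \varepsilon < u_q(s')$ for some $q$, then summing the inequalities would give $u_1(s') + u_2(s') > \bigl(u_1(s^\star)+u_2(s^\star)\bigr) + \varepsilon = (R-\varepsilon)+\varepsilon = R$, contradicting the welfare ceiling. The main obstacle, and the step I would be most careful about, is the welfare bound itself --- specifically making it genuinely uniform over \emph{all} admissible profiles (including ties and the randomization used in the construction), since the entire Pareto argument rests on $R$ being an unbeatable welfare ceiling; matching the construction to the paper's notion of a ``couple of strategies'' is the only other point needing care.
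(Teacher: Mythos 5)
Your proof is correct, and its skeleton---exhibit a symmetric profile with total payoff $R-\varepsilon$, then rule out $\varepsilon$-improvements via a welfare ceiling of $R$---is the same as the paper's, but you execute it differently at both steps. The paper's construction is atomless (each player i.i.d.\ uniform on $[0,2\varepsilon']$, with $\mathbb{E}[u_1]=\varepsilon'(1-\varepsilon')R-\varepsilon' P+\tfrac{(1-\varepsilon')^2}{2}R\to \tfrac{R}{2}$), and it merely \emph{asserts} the ceiling (``else the total payoff would be more than $R$'') without proof, and glosses over tuning $\varepsilon'$ to hit $\tfrac{R-\varepsilon}{2}$ exactly. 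You instead prove the ceiling explicitly via the identity $u_1+u_2=R(1-r_1r_2)-(r_1+r_2)P$ (and $R(1-r^2)-2rP$ on ties), extend it to randomized profiles by linearity of expectation, and your coordinated construction alternating between $(\delta,0)$ and $(0,\delta)$ with $\delta=\varepsilon/P$ delivers the target payoff exactly; this is the more rigorous and more exact version of the argument, at the cost of a case analysis the paper avoids. One caveat: your claim that the \emph{independent} two-point mixture is equivalent in expected payoff to the coordinated one is doubly loose. Even under the shared-reward tie convention its welfare is $R-\delta P-\tfrac{R\delta^2}{4}$, not $R-\delta P$ (your continuity hedge covers this), but under the convention where nobody is rewarded on a tie---which the paper also allows---the independent version places probability $\tfrac12$ on ties and its welfare collapses to roughly $\tfrac{R}{2}-\delta P$, so that variant genuinely fails there. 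Since your coordinated profile is tie-free (as is the paper's atomless uniform, which is why it is robust to the convention), this is sloppiness rather than a gap: lead with the coordinated profile, or state explicitly that the independent variant is valid only under reward sharing.
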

\begin{proof}
Let us consider the joint mixed strategy where each player plays uniformly at random in the interval $[0, 2 \varepsilon]$. The payoff is
\begin{align*} \mathbb{E}[u_1] &= \mathbb{E}\left[ r_2 (1-r_1) R - r_1 P + [ r_1 > r_2 ](1-r_1)(1-r_2) R\right] \\ &= \varepsilon (1-\varepsilon) R - \varepsilon P + \frac{(1-\varepsilon)^2}{2} R \\
&\rightarrow_{\varepsilon \rightarrow 0} \frac{R}{2} \end{align*}
so by taking $\varepsilon$ small enough we can get as close to $\frac{R}{2}$ as we want.

If each player gets payoff $\frac{R - \varepsilon}{2}$then no player can get $\varepsilon$ without degrading the other’s performance else the total payoff would be more than $R$.
\end{proof}

However, the $\varepsilon$-Pareto strategy is highly concentrated around $0$, incentivizing players to deviate and increase their chances of winning $R$ without taking on additional risk. Thus, this strategy fails to form a Nash equilibrium.

Fortunately, the CfR game possesses a unique Nash equilibrium, a powerful property that showcases the strength of our approach. Moreover, this equilibrium is symmetric.

For finite games, \citet{Nash1950-jp} proved the existence of mixed Nash equilibria, while Glicksberg's theorem \citep{Glicksberg1951-wp} extended  this result to continuous reward functions. \citet{Dasgupta1986-gu} established conditions under which discontinuous games can possess Nash equilibria and symmetric games can admit symmetric equilibria.

The uniqueness of the Nash equilibrium is a highly desirable property, with most models using concave reward functions to ensure it. Therefore, it is noteworthy that the CfR game exhibits a unique Nash equilibrium.

We recall Theorem 2.1 from \citet{Lotker2008-tx}:

\begin{theorem}
\label{thm:jules}
Let $(f_1, \ldots , f_n)$ be a Nash equilibrium point, with expected payoff $u_i^*$ to Player $i$ at the equilibrium point. Let $u_i(x)$ (as an abuse of notation) denote the expected payoff for Player $i$ when he plays the pure strategy $x$ and all other players play their equilibrium mixed strategy. Then $u_i(x) \leq u_i^*$ for all $x \in [0, 1]$, and furthermore, there exists a set $\mathcal{Z}$ of measure $0$ such that $u_i(x) = u_i^*$ for all $x \in support(f_i) \setminus \mathcal{Z}$.
\end{theorem}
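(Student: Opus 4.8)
The plan is to derive both claims directly from the definition of Nash equilibrium via the standard indifference argument, adapted to a continuous action set by replacing sums over pure strategies with integrals against the equilibrium distributions $f_i$. Throughout I would write the equilibrium payoff as $u_i^* = \int_{[0,1]} u_i(x)\, df_i(x)$, where $u_i(x)$ is Player $i$'s payoff when they play the pure strategy $x$ and everyone else plays their equilibrium strategy $f_{-i}$. This identity is essentially the definition: the payoff of the mixed strategy $f_i$ (against the fixed $f_{-i}$) is the $f_i$-average of the pure-strategy payoffs $u_i(x)$, using that players randomize independently so one may condition on Player $i$'s own draw and integrate.

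First I would establish the inequality $u_i(x) \leq u_i^*$ for every $x \in [0,1]$. The Dirac mass $\delta_x$ is itself a (degenerate) mixed strategy, hence an admissible unilateral deviation for Player $i$. By the Nash property, $f_i$ is a best response to $f_{-i}$, so no deviation can do strictly better: $u_i(x) = \int u_i\, d\delta_x \leq \int u_i\, df_i = u_i^*$. This step is immediate once one records that deviating to the pure strategy $x$ is covered by the equilibrium definition.

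Second, for the almost-everywhere equality on the support, I would integrate the nonnegative gap. Set $g(x) = u_i^* - u_i(x) \geq 0$. Then $\int_{[0,1]} g\, df_i = u_i^* - u_i^* = 0$, so a nonnegative function integrating to zero against $f_i$ must vanish $f_i$-almost everywhere. Defining $\mathcal{Z} = \{x : u_i(x) < u_i^*\}$ gives $f_i(\mathcal{Z}) = 0$, and on $support(f_i) \setminus \mathcal{Z}$ the two-sided bound $u_i^* \leq u_i(x) \leq u_i^*$ forces $u_i(x) = u_i^*$, which is exactly the second claim.

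The main obstacle — and the only place where the discontinuity of the CfR payoff (the Iverson bracket) bites — is justifying that $u_i(\cdot)$ is measurable, so that the integrals above are well defined, and pinning down what ``measure zero'' means. The cleanest reading is that $\mathcal{Z}$ is null for the equilibrium measure $f_i$, which is precisely what the gap argument delivers. If instead one wants $\mathcal{Z}$ to be Lebesgue-null, one must additionally invoke that the equilibrium strategies are atomless and absolutely continuous — a fact established for the CfR game in the analysis that follows — after which $f_i$-null and Lebesgue-null subsets of the support agree up to the density. Measurability of $u_i$ itself poses no difficulty, since it is a finite combination of the continuous terms of the payoff together with the indicator of a Borel set, integrated against the fixed measures $f_{-i}$.
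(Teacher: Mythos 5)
Your proof is correct, and there is in fact no in-paper proof to compare it against: Theorem \ref{thm:jules} is recalled verbatim from \citet{Lotker2008-tx} (their Theorem 2.1), and your argument is essentially the canonical one behind that result. The three steps are sound: $u_i^* = \int u_i(x)\, df_i(x)$ follows from Fubini for the bounded Borel payoff (independent randomization lets you condition on Player $i$'s own draw); the inequality $u_i(x) \le u_i^*$ is immediate since $\delta_x$ is an admissible unilateral deviation; and the nonnegative gap $g = u_i^* - u_i$ integrating to zero against $f_i$ forces $f_i(\mathcal{Z}) = 0$ for $\mathcal{Z} = \{x : u_i(x) < u_i^*\}$. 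You also correctly flag the only delicate point, namely in which sense $\mathcal{Z}$ is null: the gap argument delivers $f_i$-nullity, which is exactly what the paper's later use of the theorem requires (in \ref{proof:nashcor} the payoff is only needed to be constant for almost all points of the support before differentiating to extract the density). One small refinement to your final remark: $f_i$-null subsets of the support are Lebesgue-null only when the equilibrium density is strictly positive Lebesgue-almost everywhere on the support, not merely absolutely continuous and atomless --- a density can vanish on a positive-Lebesgue-measure subset of its support --- though in the CfR game this positivity does hold once the explicit form $f(x) \propto (1-x)^{-3}$ is derived, so the distinction is harmless here.
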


This theorem means that at the Nash equilibrium, almost any move that is in the support of a player's strategy should give them the same (maximal) payoff. This theorem is crucial to find the equilibrium in the CfR game.

\begin{restatable}{theorem}{nashcor}
\label{thm:nashcor}
Up to a set of measure zero, the CfR game admits a unique Nash equilibrium. This equilibrium is symmetric and its distribution is $f(x) = \left[x < 1 - \sqrt{\frac{k - 1}{k + 1}}\right] \frac{ k - 1}{(1-x)^3}$ with $k := \sqrt{(P + 1)^2 + 1}$. The the average move is $\bar r = k - (P+1)$ and the utility of each player is $u^* = \bar r$.
\end{restatable}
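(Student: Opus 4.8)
The backbone of the proof is the indifference principle of Theorem~\ref{thm:jules}: at any Nash equilibrium a player's expected payoff is constant and maximal almost everywhere on the support of their own strategy. The plan is to first \emph{derive} the symmetric equilibrium density by turning this indifference condition into an ordinary differential equation, then to \emph{verify} that the resulting strategy is a mutual best response, and finally to establish \emph{uniqueness} and \emph{symmetry} among all equilibria. Since we have already shown the game admits no pure equilibrium, and existence of a (symmetric) equilibrium follows from the cited results of \citet{Dasgupta1986-gu}, the real content is the explicit computation together with uniqueness.

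For the derivation, suppose the opponent plays an atomless strategy with density $f$, mean $\bar r$, and write $G(x) := \int_0^x (1-t) f(t)\,dt$. Using the payoff formula of the preceding proposition with $R=1$, the expected payoff of the pure move $x$ is
\begin{equation*}
u_1(x) = (1-x)\,\bar r - xP + (1-x)\,G(x).
\end{equation*}
Theorem~\ref{thm:jules} forces $u_1$ to be constant on the support, so I would set $u_1'(x)=0$, obtaining $(1-x)^2 f(x) = \bar r + P + G(x)$; differentiating once more eliminates $G$ and yields the separable equation $(1-x) f'(x) = 3 f(x)$, whose solutions are $f(x) = A (1-x)^{-3}$. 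I would then fix the three constants $A$, $\bar r$, and the upper support endpoint $b$ by three relations: normalization $\int_0^b f = 1$; the boundary identity $A = \bar r + P$ that makes $u_1$ genuinely constant (it arises from $(1-x)^2 f(x) = \bar r + P + G(x)$, since here $G(x) = A/(1-x) - A$); and the self-consistency of the mean, $\bar r = \int_0^b x f(x)\,dx$. Writing $c := 1-b$, these collapse to the quadratic $(1+P)c^2 + 2c - (1+P) = 0$, whose positive root gives $c = (k-1)/(P+1) = \sqrt{(k-1)/(k+1)}$ with $k = \sqrt{(P+1)^2+1}$, and back-substitution produces $A = k-1$, $\bar r = k-(P+1)$, and $u^* = u_1(0) = \bar r$, exactly the claimed expressions.

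For sufficiency, I would check that $f(x) = (k-1)(1-x)^{-3}$ on $[0,b]$ is a best response to itself, i.e.\ that $u_1(x) \le u^*$ for \emph{every} $x \in [0,1]$, not merely on the support. By construction $u_1 \equiv \bar r$ on $[0,b]$; for $x > b$ the weight $G(x)$ is frozen at $G(b)$, so $u_1'(x) = -\bar r - P - G(b) < 0$ and $u_1$ strictly decreases below $u^*$, with $u_1 \to -P < 0$ as $x \to 1$. Since the equilibrium is atomless, ties occur with probability zero and the tie-breaking convention is irrelevant, which is the source of the ``up to a set of measure zero'' clause; one also notes $u^* = \bar r > 0$ because $k > P+1$, so participating in the support is strictly better than the degenerate payoff.

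The hardest and most delicate part is uniqueness across \emph{all} equilibria, including asymmetric ones; here I would argue structurally about an arbitrary equilibrium $(F_1,F_2)$. First, no atoms: an atom of $F_{-i}$ at $m$ would let player $i$ capture a discrete block of reward by shifting mass just above $m$ at negligible extra risk, so $u_i$ would jump upward across $m$ and no best response could sit at or just below $m$---contradiction; the move $1$ is dominated, so no mass sits there either. Second, each support is an interval reaching $0$: on any interval where the opponent places no mass, $u_i$ is strictly decreasing (its derivative is $-\bar r_{-i}-P-G_{-i}<0$), which rules out gaps and forces the infimum of the support to be $0$, since otherwise $u_i(0) > u_i^*$ violates the global bound of Theorem~\ref{thm:jules}. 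Third, the two upper endpoints coincide: if $b_1 < b_2$, then on $(b_1,b_2]$ player $2$'s payoff is strictly decreasing, contradicting indifference there. With common support $[0,b]$, player $i$'s indifference is an ODE for the \emph{opponent's} density, giving $f_{-i} = A_{-i}(1-x)^{-3}$; matching normalization on the same interval forces $A_1 = A_2$, hence $f_1 = f_2$ almost everywhere, and the constants are pinned down as above. This simultaneously yields symmetry and uniqueness up to a measure-zero modification. I expect the atomlessness and support-coincidence steps to be the main obstacles, as they require careful handling of the payoff discontinuity at $r_1 = r_2$.
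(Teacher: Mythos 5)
Your proposal is correct and takes essentially the same route as the paper's proof in \ref{proof:nashcor}: the indifference principle of Theorem~\ref{thm:jules} is converted into the ODE $(1-x)f'(x)=3f(x)$ giving $f(x)=A(1-x)^{-3}$, the constants are fixed by normalization, the boundary identity $A=\bar r+P$ and mean self-consistency (your quadratic $(1+P)c^2+2c-(1+P)=0$ is the paper's $(P+1)h^2-(2P+4)h+2=0$ under $c=1-h$), and uniqueness/symmetry rest on the same structural lemmas (no atoms, no gaps, $\inf S=0$, coinciding supports, matching constants). The only substantive difference is your explicit sufficiency check that $u_1$ strictly decreases above the cutoff, which the paper omits because it derives existence from \citet{Dasgupta1986-gu} and then characterizes any equilibrium; your version is marginally more self-contained but otherwise identical in structure.
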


\begin{proof}
    See \ref{proof:nashcor} for a full proof. For a less rigorous treatment, refer to the proof of the more general Theorem \ref{thm:multiple}.
\end{proof}

At $P = 1$, the cutoff value is $1 - \sqrt{\frac{\sqrt{5} - 1}{\sqrt{5} + 1}} = 2 - \phi \approx 0.382$ with $\phi$ the Golden ratio. We plot the distribution in Figure \ref{fig:nash}.

\begin{figure}[htbp]
  \centering
  \begin{minipage}[t]{0.48\textwidth}
    \centering
    \includesvg[width=1.1\linewidth]{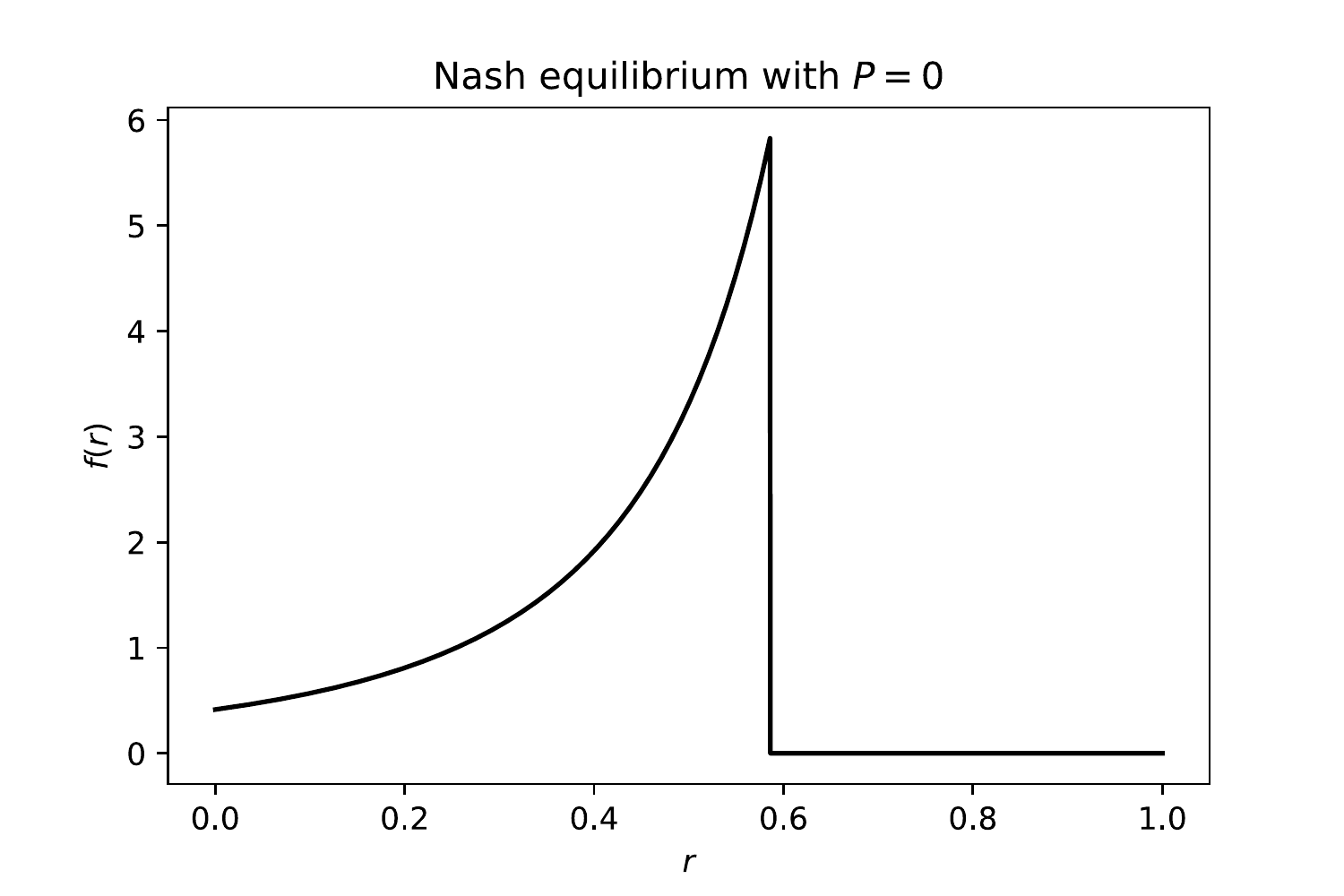}
  \end{minipage}
  \hfill
  \begin{minipage}[t]{0.48\textwidth}
    \centering
    \includesvg[width=1.1\linewidth]{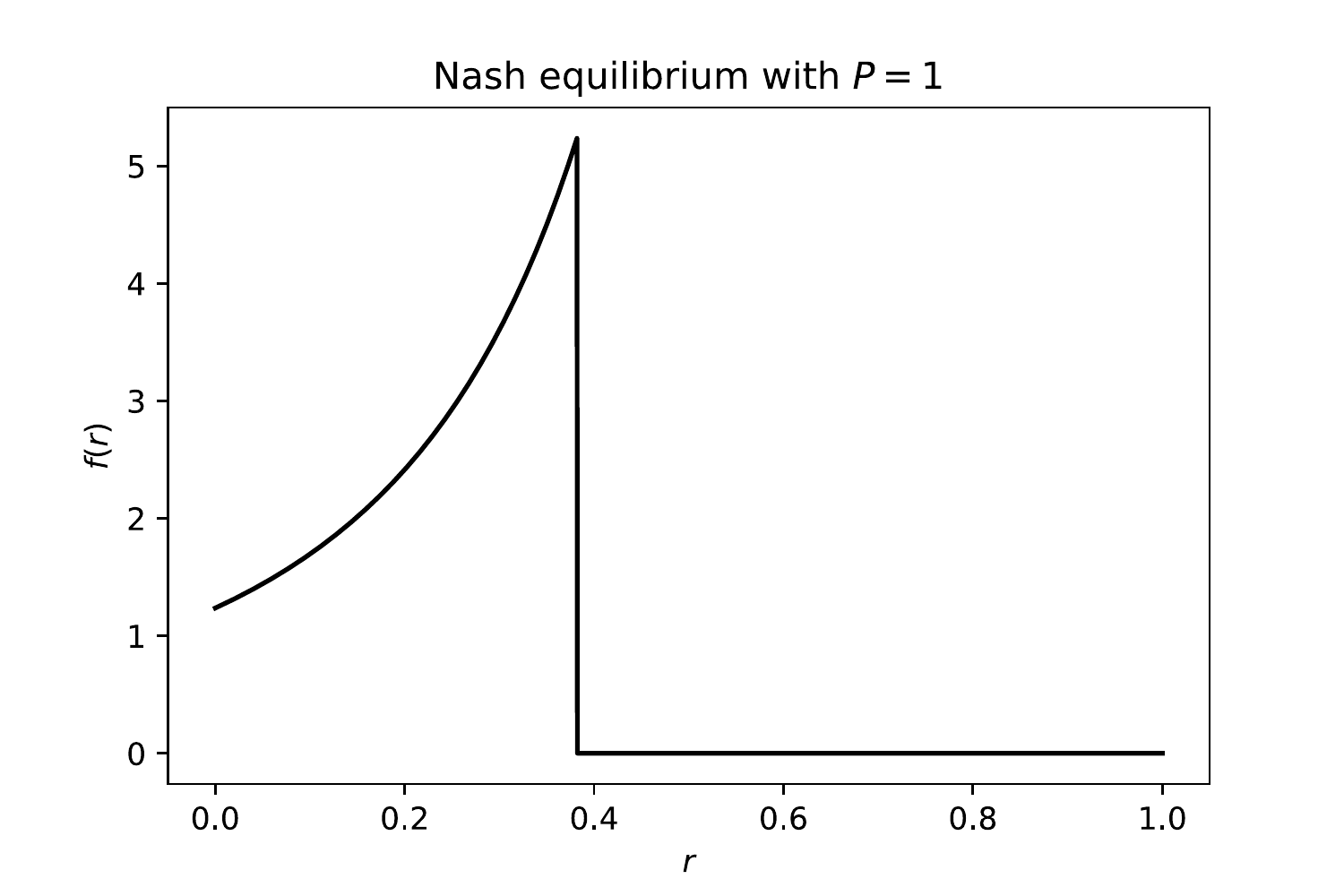}
  \end{minipage}
  \caption{Nash equilibrium}
  \label{fig:nash}
\end{figure}

\begin{figure}[htbp]
  \centering
  \includesvg[width=0.8\linewidth]{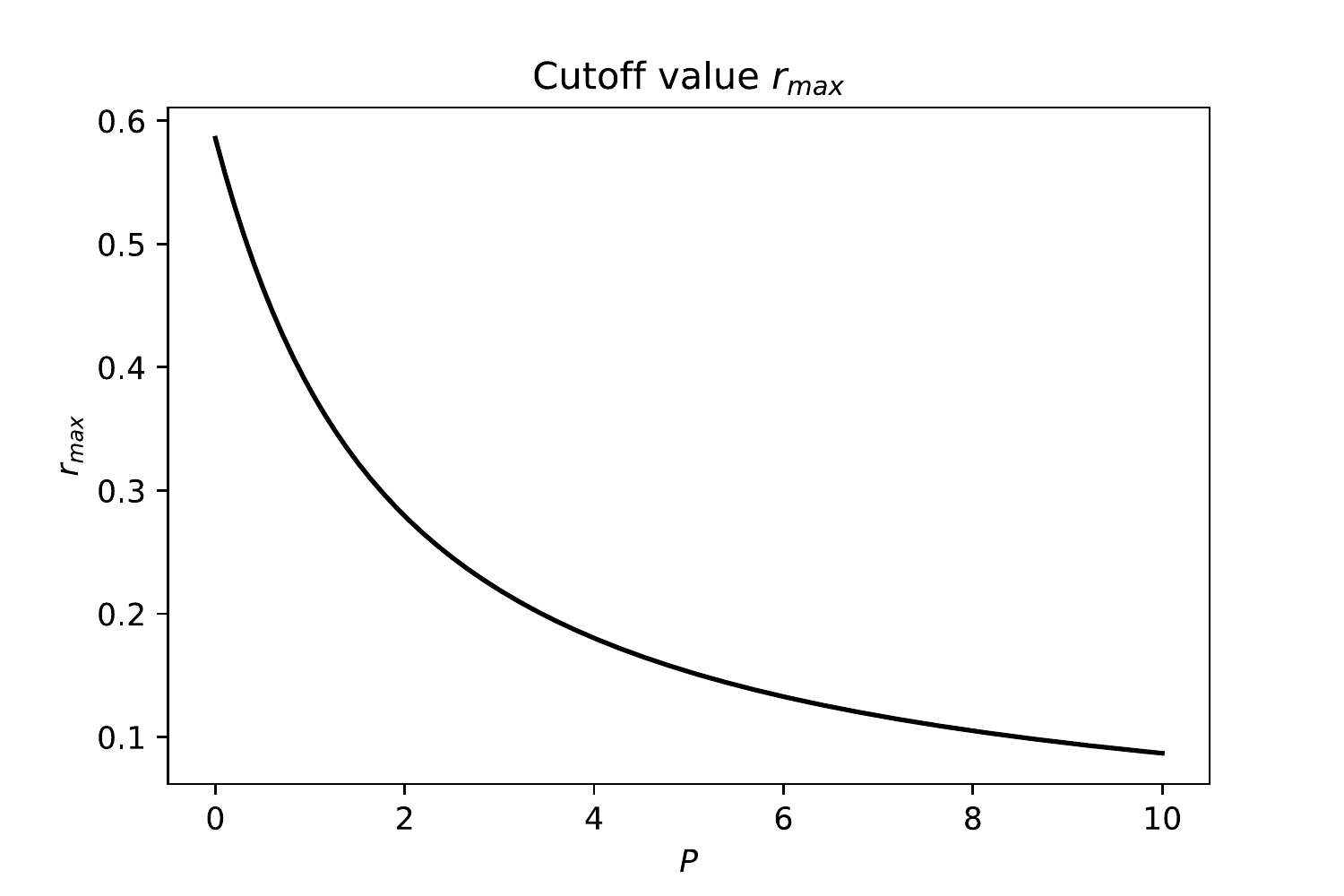}
  \caption{The cutoff goes to zero when $P \rightarrow\infty$.}
  \label{fig:cutoff}  
\end{figure}

 The behavior of the cutoff ${r_{max}}$ is displayed in Figure \ref{fig:cutoff}. Unsurprisingly, when $P \rightarrow \infty$, the penalty becomes much larger than the reward and the players play closer to $0$.

The case when $P \rightarrow 0$ is more surprising: the maximal cutoff value at $P=0$ is $h = 1 - \sqrt{\frac{\sqrt{2}}{\sqrt{2} + 2}} \approx 0.356$. This is because even if the penalty is $0$, the players cannot get the reward if they ``lose'', which prevents them from taking too much risk. We plot the distribution in Figure \ref{fig:nash}. 

\section{Generalization to multiple players}

Quite naturally, we wonder what the Nash equilibrium looks like for multiple players. The visibility game of \citet{Lotker2008-tx} probably does not admit an analytical solution and they instead give an algorithm to produce approximate solutions. We show that the CfR game for multiple players admits a unique symmetric equilibrium and present a new numerical algorithm to compute it. We finally study the asymptotic behavior of the equilibrium. 

\subsection{Nash equilibrium}
Interestingly, our Correlation for Risk game admits an analytical solution even for multiple players. More precisely:
\begin{theorem}\label{thm:multiple}
    There is a unique symmetric Nash equilibrium for in the CfR game with $n$ players defined by $$f(x) = \frac{P + w}{(n-1)(1-x)^{2+\frac{1}{n-1}} (P x + w)^{1 - \frac{1}{n-1}}}$$
    for some constants $r_{max}$ and $w := \bar r ^ {n-1}$ (the probability of winning when taking no risk) such that 
    \begin{align*}
        \int_0^{r_{max}} f(x) dx &= 1 \\
        \int_0^{r_{max}} x f(x) dx &= \bar r 
    \end{align*}
\end{theorem}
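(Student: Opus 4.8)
My plan is to reduce the problem to a single indifference condition via the cited Theorem~\ref{thm:jules}, and then recover the density by differentiation. First I would compute the payoff of a player who deviates to a pure strategy $x$ while the other $n-1$ players use the symmetric mixed strategy with density $f$, mean $\bar r$ and support $[0,r_{max}]$. Fixing one opponent drawing risk $t$, the deviating player (conditional on surviving) is \emph{not} beaten by that opponent precisely when the opponent fails (probability $t$) or survives with strictly smaller risk ($t<x$, probability $(1-t)$); integrating over $f$ and using independence across opponents gives the ``not beaten by anyone'' probability $Q(x)^{n-1}$, where
\[
Q(x) \;=\; \bar r + \int_0^x (1-t)\,f(t)\,dt .
\]
Hence the deviation payoff is $u(x) = -Px + (1-x)\,Q(x)^{n-1}$ (using $R=1$), and since $Q(0)=\bar r$ we get $u(0)=\bar r^{\,n-1}=w$, matching the interpretation of $w$ as the probability of winning at zero risk.

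Next I would invoke Theorem~\ref{thm:jules}: on the support of a symmetric Nash equilibrium, $u(x)$ is constant and equal to the equilibrium value $u^*$ up to a null set. Evaluating at $x=0$ pins down $u^*=w$, so on $[0,r_{max}]$ the indifference condition reads $-Px+(1-x)Q(x)^{n-1}=w$, which I solve algebraically for $Q(x)=\bigl(\tfrac{w+Px}{1-x}\bigr)^{1/(n-1)}$. Since $Q'(x)=(1-x)f(x)$ by construction, I would then differentiate: writing $g(x)=\tfrac{w+Px}{1-x}$ one checks $g'(x)=\tfrac{P+w}{(1-x)^2}$, and chaining through $f(x)=Q'(x)/(1-x)=\tfrac{1}{n-1}g^{\frac1{n-1}-1}g'/(1-x)$ collapses exactly to the stated formula $f(x)=\frac{P+w}{(n-1)(1-x)^{2+1/(n-1)}(Px+w)^{1-1/(n-1)}}$. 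This is the mechanical heart of the proof and should go through cleanly.

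To fix the constants I would use the two displayed integral constraints together with $w=\bar r^{\,n-1}$. A convenient intermediate identity is $\int_0^{r_{max}}(1-t)f(t)\,dt = 1-\bar r$, which by definition equals $Q(r_{max})-\bar r$, forcing the boundary condition $Q(r_{max})=1$; solving $ \tfrac{w+P\,r_{max}}{1-r_{max}}=1$ yields the explicit cutoff $r_{max}=\frac{1-w}{1+P}$. The normalization $\int_0^{r_{max}}f=1$ and the mean condition $\int_0^{r_{max}}xf=\bar r$ (with $w=\bar r^{\,n-1}$ substituted) then give two equations in the two unknowns $\bar r$ and $r_{max}$; I would argue via monotonicity and continuity in $\bar r$ (or by evaluating the integrals in closed form, as is possible here) that this system has a unique admissible solution, which also establishes uniqueness of the symmetric equilibrium.

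The step I expect to be the genuine obstacle is turning these \emph{necessary} indifference conditions into a complete existence-and-uniqueness argument. In particular I would need to verify that the constructed $f$ is a bona fide density (nonnegativity is immediate since $P,w>0$ and $x<1$, and $f(0)$ is finite and positive so the support genuinely starts at $0$), that the support is the full interval $[0,r_{max}]$ with no interior gaps or atoms, and crucially that no profitable deviation exists outside the support. The last point is checkable: for $x>r_{max}$ every surviving opponent has smaller risk, so $Q(x)=1$ and $u(x)=1-(1+P)x$, which is strictly decreasing and equals $u^*=w$ exactly at $x=r_{max}$, confirming $u(x)<u^*$ beyond the cutoff. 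Fully rigorous control of the support structure and of uniqueness is the delicate part, and I would expect to lean on the appendix argument developed for the two-player case (Theorem~\ref{thm:nashcor}), presenting the multi-player derivation above as the indifference-based, less formal route.
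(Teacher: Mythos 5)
Your proposal is correct and takes essentially the same route as the paper: the same indifference argument via Theorem~\ref{thm:jules}, the same normalization $u(0)=w$ obtained by placing $0$ in the support, the same algebraic solution for the ``not beaten'' probability $Q(x)=\bigl(\tfrac{w+Px}{1-x}\bigr)^{1/(n-1)}$, and differentiation to recover $f$, with the constants fixed by the same two integral constraints (which the paper resolves numerically in its appendix). You in fact go slightly beyond the paper's sketch by extracting the closed-form cutoff $r_{max}=\tfrac{1-w}{1+P}$ from the boundary condition $Q(r_{max})=1$ and by verifying that deviations above the cutoff satisfy $u(x)=1-(1+P)x<u^*$, checks the paper omits and which are fully consistent with its treatment.
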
 
\begin{proof}
    
We adapt the proof of Theorem \ref{thm:nashcor} and start by assuming the existence of a symmetric mixed equilibrium defined by the probability density $f$. First we derive a nice expression for $u(x)$, defined as the utility of one player choosing move $x$ while the others play according to $f$. For all $x \in support(f)$:

$$u(x) = -x P + (1-x) \left( \int_0^x f(y) dy +\int_x^1 y f(y) dy \right)^{n-1}$$

This equation is quite natural: the player loses $P$ with probability $x$. If they survive, with probability $1-x$, they need the $n-1$ other players to either play a lower value or play a higher value and fail. We can suppose as previously that $0$ is in the support to subtract $u(0)$. We write $\bar r$ for the expectation of the action $r$ under $f$.
$$\left(\frac{\bar r^{n-1} + x P}{1-x} \right)^\frac{1}{n-1} = \int_0^x f(y) dy + \int_x^1 y f(y) dy$$
We define $w := \bar r ^ {n-1}$ to be the probability of winning when taking no risk, we derivate and divide by $1-x$ to obtain:
$$f(x) = \frac{P + w}{(n-1)(1-x)^{2+\frac{1}{n-1}} (P x + w)^{1 - \frac{1}{n-1}}}$$
Finally we can solve $\int_0^{r_{max}} f(x) dx= 1$ and $\int_0^{r_{max}} x f(x) dx = \bar r$. We relegate the description of the numerical estimation of ${r_{max}}$ and $w$ to \ref{estimate}.
\end{proof}

We display the behavior of the solution for multiple players in Figure \ref{fig:solution-multiple}.

\begin{figure}[htbp]
  \centering
  \begin{minipage}[t]{0.48\textwidth}
    \centering
    \includesvg[width=1.1\linewidth]{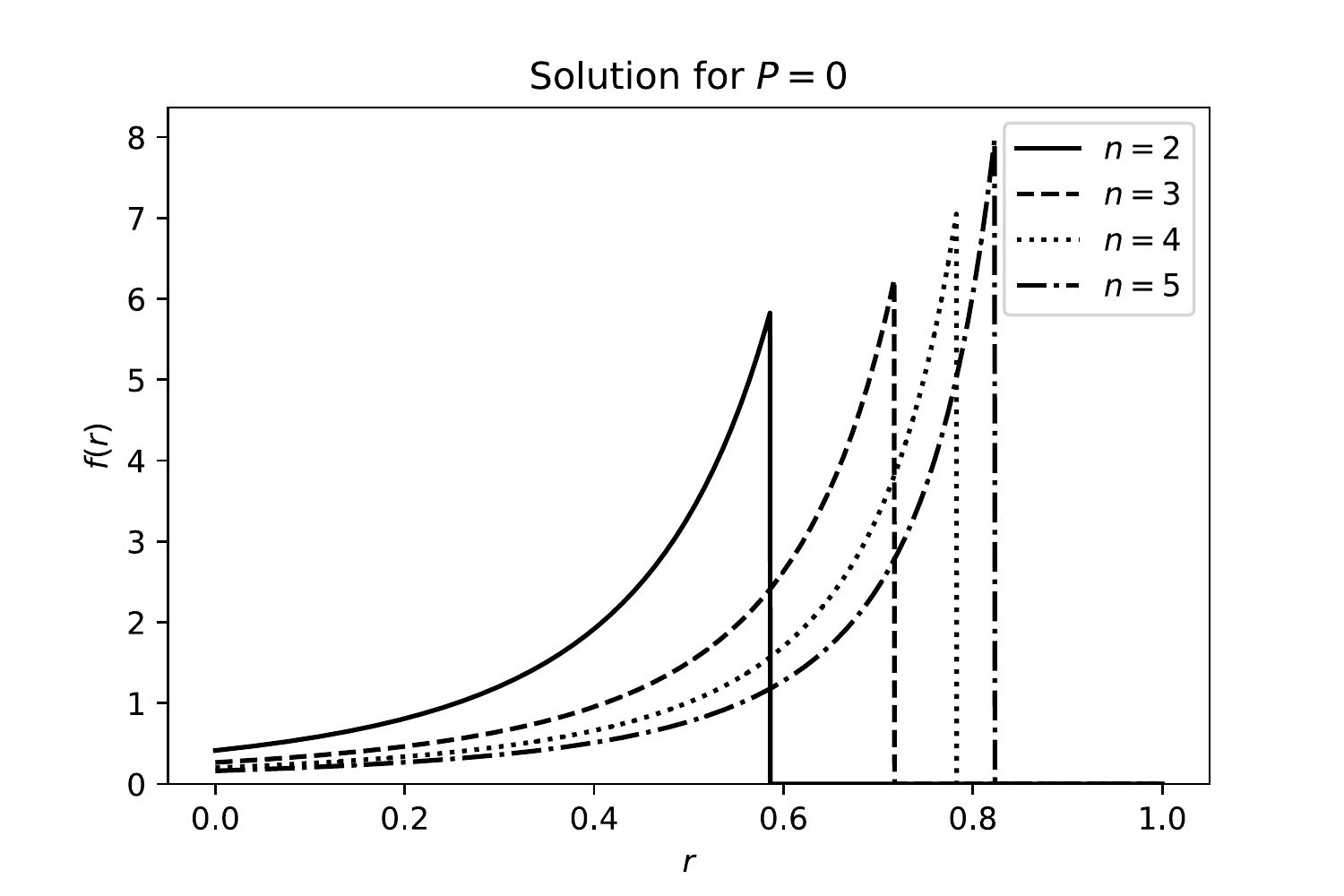}
  \end{minipage}
  \hfill
  \begin{minipage}[t]{0.48\textwidth}
    \centering
    \includesvg[width=1.1\linewidth]{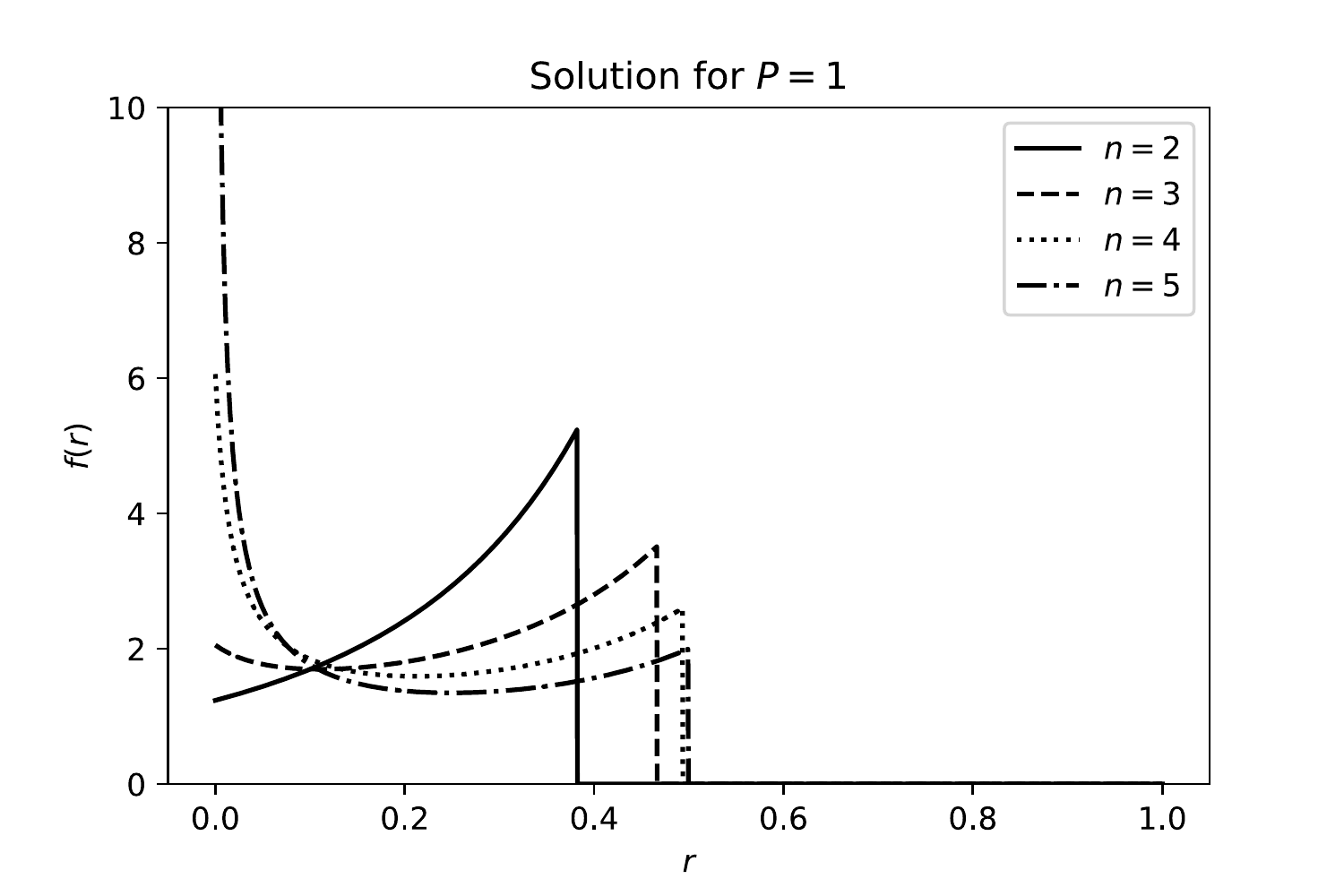}
  \end{minipage}
  \caption{We observe a clear difference between the cases $P=0$ (no penalty) and $P=1$ (presence of a penalty). In both cases, the cutoff increases. However, the average risk seems to decrease sharply when there is a nonzero penalty with a mode at $r=0$.}
  \label{fig:solution-multiple}  
\end{figure}

\subsection{Asymptotic behavior}

We are interested in studying the equilibrium when the number of players goes to infinity. For fixed $P$, we have the following:

\begin{proposition}When $n\rightarrow \infty$,
    $\lim {r_{max}} = \frac{1}{1+P}$ and $\bar r \sim \frac{1}{n P}$
\end{proposition}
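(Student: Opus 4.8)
The plan is to reduce the whole proposition to a single clean algebraic identity relating $\bar r$ and $w := \bar r^{\,n-1}$, and then read off both asymptotics from it. Throughout I write $m := n-1$ and keep the survival function $G(x) = \left(\frac{Px+w}{1-x}\right)^{1/(n-1)}$ that appears implicitly in the proof of Theorem \ref{thm:multiple}: recall $f(x) = \frac{G'(x)}{1-x}$ on the support $[0, r_{max}]$, and $G(x) = \int_0^x f + \int_x^1 y f\,dy$ is the probability that a given opponent is either outbid or fails. First I would pin down the cutoff exactly. At the top of the support all mass lies below $r_{max}$, so $\int_0^{r_{max}} f = 1$ and $\int_{r_{max}}^1 y f\,dy = 0$, whence $G(r_{max}) = 1$; plugging into the closed form of $G$ gives $\frac{P r_{max} + w}{1 - r_{max}} = 1$, i.e. the exact relation $r_{max} = \frac{1-w}{1+P}$. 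Since $w \ge 0$ this yields $r_{max} \le \frac{1}{1+P}$, and as $\bar r$ is the mean of a distribution supported on $[0,r_{max}]$ we get the a priori bound $\bar r \le r_{max} \le \frac{1}{1+P} < 1$. Consequently $w = \bar r^{\,n-1} \le (1+P)^{-(n-1)} \to 0$, and the cutoff relation then forces $\lim r_{max} = \frac{1}{1+P}$, proving the first claim.

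The heart of the argument is to turn the normalization $\int_0^{r_{max}} f\,dx = 1$ into an explicit equation. I would substitute $s = \frac{Px+w}{1-x}$ (equivalently $s = G^{\,n-1}$), which runs monotonically from $s=w$ at $x=0$ to $s=1$ at $x=r_{max}$ by the cutoff relation. A short computation gives $x = \frac{s-w}{P+s}$, $1-x = \frac{P+w}{P+s}$, $Px+w = s\frac{P+w}{P+s}$ and $dx = \frac{P+w}{(P+s)^2}\,ds$, and after simplification the density contracts to $f\,dx = \frac{s^{1/m-1}(P+s)}{m(P+w)}\,ds$. The two remaining integrals $\int_w^1 s^{1/m-1}\,ds$ and $\int_w^1 s^{1/m}\,ds$ are elementary powers, and using $w^{1/m} = \bar r$ they equal $m(1-\bar r)$ and $\frac{m}{m+1}(1-\bar r w)$ respectively. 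Setting the total equal to $1$ and clearing denominators collapses everything to the exact identity $w + P\bar r = \frac{1-\bar r w}{n}$. As a sanity check, for $n=2$ this reduces to $\bar r^2 + 2(1+P)\bar r - 1 = 0$, recovering $\bar r = \sqrt{(P+1)^2+1}-(P+1)$ from Theorem \ref{thm:nashcor}.

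Finally I would extract the asymptotics from this identity. Multiplying by $n$ and substituting $w = \bar r^{\,n-1}$ gives $nP\bar r = 1 - \bar r^{\,n} - n\bar r^{\,n-1}$. The a priori bound $\bar r \le \frac{1}{1+P} < 1$ makes both correction terms vanish as $n\to\infty$: indeed $\bar r^{\,n}\to 0$ trivially and $n\bar r^{\,n-1} \le n(1+P)^{-(n-1)}\to 0$, since exponential decay beats the linear factor. Hence $nP\bar r \to 1$, which is precisely the statement $\bar r \sim \frac{1}{nP}$.

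The main obstacle is the substitution step: choosing $s = \frac{Px+w}{1-x}$ and computing its Jacobian carefully so that the normalization integral collapses to the clean identity $w + P\bar r = \frac{1-\bar r w}{n}$ — that is where essentially all the work sits. Once this identity is in hand, both the limit of $r_{max}$ and the equivalent for $\bar r$ are immediate, and the only remaining subtlety is justifying that the two error terms are negligible, which rests entirely on the a priori bound $r_{max}\le\frac{1}{1+P}$ established at the outset.
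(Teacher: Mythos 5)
Your proof is correct, and it takes a genuinely different --- and notably more rigorous --- route than the paper's. The paper first verifies \emph{experimentally} that $r_{max}$ stays away from $0$ and $1$ and that $\bar r \to 0$, then manipulates the closed-form antiderivatives \eqref{eq:multi1}--\eqref{eq:multi2} asymptotically, and concludes via the step ``$\sqrt[n-1]{(P r_{max}+w)/(1-r_{max})} \to 1$ implies $\frac{r_{max}}{1-r_{max}} \to \frac{1}{P}$,'' which as stated is a non sequitur: $y_n^{1/(n-1)} \to 1$ for \emph{any} sequence bounded and bounded away from $0$. Your boundary condition repairs exactly this gap: evaluating $G(x) = \int_0^x f + \int_x^1 y f\,dy = \left(\frac{Px+w}{1-x}\right)^{1/(n-1)}$ at the top of the support (which extends from the support to the endpoint by continuity, since Theorem \ref{thm:jules} only gives the identity off a null set --- worth one sentence in a write-up) yields the \emph{exact} relation $P r_{max} + w = 1 - r_{max}$, i.e.\ $r_{max} = \frac{1-w}{1+P}$; and indeed subtracting \eqref{eq:multi2} from \eqref{eq:multi1} shows the paper's root quantity equals $1$ \emph{identically}, not merely in the limit, so your exact relation is consistent with (and hidden inside) the paper's own system. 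From there your a priori chain $\bar r \le r_{max} \le \frac{1}{1+P}$ gives $w \le (1+P)^{-(n-1)} \to 0$ with no experimental input, settling $\lim r_{max} = \frac{1}{1+P}$. I checked your substitution step: with $s = \frac{Px+w}{1-x}$ one gets $1-x = \frac{P+w}{P+s}$, $Px+w = \frac{s(P+w)}{P+s}$, $dx = \frac{P+w}{(P+s)^2}ds$, hence $f\,dx = \frac{s^{1/m-1}(P+s)}{m(P+w)}ds$ as you claim, the power integrals evaluate as stated, and the normalization collapses to $n(w + P\bar r) = 1 - \bar r w$ --- which is precisely \eqref{eq:multi2} combined with the cutoff relation, and which correctly reproduces $\bar r = k-(P+1)$ at $n=2$. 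The final estimate $nP\bar r = 1 - \bar r^{\,n} - n\bar r^{\,n-1} \to 1$ is airtight given the a priori bound. In short: the paper's approach buys speed by leaning on numerics and loose asymptotics, while yours buys exact finite-$n$ identities, rigorous error control, and a valid justification of the cutoff limit that the paper's argument, read literally, lacks.
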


\begin{proof}
We verify experimentally that $r_{max}$ is never close to 0 or 1 and that $\bar r \rightarrow 0$.
Equation \ref{eq:multi1} gives

\begin{align*}
    \frac{w + n P (1-{r_{max}}) + P {r_{max}}}{n (1-{r_{max}})(P+w)} \sqrt[n-1]{\frac{P{r_{max}}+w}{1-{r_{max}}}} &= 1 + \frac{w + nP}{n(P+w)} \bar r \\
    \sqrt[n-1]{\frac{P{r_{max}}+w}{1-{r_{max}}}} &\rightarrow 1
\end{align*}

Equation \ref{eq:multi2} gives

\begin{align*}
    \frac{w - nw(1-{r_{max}})+P{r_{max}}}{n(1-{r_{max}})(P+w)} \sqrt[n-1]{\frac{P{r_{max}}+w}{1-{r_{max}}}} &= \bar r \frac{w + nP}{n(P+w)} \\
    \frac{w}{P} + \frac{{r_{max}}}{n(1-{r_{max}})} \sim \frac{{r_{max}}}{n(1-{r_{max}})}  &\sim \bar r
\end{align*}

using $w = \bar r ^{n-1} = o(\bar r)$.

$\sqrt[n-1]{\frac{P{r_{max}}+w}{1-{r_{max}}}} \rightarrow 1$ implies $\frac{{r_{max}}}{1-{r_{max}}} \rightarrow \frac{1}P$ and ${r_{max}} \rightarrow \frac{1}{1+P}$.\\Finally, $\bar r \sim \frac{1}{nP}$.
\end{proof}

We illustrate this behavior in Figure \ref{fig:cutoff-asymptotic}.

\begin{figure}[htbp]
    \centering
    \includesvg[width=0.8\linewidth]{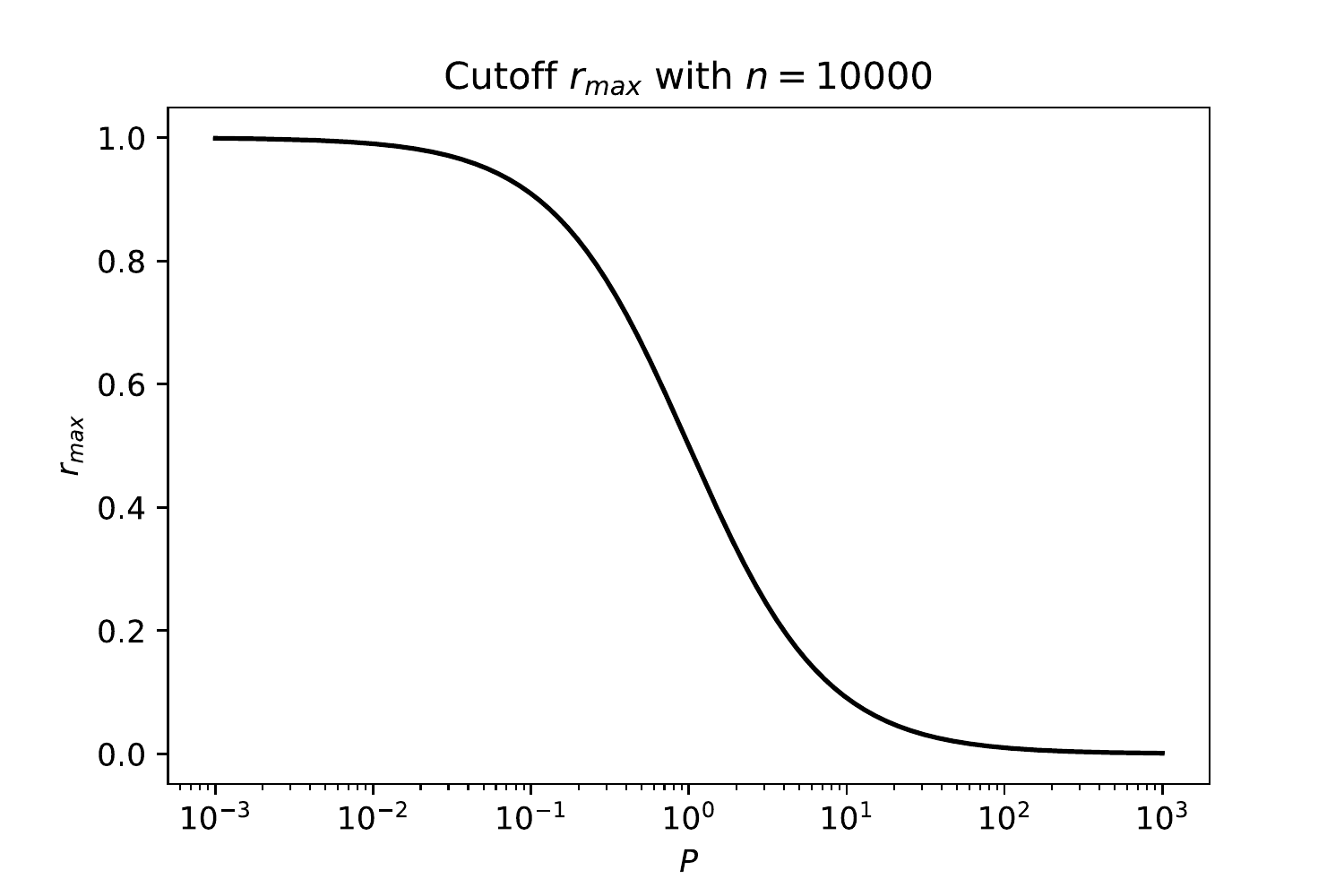}
    \caption{The plot of $r_{max}$ to $\log(P)$ is very similar to the function $\frac{1}{1 + \exp(\cdot)}$. This is because $r_{max} \sim \frac{1}{1+P}$.}
    \label{fig:cutoff-asymptotic}
\end{figure}

A common concept in game theory is the price of anarchy $PoA$ \citep{koutsoupias1999worst}. The price of anarchy is the ratio between the Pareto optimum and the Nash equilibrium. It is easy to generalize Theorem \ref{thm:pareto} for multiple players and show that the reward can be split almost perfectly to obtain a Pareto optimal utility $\frac{R}{n}$. The utility of our symmetric equilibrium is $R \bar r ^ {n-1} = R \bar r ^ {n-1} = R w$. Hence, $PoA = 1 / n w$. We will instead compute the efficiency $E = \frac{1}{PoA} = n w \in [0,1]$.

We observe that when $P = \frac{1}{n^e}$ with $e \ge 0$, the efficiency $E = n w$ of the Nash equilibrium goes to $0$ if $e \le 1$ and it goes to $1$ if $e > 1$. We plot the behavior of $E$ in Figure \ref{fig:efficiency}.
We interpret this as an indication that resources, here modeled by the ratio $\frac{1}{P} = \frac{R}{P}$ of rewards to penalties, need to scale faster than the number of players for them to adopt an efficient behavior. Scarcity of resources creates an inefficient Nash equilibrium.

\begin{figure}[htbp]
    \centering
    \includesvg[width=0.8\linewidth]{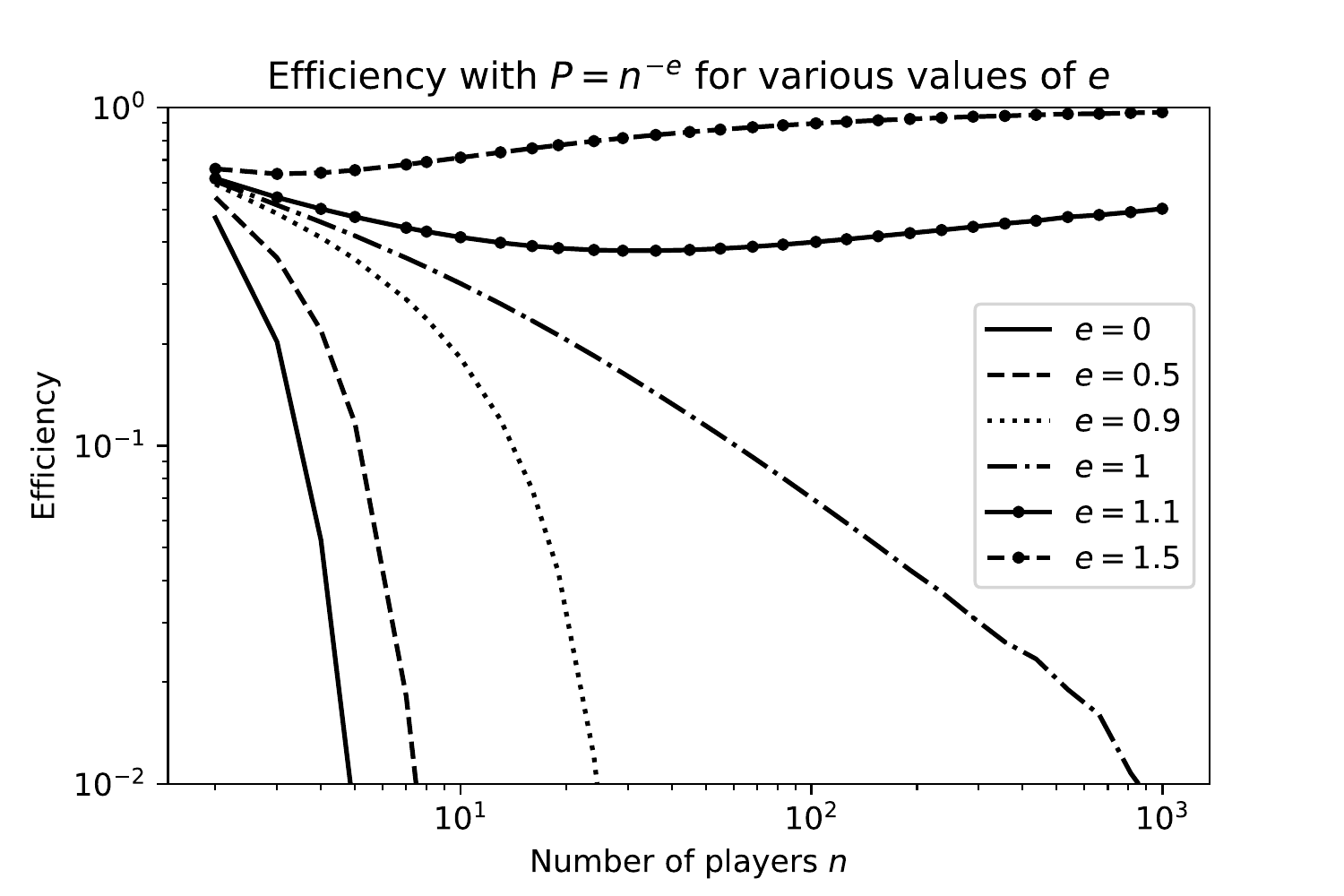}
    \caption{The efficiency clearly goes to $0$ even when $e = 1$. When $e>1$, it seems that $E \rightarrow 1$. Values for greater values of $n$ suffer of numerical precision issues as $\log w \rightarrow 0$.}
    \label{fig:efficiency}
\end{figure}

\section{Extensions of the Competition for Risk Game}

\subsection{Market frictions}

One limitation of our model is the assumption that the utility functions are discontinuous at a certain threshold. While this is appropriate for certain scenarios such as call for bids, it may not hold in other real-life situations that involve noisy evaluations or aggregate many individual choices. To address this limitation, we propose replacing the threshold $[r_1 > r_2]$ with a smooth choice model using the logistic function, $\sigma_\tau(r_1 - r_2)$, where $\sigma_\tau$ is the scaled sigmoid:

$$\sigma_\tau(x) := \frac{1}{1+\exp\left(-\frac{x}{\tau}\right)}$$

Recall that the failure events are $f_p < r_p$. For a game between two players, the outcome matrix can be represented as follows:
\begin{center}
\begin{tabular}{|c|c|c|}
\cline{2-3}
\multicolumn{1}{c|}{} & $f_1\ge r_1$ & $f_1<r_1$ \\
\hline
$f_2\ge r_2$ & $R~\sigma_\tau(r_1 - r_2), R~\sigma_\tau(r_2 - r_1) $ & $-P, R$ \\
\hline
$f_2<r_2$ & $R, -P$ &  $-P, -P$ \\
\hline
\end{tabular}
\end{center}

\begin{proposition}
\label{prop:frictions}
The expected utilities $u_p$ for the 2-player game with frictions are computed as follows:
\begin{align*}
    u_2(r_1, r_2) &= u_1(r_2, r_1) ~\text{(symmetry)}\\
    u_1(r_1, r_2) &= r_2 (1-r_1) R - r_1 P + (1-r_1)(1-r_2) \sigma_\tau(r_1 - r_2)R
\end{align*}
\end{proposition}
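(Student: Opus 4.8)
The plan is to mirror the proof of the earlier proposition for the frictionless game: compute the expected utility of Player 1 directly as a probability-weighted sum over the four cells of the outcome matrix, using that the failure events $f_1 < r_1$ and $f_2 < r_2$ are independent with probabilities $r_1$ and $r_2$ respectively. The four joint events and their probabilities are: neither fails, with probability $(1-r_1)(1-r_2)$; only Player 2 fails, with probability $r_2(1-r_1)$; only Player 1 fails, with probability $r_1(1-r_2)$; and both fail, with probability $r_1 r_2$. Reading Player 1's payoff off each cell gives
\begin{align*}
u_1(r_1,r_2) &= (1-r_1)(1-r_2)\,R\,\sigma_\tau(r_1-r_2) + r_2(1-r_1)\,R \\
&\quad {}+ r_1(1-r_2)(-P) + r_1 r_2 (-P).
\end{align*}

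Next I would collect the two penalty terms, observing that $r_1(1-r_2)(-P) + r_1 r_2(-P) = -r_1 P$, which immediately yields the claimed expression
$$u_1(r_1,r_2) = r_2(1-r_1)R - r_1 P + (1-r_1)(1-r_2)\,\sigma_\tau(r_1-r_2)\,R.$$

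For the symmetry claim $u_2(r_1,r_2) = u_1(r_2,r_1)$, I would argue from the structure of the game rather than re-expanding. The roles of the two players are interchangeable once we swap $r_1 \leftrightarrow r_2$, and the only place an asymmetric ordering enters is through $\sigma_\tau$ in the ``neither fails'' cell. The relevant identity is $\sigma_\tau(x) + \sigma_\tau(-x) = 1$, which guarantees that in that cell the total distributed reward is exactly $R$ and that Player 2's share is $R\,\sigma_\tau(r_2 - r_1)$, the mirror of Player 1's. Substituting $r_1 \leftrightarrow r_2$ in the formula for $u_1$ therefore reproduces $u_2$.

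Since this is a finite expectation, there is no genuine analytic obstacle; the only point that deserves care is confirming that the smooth choice model is a consistent replacement for the Iverson bracket, rather than an arbitrary perturbation. Concretely, I would verify that as $\tau \to 0$ the sigmoid $\sigma_\tau(r_1 - r_2)$ recovers $[r_1 > r_2]$ on $\{r_1 \ne r_2\}$, so that Proposition \ref{prop:frictions} degenerates to the frictionless proposition in the low-friction limit. That limit, together with $\sigma_\tau(x)+\sigma_\tau(-x)=1$, is what justifies interpreting $\sigma_\tau$ as a market friction.
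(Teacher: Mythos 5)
Your proof is correct and matches the paper's approach: the paper states Proposition \ref{prop:frictions} without a separate proof, treating it as the immediate analogue of the frictionless proposition, whose proof is exactly your four-cell expectation over independent failure events with the Iverson bracket replaced by $\sigma_\tau(r_1-r_2)$. Your closing remarks on $\sigma_\tau(x)+\sigma_\tau(-x)=1$ and the $\tau\to 0$ limit simply make explicit what the paper's outcome matrix and its remark about the Heaviside limit already assert.
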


As $\tau \rightarrow 0$, $\sigma_\tau$ approaches the Heaviside step function and market frictions disappear.

\subsection{Correlation between risks}

In the real world, risks are often correlated, which is not accounted for in our current model. To incorporate correlation between risks, we can introduce joint distributions for the failure events $f_p$, which occur according to latent variables.

In our model, we assume that $f_p$ follows a uniform distribution. To introduce correlation between $f_1$ and $f_2$, we use the well-known NORTA (NORmal To Anything) method \citep{norta1997}. This method allows us to create a joint distribution $(f_1, f_2)$ such that the marginals are uniform distributions and the Pearson correlation between $f_1$ and $f_2$ can be set to any arbitrary value.

Following NORTA, we define $f_p = \Phi(z_p)$, where $\Phi$ is the cumulative distribution function of the Normal distribution, and

$$\begin{pmatrix}z_1\\z_2\end{pmatrix} \sim \mathcal{N}\left(\mu, \Sigma\right)$$
with $\mu = \begin{pmatrix}0\\0\end{pmatrix}$ and 
$\Sigma = \begin{pmatrix}
    1 & \rho(z_1, z_2) \\
    \rho(z_1, z_2) & 1
\end{pmatrix}$.
Here, $\rho$ is the Pearson correlation coefficient between $z_1$ and $z_2$, which determines the correlation between $f_1$ and $f_2$.
As shown in \citet{norta1997}, specifying a correlation between $z_p$ or $f_p$ is equivalent to specifying $\rho(f_1, f_2)$. Specifically, we have $$\rho(f_1, f_2) = \frac{6}{\pi}\sin^{-1}\left(\frac{\rho(z_1, z_2)}{2}\right)$$ 
Hence, we use $\rho$ to denote $\rho(z_1, z_2)$ throughout the rest of the document.
This model is well-suited to real-world scenarios, such as financial portfolios, where $z_p$ can represent the returns on investments. In such cases, joint distributions of portfolios are typically modeled as multivariate normal distributions, and $r_p$ corresponds to the Value at Risk $v_p = \Phi^{-1}(r_p)$ through the bijective function $\Phi$, such that the failure event $z_p < v_p$ is equivalent to $f_p < r_p$.

\begin{proposition}
    The expected utilities $u_p$ for the 2-player game with frictions and correlated risks are computed as follows:

    \begin{align}
    \label{eq:utility}
    u_2(r_1, r_2) &= u_1(r_2, r_1) ~\text{(symmetry)}\\
    u_1(r_1, r_2) &= (r_2 - \tilde r) R - r_1 P + (1-r_1-r_2 + \tilde r) \sigma_\tau(r_1 - r_2)R
    \end{align}

    where $\tilde r := \Phi_\rho(\Phi^{-1}(r_1), \Phi^{-1}(r_2))$ is the probability of joint failure, with $$\Phi_\rho(v_1, v_2) = \frac{1}{2\pi\sqrt{1 - \rho^2}} \int_{-\infty}^{v_1}\int_{-\infty}^{v_2} \exp\left(-\frac{x^2 - 2\rho x y + y^2}{2(1-\rho)^2}\right) dy~dx$$ the cumulative distribution of the bivariate normal distribution with correlation $\rho$.
\end{proposition}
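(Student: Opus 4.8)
The plan is to compute $u_1$ as a weighted average of the four cell payoffs in the outcome matrix, where the weights are the probabilities of the four joint failure configurations under the correlated NORTA model. The only genuinely new ingredient compared to Proposition \ref{prop:frictions} is that the failure events $\{f_1 < r_1\}$ and $\{f_2 < r_2\}$ are no longer independent, so the joint probabilities must be recomputed; the per-cell payoffs are identical to the frictions-only case.

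First I would record the four payoffs to Player 1 read off from the outcome matrix: $R\,\sigma_\tau(r_1 - r_2)$ when neither player fails, $-P$ when Player 1 fails (regardless of whether Player 2 fails), and $R$ when only Player 2 fails. Next I would determine the four cell probabilities. Since NORTA preserves uniform marginals, $P(f_p < r_p) = r_p$, so the marginal failure probabilities are unchanged by the correlation. Writing $\tilde r := P(f_1 < r_1,\ f_2 < r_2)$ for the joint failure probability, inclusion--exclusion then gives the probability that only Player 1 fails as $r_1 - \tilde r$, that only Player 2 fails as $r_2 - \tilde r$, and that neither fails as $1 - r_1 - r_2 + \tilde r$.

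I would then identify $\tilde r$ explicitly. Because $f_p = \Phi(z_p)$ with $\Phi$ strictly increasing, the failure event $f_p < r_p$ is equivalent to $z_p < \Phi^{-1}(r_p)$; hence $\tilde r$ is exactly the joint cumulative distribution function of $(z_1, z_2)$ evaluated at $(\Phi^{-1}(r_1), \Phi^{-1}(r_2))$, which is $\Phi_\rho(\Phi^{-1}(r_1), \Phi^{-1}(r_2))$ by the given bivariate-normal integral. This is where the definition of $\tilde r$ in the statement comes from.

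Finally I would assemble the expectation
$$u_1 = (1 - r_1 - r_2 + \tilde r)\,R\,\sigma_\tau(r_1 - r_2) + (r_1 - \tilde r)(-P) + (r_2 - \tilde r)\,R + \tilde r\,(-P)$$
and simplify. The two penalty-bearing terms combine as $-P\bigl((r_1 - \tilde r) + \tilde r\bigr) = -r_1 P$, leaving the claimed formula $(r_2 - \tilde r)R - r_1 P + (1 - r_1 - r_2 + \tilde r)\sigma_\tau(r_1 - r_2)R$; the symmetry statement $u_2(r_1,r_2) = u_1(r_2,r_1)$ follows since swapping indices leaves $\tilde r$ invariant. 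There is no substantial obstacle here: the computation is routine bookkeeping, and the only points demanding care are the inclusion--exclusion step and the observation that the cancellation in the penalty terms makes the $-r_1 P$ contribution insensitive to the correlation structure.
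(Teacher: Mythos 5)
Your proof is correct: the inclusion--exclusion bookkeeping over the four failure configurations, the identification $\tilde r = \Phi_\rho(\Phi^{-1}(r_1), \Phi^{-1}(r_2))$ via the strictly increasing map $f_p = \Phi(z_p)$, and the cancellation of the penalty terms into $-r_1 P$ all check out, as does the symmetry observation that $\tilde r$ is invariant under swapping $r_1$ and $r_2$. The paper states this proposition without proof, and your argument is precisely the routine extension of its proof of the independent two-player utility (conditioning on which players fail, with the sigmoid payoff replacing the Iverson bracket in the no-failure cell), so it coincides with the paper's intended approach.
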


In the absence of noise, when $\rho = \pm 1$, it is also possible to calculate the Nash equilibrium analytically:

\begin{restatable}{theorem}{nashwithcor}
\label{thm:nashwithcor}
        For $\rho = 1$, the equilibrium is given by:

    $$p(x) = \frac{1+P}{1-x} \left[x < 1 - \exp\left(-\frac{1}{P+1}\right)\right]$$ 
    We have $$\bar r = 1 - (P+1)\left(1-\exp\left(-\frac{1}{P+1}\right)\right)$$

    For $\rho = -1$, the equilibrium is given by:

    $$p(x) = \frac{P}{(1-2x)^{3/2}} \left[x < \frac{1}{2} - \frac{P^2}{2(P+1)^2}\right]$$
    We have  $$\bar r = \frac{1}{2P+2}$$

\end{restatable}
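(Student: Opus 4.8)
The plan is to mirror the proof of Theorem~\ref{thm:nashcor}: exploit the fact that perfect correlation collapses the joint failure law to a one-dimensional object, write down the payoff $u(x)$ of a pure deviation $x$ against the symmetric equilibrium density, and impose the indifference condition of Theorem~\ref{thm:jules}, which forces $u(x)$ to be constant on the support and thereby pins down the density up to the normalisation that fixes the cutoff. The first step is to specialise the joint failure probability $\tilde r$ in \eqref{eq:utility} and take the noiseless limit $\tau\to 0$, in which $\sigma_\tau(r_1-r_2)$ becomes the indicator $[r_1>r_2]$ (ties form a null set under a continuous equilibrium). The NORTA construction gives $f_2=f_1$ when $\rho=1$, so $\tilde r=\min(r_1,r_2)$, and $f_2=1-f_1$ when $\rho=-1$, so $\tilde r=\max(0,\,r_1+r_2-1)$. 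Substituting into $u_1$ produces piecewise-linear pure payoffs: for $\rho=1$, $u_1(x,y)=1-x(1+P)$ when $x>y$ and $u_1(x,y)=y-x(1+P)$ when $x<y$; for $\rho=-1$ with $x+y<1$, $u_1(x,y)=1-x(1+P)$ when $x>y$ and $u_1(x,y)=y-xP$ when $x<y$.

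For $\rho=1$ I would integrate these against a symmetric density $p$ with CDF $F$ supported on $[0,r_{max}]$, obtaining
$$u(x)=F(x)+\int_x^{r_{max}} y\,p(y)\,dy - x(1+P).$$
Differentiating the indifference relation $u(x)=u^*$ yields $(1-x)p(x)=1+P$, i.e. $p(x)=\frac{1+P}{1-x}$, exactly the claimed form. Normalising $\int_0^{r_{max}}p=1$ gives $1-r_{max}=\exp\!\big(-\tfrac{1}{P+1}\big)$, and the mean $\bar r=\int_0^{r_{max}} x\,p(x)\,dx$ reduces to $1-(P+1)r_{max}$, which is the stated value.

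For $\rho=-1$ the same integration gives $u(x)=(1-x)F(x)+\int_x^{r_{max}} y\,p(y)\,dy - xP$, but differentiating the indifference condition now produces the coupled relation $(1-2x)p(x)=F(x)+P$ rather than an explicit formula for $p$. This is the step I expect to be the main obstacle: unlike the $\rho=1$ case, $p$ appears together with its own integral $F$, so a direct reading-off is impossible. I would set $G:=F+P$, note $G'=p$, and solve the separable equation $(1-2x)G'=G$ to obtain $G\propto (1-2x)^{-1/2}$, whence $p(x)=A\,(1-2x)^{-3/2}$; consistency of the relation (equivalently $F(0)=0$, giving $p(0)=P$) forces $A=P$. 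Normalising then fixes $r_{max}=\tfrac12-\tfrac{P^2}{2(P+1)^2}$, and the mean integral telescopes to $\bar r=\frac{1}{2(P+1)}$.

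Two points require care. I must verify a posteriori that the support lies below $\tfrac12$, so that every pair of equilibrium moves satisfies $x+y<1$ and we remain in the regime where $\tilde r=\max(0,x+y-1)=0$ on the support; the computed cutoff $r_{max}<\tfrac12$ confirms this, closing the argument. Finally, to upgrade each indifference-derived candidate to a genuine equilibrium I would check the first half of Theorem~\ref{thm:jules}, namely $u(x)\le u^*$ for $x$ outside the support, and deduce uniqueness exactly as in the proof of Theorem~\ref{thm:nashcor}.
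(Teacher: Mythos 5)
Your proposal is correct and follows essentially the same route as the paper's proof: specialising $\tilde r$ to $\min(r_1,r_2)$ for $\rho=1$ and $\max(0,r_1+r_2-1)$ for $\rho=-1$, imposing the indifference condition of Theorem~\ref{thm:jules}, differentiating, and solving the resulting ODEs reproduces exactly the paper's densities, cutoffs and means (your substitution $G=F+P$ with $(1-2x)G'=G$ is just a tidier integration of the coupled relation that the paper instead differentiates a second time). The only minor difference is at $\rho=-1$: the paper first rules out mass above $\tfrac12$ by a short contradiction (indifference there would force $aP=1-\bar r-a$ on a set of positive measure), whereas you check $r_{max}<\tfrac12$ a posteriori, which suffices for verifying the candidate but makes the a priori exclusion slightly more convenient for the uniqueness argument you delegate to Theorem~\ref{thm:nashcor}.
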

\begin{proof}
    See \ref{proof:nashwithcor}.    
\end{proof}
\section{Computing approximate Nash equilibrium}

\subsection{Approximations to games and equilibria}
\label{sec:approx}

In this section, we define some key concepts and metrics related to games and equilibria. 

For a given game with $n$ players, we use $u_i(\sigma)$ to denote the reward of player $i$ when all players follow the strategy $\sigma = (\sigma_1, \ldots, \sigma_n)$\footnote{This $\sigma$ is not to be confused with the scaled sigmoid $\sigma_\tau$ defined earlier.}. A strategy $\sigma$ is said to be a Nash equilibrium if it satisfies the following condition for all players $i$ and all alternative strategies $\sigma_i' \in \Sigma_i$: 
$$u_i(\sigma) \ge u_i(\sigma_i', \sigma_{-i})$$ 
where $\sigma_{-i}$ is the strategy of all players but $i$, and $\Sigma_i$ is the set of actions available to player $i$. 

A game is said to be continuous if the action space $\Sigma_i$ is compact and $u_i$ is continuous. In such games, it is possible to approximate the Nash equilibria using a sequence of games over a reduced finite support, which leads to Glicksberg’s theorem, without relying on Kakutani’s theorem \citep{myerson1997game}. 

In our CfR game, which has a few points of discontinuity, it is also possible to approximate the Nash equilibria using a similar method. However, we do not provide a proof of this here, as the introduction of frictions makes our game continuous anyway. 

To measure the closeness of a strategy $\sigma$ to a Nash equilibrium, we use the NashConv metric \citep{Lanctot2017-wc}: 
$$\textsc{NashConv}(\sigma) = \sum_{i=1}^n \max_{s_i \in \Sigma_i} u_i(s_i, \sigma_{-i}) - u_i(\sigma)$$
Note that this metric only considers pure strategies $s_i \in \Sigma_i$, due to the linearity of the payoff function for mixed strategies. 

The \textsc{NashConv} metric satisfies $\textsc{NashConv}(\sigma) \ge 0$, with equality holding only for a Nash equilibrium. This implies that $\textsc{NashConv}(\sigma)$ corresponds to the notion of $\varepsilon$-Nash equilibrium, where a $\varepsilon$-Nash equilibrium $\sigma$ has $\textsc{NashConv}(\sigma) = n \varepsilon$. 

For a finite action space, $\textsc{NashConv}$ is easy to compute since $\Sigma_i$ is finite. However, for a continuous action space, no such metric is known. Nonetheless, we can approximate $\textsc{NashConv}$ by taking the maximum over a finite sample of points from $\Sigma_i$. This sample can be chosen randomly, or if $\Sigma_i$ is an interval or a product of intervals of $\mathbb{R}$, we can use a grid. 

In our CfR game, the action space is $[0,1]$. Here, we use quasi-random numbers to measure the closeness to a Nash equilibrium, inspired by the literature on hyperparameter sampling \citep{Bousquet2017-kg} and the efficiency of quasi-Monte-Carlo methods \citep{sobol1990quasi}. Specifically, we define the QuasiNashConv metric as: 
$$\textsc{QuasiNashConv}(\sigma, m) = \sum_{i=1}^n \max_{s_i \in \textsc{Sobol}(m)} u_i(s_i, \sigma_{-i}) - u_i(\sigma)$$

where $\textsc{Sobol}(m)$ is a set of $m$ quasi random numbers drawn using Sobol’s method \citep{sobol1967distribution}.

\subsection{Correlated Equilibria}

A Nash equilibrium is a set of strategies where no player can improve their payoff by unilaterally changing their strategy, assuming that all other players' strategies remain unchanged. However, in some games, players may benefit from coordinating their actions in ways not captured by traditional Nash equilibrium. This is where the concept of correlated equilibrium comes in.

A correlated Nash equilibrium is a set of correlated strategies where no player can improve their expected payoff by unilaterally changing their strategy, given that they observe the correlation signal. This correlation signal is not necessarily a message or communication between the players, but rather a shared random variable that affects each player's strategy consistently.

\begin{definition}
\label{def:cor}
A correlated Nash equilibrium is a joint distribution $\sigma$ over all moves $\Sigma_1 \times \Sigma_2 \times \ldots \times \Sigma_n$ such that for any player $i$ and any strategy modification $\phi: \Sigma_i \rightarrow \Sigma_i$, $$u_i(\sigma_i, \sigma_{-i}) \ge u_i(\phi(\sigma_i), \sigma_{-i})$$
\end{definition}

Thus, a Nash equilibrium can be viewed as a correlated Nash equilibrium that can be decomposed into independent strategies for each player. It is evident that any Nash equilibrium is a correlated Nash equilibrium.

Correlated equilibria are more suitable for the real world because they allow for a broader range of possible outcomes that can arise through coordination among the players, without necessarily requiring communication or binding agreements between them.

In many real-world scenarios, it is challenging or impossible for players to communicate and make binding agreements, or they may not have complete information about the strategies of the other players. Correlated equilibria provide a way for players to achieve coordination and cooperation without requiring such communication or information, by relying on shared random variables that affect each player's strategies consistently.

Finally, correlated equilibria can also capture situations where players have some degree of trust or social norms that encourage them to coordinate their actions in a specific way. For instance, in a repeated game where players interact with each other over a long period, they may develop a sense of reciprocity or reputation that encourages them to follow a certain coordinated strategy.

\subsection{Finding Correlated Equilibria with Linear Solvers}
\label{sec:linear}
Correlated equilibria are of interest because they can be computed more easily for a finite action set.

A joint strategy can be represented by a mapping of probabilities: $$\Pr\nolimits_\sigma(s_1, s_2, \ldots, s_n) := \Pr[\sigma = (s_1, s_2, \ldots, s_n)]$$ for all joint actions $(s_1, s_2, \ldots, s_n)$. Therefore, the equation from Definition \ref{def:cor} is linear in these probabilities. An additional equation is that probabilities must sum to 1, and all probabilities are constrained to be positive. For two players, the equations are:

$$\forall (s_1, s'_1), \sum_{s_2} \Pr\nolimits_\sigma(s_1, s_2) u_1(s_1, s_2) \ge \sum_{s_2} \Pr\nolimits_\sigma(s_1, s_2) u_1(s'_1, s_2)$$

$$\forall (s_2, s'_2), \sum_{s_1} \Pr\nolimits_\sigma(s_1, s_2) u_1(s_1, s_2) \ge \sum_{s_1} \Pr\nolimits_\sigma(s_1, s_2) u_1(s_1, s'_2)$$

$$\forall (s_1, s_2), \Pr\nolimits_\sigma(s_1, s_2) \ge 0$$

$$\sum_{s_1,s_2} \Pr\nolimits_\sigma(s_1, s_2) = 1$$

The set of correlated equilibria is thus a convex polytope $P$. It is possible to find the boundary in any direction using a linear programming solver.

It is also possible to check that the correlated equilibrium is unique and is a Nash equilibrium by trying to maximize and minimize each variable over the polytope. If the maximum and minimum are equal for each variable, then the polytope only contains one point. Another method described in \citet{appa2002uniqueness} checks the uniqueness of a solution to a linear program by solving a new linear program. However, that method requires a reformulation of the linear program as $\max cx \text{ s.t. } Ax=b, x \ge 0$, which is cumbersome in our case. We propose a simple randomized method (algorithm \ref{alg:diameters}) that can produce confidence intervals for any confidence level (or p-value).

\begin{restatable}{theorem}{pvalue}
\label{thm:pvalue}
Given a polytope $P$ defined by constraints $c_1, \ldots, c_m$
 $$
        \Pr\left[\textsc{SumDiamSquared}(K, c_1, \ldots, c_m) < \varepsilon\right] \le F_{\chi^2}\left(\frac{\varepsilon}{diam(P)}, K\right)
   $$

   with $F_{\chi^2}(\cdot, K)$ the cumulative distribution function of the $\chi^2$ distribution with $K$ degrees of freedom.
\end{restatable}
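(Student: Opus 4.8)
The plan is to read $\textsc{SumDiamSquared}(K, c_1,\dots,c_m)$ as the statistic $\sum_{i=1}^{K} w_i^2$, where $w_i$ is the width of the polytope $P=\{x : c_1(x),\dots,c_m(x)\}$ along the $i$-th sampled direction $u_i$, that is $w_i = \max_{x\in P}\langle u_i,x\rangle - \min_{x\in P}\langle u_i,x\rangle$, each value obtained from a pair of linear programs. The directions $u_1,\dots,u_K$ are drawn as independent standard Gaussian vectors. The whole argument rests on comparing these $K$ widths to a single fixed segment that realizes the diameter of $P$, and then recognizing the sum of squared Gaussian projections of that segment as a $\chi^2_K$ random variable.

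First I would fix a pair $a,b\in P$ with $\|a-b\| = diam(P)$; such a pair exists because $P$, being a finite intersection of half-spaces contained in the probability simplex, is compact, so $(x,y)\mapsto\|x-y\|$ attains its maximum. Second, I would establish the elementary projection inequality: for every direction $u$ the width satisfies $w(u)\ge |\langle u,a\rangle-\langle u,b\rangle| = |\langle u, a-b\rangle|$, because $a$ and $b$ both lie in $P$, so their projections lie between the minimum and maximum of $\langle u,\cdot\rangle$ over $P$. Third, writing $a-b = diam(P)\,v$ with $v$ a unit vector, I have $\langle u_i, a-b\rangle = diam(P)\,\langle u_i, v\rangle$, and since each $u_i$ is standard Gaussian and $v$ is a unit vector, $g_i := \langle u_i, v\rangle$ is standard normal, with $g_1,\dots,g_K$ independent; hence $Y:=\sum_{i=1}^K g_i^2 \sim \chi^2_K$.

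Combining these on the same probability space (the same sampled $u_i$) gives the pointwise lower bound $\textsc{SumDiamSquared} = \sum_i w_i^2 \ge diam(P)^2\sum_i g_i^2 = diam(P)^2\,Y$. Consequently the event $\{\textsc{SumDiamSquared}<\varepsilon\}$ is contained in $\{diam(P)^2\,Y < \varepsilon\} = \{Y < \varepsilon/diam(P)^2\}$, so by monotonicity and the definition of the $\chi^2$ distribution function, $\Pr[\textsc{SumDiamSquared}<\varepsilon] \le F_{\chi^2}(\varepsilon/diam(P)^2, K)$, which is the asserted bound with the diameter entering through its square (as a sum of squared widths requires dimensionally).

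The step I expect to be most delicate is the pair of facts powering the $\chi^2$ reduction: orienting the projection lemma as a \emph{lower} bound so that it matches the lower tail of the $\chi^2$, and verifying that Gaussian sampling makes $\langle u_i,v\rangle$ \emph{exactly} standard normal and independent across $i$. This is precisely what forces the directions in algorithm \ref{alg:diameters} to be Gaussian rather than uniform on the sphere, since a uniform-on-sphere choice would yield a Beta-type projection and only a rescaled or approximate $\chi^2$. A secondary point to treat with care is that the inequality is intentionally one-directional: it guarantees only that a large true diameter makes small observed values improbable, which is exactly what is needed to invert the statement into a confidence lower bound on $diam(P)$ (and thus into a uniqueness test, $P$ being a single point iff $diam(P)=0$); the reverse inequality is neither true nor required.
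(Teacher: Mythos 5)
Your proof is correct and follows essentially the same route as the paper's: fix a diameter-realizing chord, lower-bound every width by the Gaussian projection of that chord ($w(u)\ge diam(P)\,|\langle u,v\rangle|$), observe that $\langle u_i,v\rangle$ is exactly standard normal and independent across $i$, and conclude by event inclusion against the lower tail of a $\chi^2_K$. The paper phrases the chord step via the segment $D$ joining the argmin and argmax in the extremal direction, which is equivalent to your pair $a,b$ with $\|a-b\|=diam(P)$.

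One point deserves emphasis rather than the gloss you gave it: your derivation yields
$$\Pr\left[\textsc{SumDiamSquared} < \varepsilon\right] \le F_{\chi^2}\!\left(\frac{\varepsilon}{diam(P)^2}, K\right),$$
with the diameter \emph{squared}, and this is the dimensionally correct form, since $\sum_i w_i^2 \ge diam(P)^2 \sum_i g_i^2$. The theorem as stated, and the last two displays of the paper's own proof, have $\varepsilon/d$ instead of $\varepsilon/d^2$ --- the square is dropped when passing from the single-width inequality $\Pr[|v\cdot a| < \varepsilon/d]$ to the sum of squares. This is not a harmless restatement: for the correlated-equilibrium polytopes at hand one can have $diam(P)<1$, in which case $F_{\chi^2}(\varepsilon/d, K) < F_{\chi^2}(\varepsilon/d^2, K)$ and the literal stated bound does \emph{not} follow from the (correct) argument you and the paper both give. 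So your closing claim that you obtained ``the asserted bound with the diameter entering through its square'' papers over a real mismatch; the honest conclusion is that you proved the corrected statement and that the theorem (and the downstream \textsc{MaxDiameter} computation, which divides by the quantile once rather than taking a square root) should be adjusted accordingly. Your remarks on why Gaussian directions are needed (exact normality of the projection) and on the one-sidedness of the inequality are accurate and match the intended use as a confidence lower bound on $diam(P)$.
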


\begin{algorithm}
\label{alg:diameters}
\caption{Confidence interval on $diam(P)$}
\begin{algorithmic}
\Require{Iterations $K$, constraints $c_1, \ldots, c_m$ defining a polytope $P$ in $\mathbb{R}^n$}
\Function{SumDiamSquared}{$K, c_1, \ldots, c_m$}
\For{$i\gets 1,\ldots,K$}
    \State Sample $v_j \sim \mathcal{N}(0, 1)$ for $j=1,\ldots,n$
    \State $a_i \gets \textsc{LinProg}(v, c)$ \Comment{$\min \limits_{x\in P}v \cdot x$}
    \State $b_i \gets \textsc{LinProg}(-v, c)$ \Comment{$\max \limits_{x\in P}v \cdot x$}
    \State $d_i \gets b_i - a_i$
\EndFor
\State \Return $\sum_i d_i^2$
\EndFunction
\Require{p-value $p$}

\Function{MaxDiameter}{$p, K, c_1, \ldots, c_m$}
\State $\varepsilon \gets$ \Call{SumDiamSquared}{$K, c_1, \ldots, c_m$}
\State $q \gets$ \Call{Chi2.ppf}{$p, K$}
\State $d \gets \varepsilon / q$
\State \Return $d$
\EndFunction
\end{algorithmic}
\end{algorithm}

We used the HiGHS solver \citep{huangfu2018parallelizing} to solve the linear optimization subproblems (calls to \textsc{LinProg}). In numerical experiments, we use $K=5$, confidence $p=0.95$, and report $$d_{max} :=  \textsc{MaxDiameter}(p, K, c_1, \ldots, c_m) = \frac{\textsc{SumDiamSquared}(K, c_1, \ldots, c_m)} {Q_{\chi^2}\left(1-p, K\right)}$$ where $Q_{\chi^2}(\cdot, K)$ is the quantile function of the $\chi^2$ distribution with $K$ degrees of freedom. When the polytope describe probability distributions, we have the bound $d_{max} \leq 2$.

Finally, we make the following trivial remark:

\begin{proposition}
    A correlated equilibrium $\sigma$ is a Nash equilibrium iff the matrix $(\Pr\nolimits_\sigma(i, j))_{i,j}$ has rank 1.
\end{proposition}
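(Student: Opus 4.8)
The plan is to reduce the proposition to the standard fact that a bivariate probability distribution has independent marginals exactly when its probability matrix has rank one, and then to invoke the observation made earlier in this section that a Nash equilibrium is precisely a correlated equilibrium whose joint distribution factors into independent per-player mixed strategies (Definition \ref{def:cor} specialized to the product case). In this reading the proposition is just the translation of ``factorizes into independent marginals'' into the language of matrix rank.

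For the forward implication, I would suppose $\sigma$ is a Nash equilibrium. Then its joint distribution decomposes as $\Pr\nolimits_\sigma(i,j) = p_1(i)\,p_2(j)$ for marginal strategies $p_1$ and $p_2$, so the matrix $M = (\Pr\nolimits_\sigma(i,j))_{i,j}$ equals the outer product $p_1 p_2^{\top}$. Since $p_1$ and $p_2$ are probability vectors they are nonzero, hence $M$ has rank exactly one.

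For the converse I would begin from the assumption that $M$ has rank one and recover the factorization from the marginals directly rather than from an abstract decomposition. Setting $p_1(i) := \sum_j M_{ij}$ and $p_2(j) := \sum_i M_{ij}$, I would fix a nonzero row $k$ (one exists since $M$ sums to $1$), write each row $i$ as $\lambda_i$ times row $k$, and pin down the scalars $\lambda_i$ using the row and column sums together with $\sum_i p_1(i) = 1$. A short computation then gives $M_{ij} = p_1(i)\,p_2(j)$, i.e. the two players act independently. Finally, substituting this product form into the correlated-equilibrium inequalities of Definition \ref{def:cor} and dividing by $p_1(s_1) > 0$ on the support collapses them to the statement that every action in $\mathrm{supp}(p_1)$ is a best response to $p_2$ (and symmetrically for player $2$), which is exactly the Nash condition.

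I expect the only delicate point to be the decomposition step of the converse: a generic rank-one factorization $M = u v^{\top}$ need not have nonnegative factors, so identifying $u$ and $v$ with genuine mixed strategies could introduce spurious sign ambiguities. Working with the marginals $p_1, p_2$ sidesteps this, as their nonnegativity is automatic, being sums of probabilities. Everything else is routine, in keeping with the paper's billing of this as a trivial remark.
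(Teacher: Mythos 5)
Your proof is correct; the paper in fact states this proposition without any proof (introducing it as a ``trivial remark''), and your argument is precisely the standard one it leaves implicit: a joint distribution factors into independent marginals iff its probability matrix has rank one, and a product-form correlated equilibrium satisfies the Nash best-response condition. Your care on the converse is well placed and correctly resolved --- defining the factors as the marginals $p_1(i)=\sum_j M_{ij}$ and $p_2(j)=\sum_i M_{ij}$ sidesteps the sign ambiguity of a generic rank-one factorization $M=uv^{\top}$ (indeed, since $\bigl(\sum_i u_i\bigr)\bigl(\sum_j v_j\bigr)=\sum_{i,j}M_{ij}=1$, one gets $p_1(i)\,p_2(j)=u_i v_j = M_{ij}$ directly, which slightly shortens your row-scaling computation), after which dividing the correlated-equilibrium inequalities by $p_1(s_1)>0$ on the support and invoking linearity of expected utility yields exactly the Nash property.
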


This gives us another numerical method to check that a correlated equilibrium is a Nash equilibrium: compute the second highest eigenvalue and check that it is $0$. In numerical experiments, we define $\lambda_1$ and $\lambda_2$ as the highest and second highest eigenvalues and report the value $$\lambda := \frac{\lambda_2}{\lambda_1}$$

\subsection{Related works solving continuous games}

Our exploration of regret-minimization algorithms in the context of continuous games has led us to a wide range of approaches. From these, a few distinct groups emerge, each characterized by their unique methods, assumptions, and requirements.

One group includes the work of \citet{perkins2014stochastic}, \citet{ganzfried2021algorithm}, and \citet{kroupa2023multiple}. \citet{perkins2014stochastic} extended stochastic fictitious play to the continuous action space framework, showing convergence to an equilibrium point in two-player zero-sum games. However, this method assumes specific linear or quadratic utility functions, which limits its applicability to a narrow set of games. Similarly, \citet{ganzfried2021algorithm} proposed a novel algorithm for approximating Nash equilibria in continuous games, yet the scalability of their method is a concern due to the storage of all previous moves and the use of mixed integer linear programs for best response computation. Lastly, \citet{kroupa2023multiple} presented an iterative strategy generation technique for finding mixed strategy equilibria in multiplayer general-sum continuous games, but this method requires an oracle for best response computation, a requirement that may not be met in all practical scenarios.

Another group consists of \citet{raghunathan2019game} and \citet{dou2019finding}. \citet{raghunathan2019game} introduced the Gradient-based Nikaido-Isoda function, a merit function providing error bounds to a stationary Nash point. They showed that gradient descent converges sublinearly to a first-order stationary point of this function, making it a potential method for steady convergence towards equilibrium. Extending this work, \citet{dou2019finding} offered a deep learning-based approach for approximating Nash equilibria in continuous games, allowing the finding of mixed equilibria. They utilized the pushforward measure technique to represent mixed strategies in continuous spaces and applied gradient descent for convergence to a stationary Nash equilibrium. However, we found this method to be slow and unable to converge in our game.

The third group encompasses the works of \citet{bichler2021learning} and \citet{martin2022finding}. \citet{bichler2021learning} proposed using artificial neural networks to learn equilibria in symmetric auction games through gradient dynamics in self-play. This work was later extended by \citet{martin2022finding}, who used the same pushforward trick as \citet{dou2019finding}, adding random noise as input. They applied zeroth-order optimization techniques to compute approximate Nash equilibria in continuous-action games without access to gradients. However, we found that the success of this method is very sensitive to hyperparameters, and its dynamics can be unstable.

\subsection{Applying regret-minimization algorithms to continuous games}

Our study delves into the application of regret-minimization algorithms to continuous games, a distinct approach given that it does not demand any differentiability assumptions on the reward function. This approach pivots around algorithms known for identifying correlated equilibria, such as regret matching \citep{Hart1997-nk,hart2001reinforcement, hart2001reinforcementcorrection}, Counterfactual Regret Minimization (CFR) \citep{neller2013introduction}, and stochastic fictitious play \citep{fudenberg1993learning}.

Regret matching is a notable algorithm in the realm of game theory for finding correlated equilibria. It operates by having each player select a distribution of moves at each step of the algorithm, wherein this selection is geared towards maximizing their expected utility, given the past moves of their opponent. When run for an ample number of iterations, the distributions arrive at a convergence, forming a correlated equilibrium.

Building on this, we bring into play the CFR algorithm. This algorithm accumulates the expected regrets of each action played by a player at every information set. It then updates the regrets based on the counterfactual outcomes of the game - that is, what the result would have been if a different action had been taken. As these regrets are iteratively updated and actions chosen based on these revised regrets, CFR converges to a Nash equilibrium in extensive form games. In our normal form game, we incorporate CFR as a deterministic variant of Regret Matching.

Further enriching our methodology is stochastic fictitious play, another variant of regret matching. It involves the computation of probability distributions using the softmax function. We also test this variant along with CFR in our experiments, thereby evaluating their performance in approximating the Nash equilibrium of our game.

To approximate the CfR game, we resort to a finite grid of actions, which are evenly distributed within the interval $[0, 1]$ for both players. In doing so, we differentiate between two settings, governed by a boolean variable, \texttt{shift}. When \texttt{shift = false}, both players are offered the same set of actions. However, when \texttt{shift = true}, the two players are presented with non-intersecting sets of actions, with each action in the interval $[0, 1]$ alternately attributed to each player.

Despite the fact that the action set of the resulting approximate Nash equilibrium is necessarily a subset of a predefined finite action set, our method brings a fresh perspective to the field, with its emphasis on regret minimization for both players and the removal of any differentiability assumption on the reward function. Through this research, we hope to enrich our understanding of the behavior of continuous games and lay a sturdy foundation for future exploration and improvements in this domain.

\section{Experimental results}

We employ a Numba \citep{lam2015numba}  implementation of the numerical method proposed by \citet{genz2004numerical} to compute the bivariate normal probability in the utility function, which is the computational bottleneck in equation \ref{eq:utility}.

Our experiments show that the Nash equilibrium in finite approximations of the CfR game can be obtained by computing a correlated Nash equilibrium. Moreover, we prove that either the correlated equilibrium is unique or the correlated equilibrium maximizing the total reward is a Nash equilibrium.

We present the results of our experiments with different values of $P$, $\tau$, and $\rho$, as well as different numbers of actions, in Figure \ref{fig:linear-shift}, where we plot the maximum distance $d_{max}$ and the parameter $\lambda$ computed according to Section \ref{sec:linear} with \texttt{shift = true}. We also present the results with \texttt{shift = false} in Section \ref{sec:linear-noshift}.

\begin{figure}[htbp]
  \centering
  \begin{minipage}[t]{0.48\textwidth}
    \centering
    \includesvg[width=1.1\linewidth]{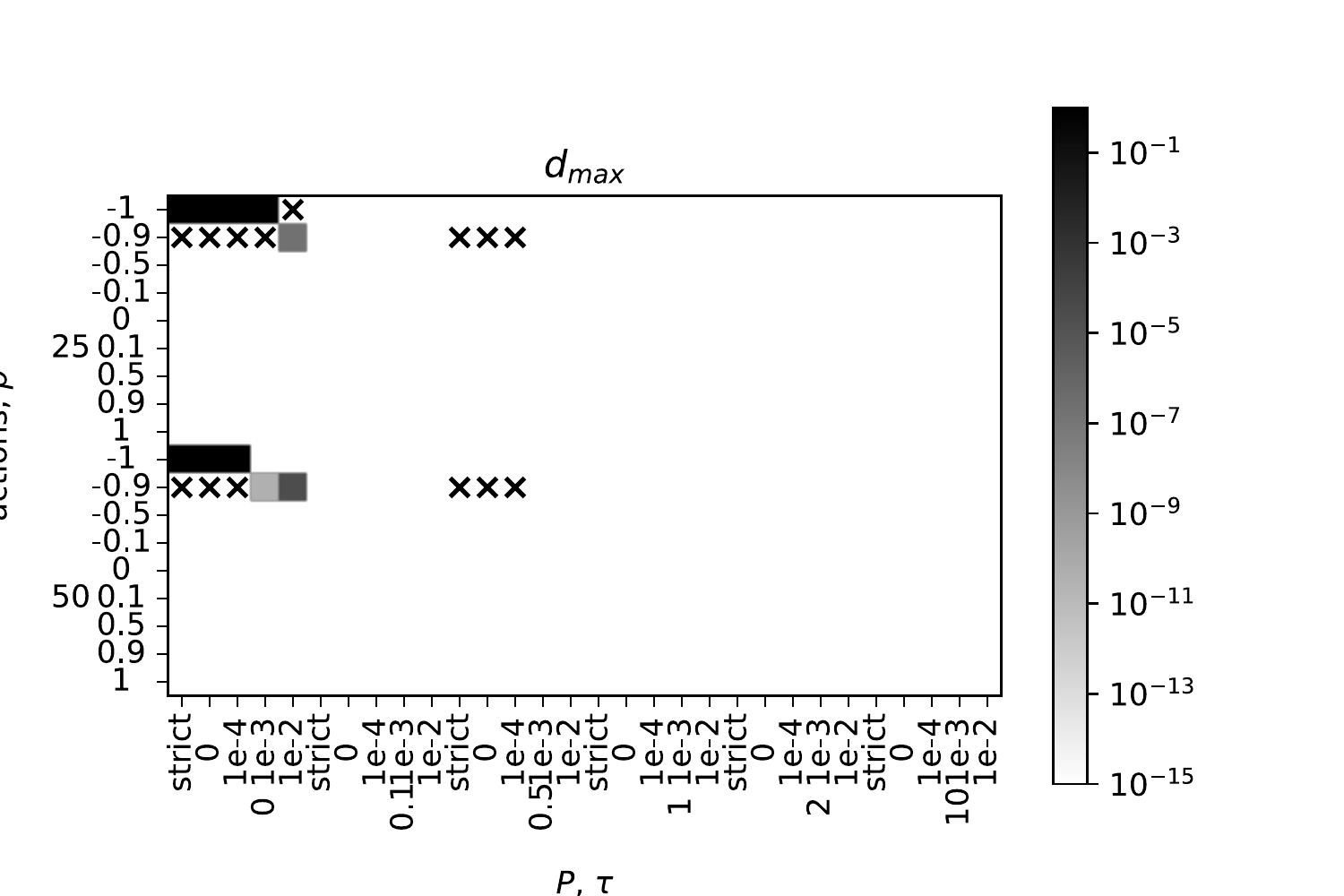}
  \end{minipage}
  \hfill
  \begin{minipage}[t]{0.48\textwidth}
    \centering
    \includesvg[width=1.1\linewidth]{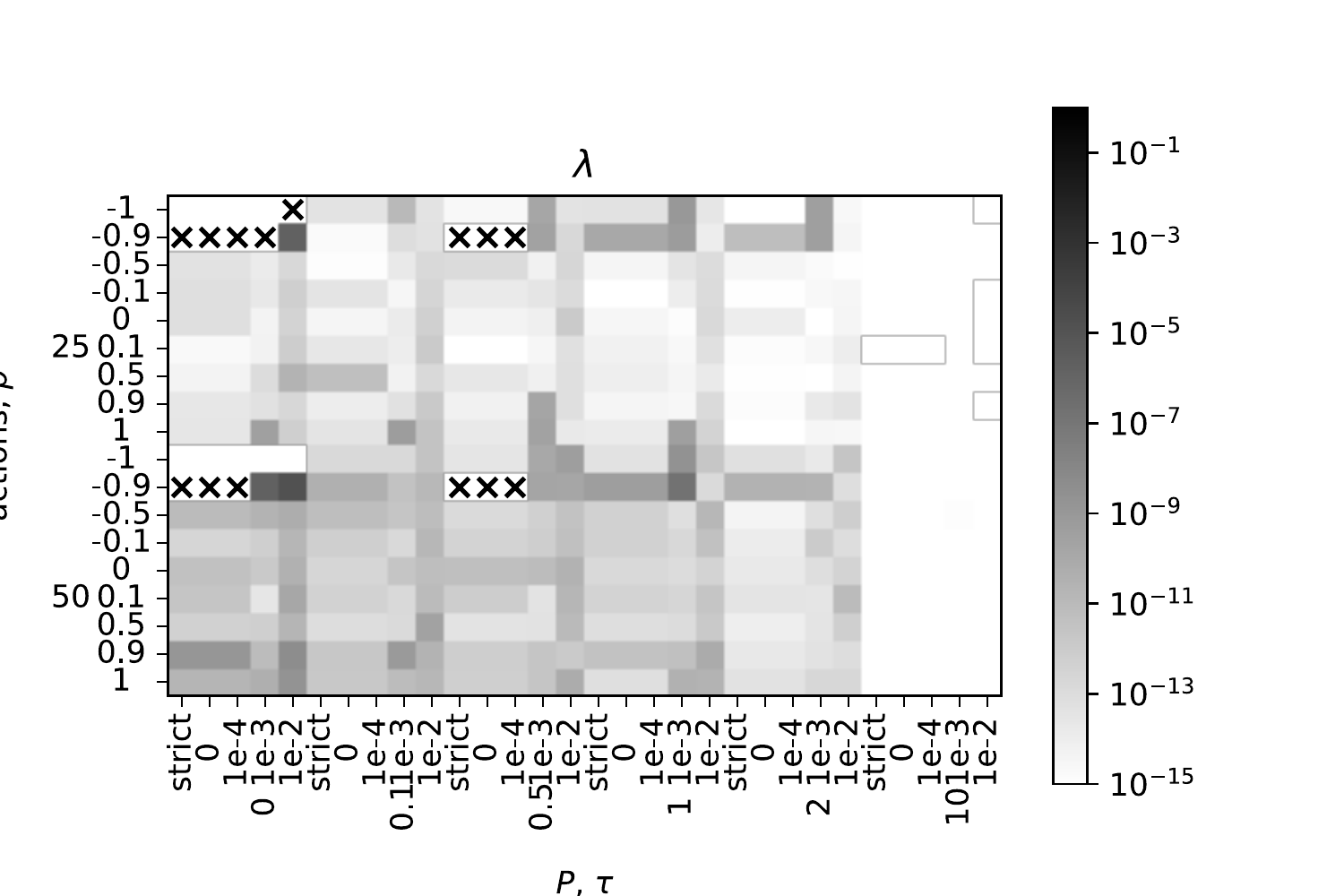}
  \end{minipage}
  \caption{Our experiments with different values of $P$, $\tau$, and $\rho$, as well as different numbers of actions, for the setting with \texttt{shift = true} show that, for almost all values of the parameters, $d_{max}$ is numerically zero. We cross out cases where the solver failed to find a solution, most probably due to rounding errors which caused the small solution set to disappear. In the few cases where $d_{max}$ was not observed to be zero, $\lambda$ is clearly zero, indicating that the best correlated equilibrium is a Nash equilibrium.}
  \label{fig:linear-shift}  
\end{figure}

We implement and evaluate all algorithms to solve a discrete version of our game where the action space is reduced to a grid, with and without a shift. We then report the value of $\textsc{NashConv}$ $\textsc{QuasiNashConv}$ as defined in Section \ref{sec:approx}. We display the solution found by Regret Matching in Figure \ref{fig:rm-sol} and check that it matches the actual solution from Figure \ref{fig:nash}. We observe on Figure \ref{fig:rm-different} that vanilla Regret Matching outperforms all methods. In general, Regret Matching performs much better than CFR, even when using less actions, using softmax is detrimental, and shifting the action space does not seem to impact performance much. CFR runs faster as it does not involve any random sampling. Therefore, we use vanilla Regret Matching with shifting in the rest of the experiments for the sake of simplicity. Figure \ref{fig:rm-size} shows that the number of sampled actions is the main factor that drives the quality of solutions.

\begin{figure}[htbp]
  \centering
  \includesvg[width=0.8\linewidth]{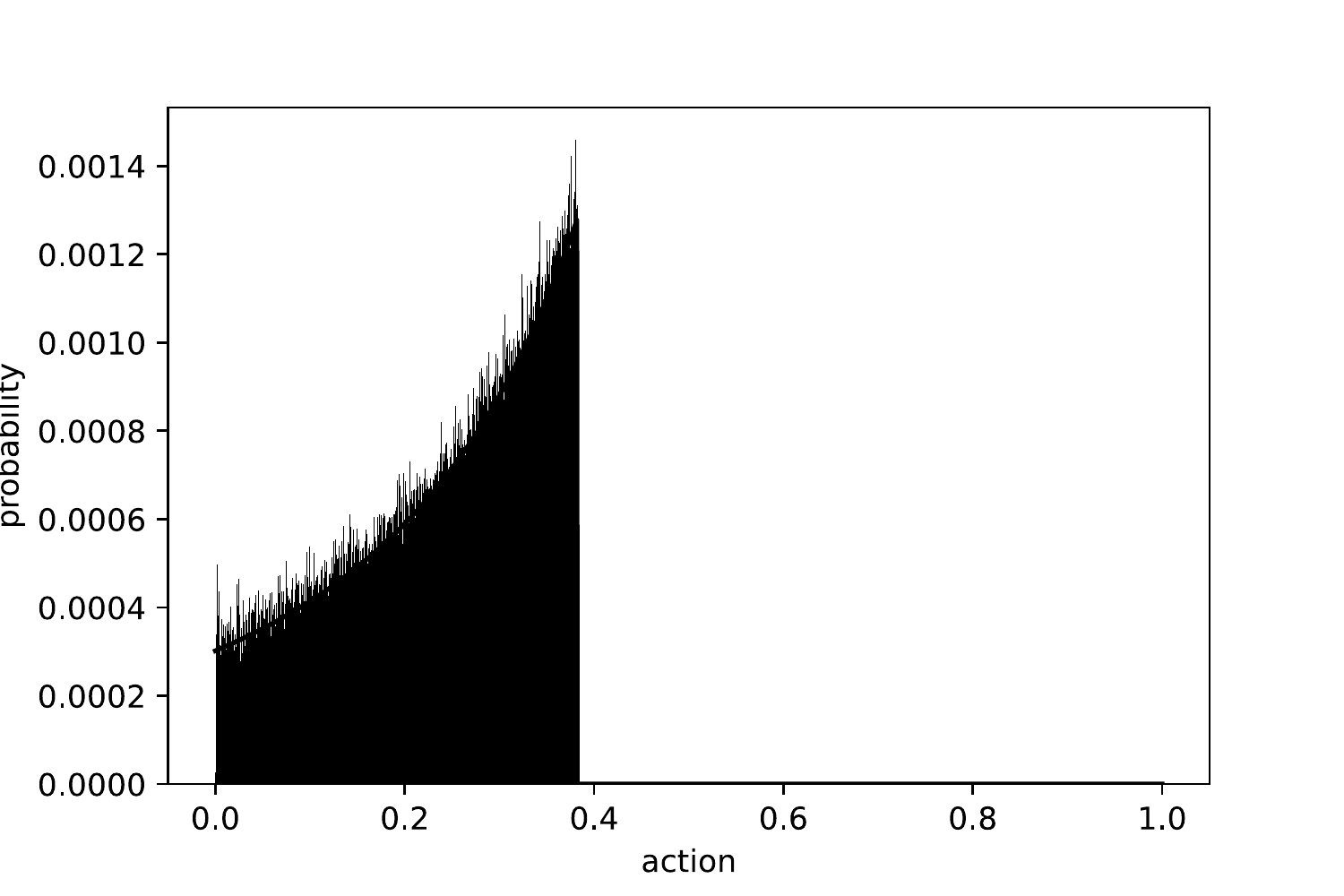}
  \caption{Approximate and analytical solutions for $P=1, \rho=0, \tau=0$. Actions were discretized to a grid of $2^{12}$ with \texttt{shift~=~true}. The approximate solution was computed by $10^4$ iterations of regret matching.}
  \label{fig:rm-sol}  
\end{figure}

For our experiments on equilibria in various settings, we use 500 actions and 2000 iterations (hence $10^6$ steps) of vanilla Regret Matching with shifting. Our efficient implementation computes the equilibrium in just a few seconds, allowing us to explore the effects of penalties, market frictions, and correlation on risk-taking behavior and performance, as well as to evaluate the effectiveness of various interventions and policies in a competitive environment.

\begin{figure}[htbp]
  \centering
  \begin{minipage}[t]{0.48\textwidth}
    \centering
    \includesvg[width=1.1\linewidth]{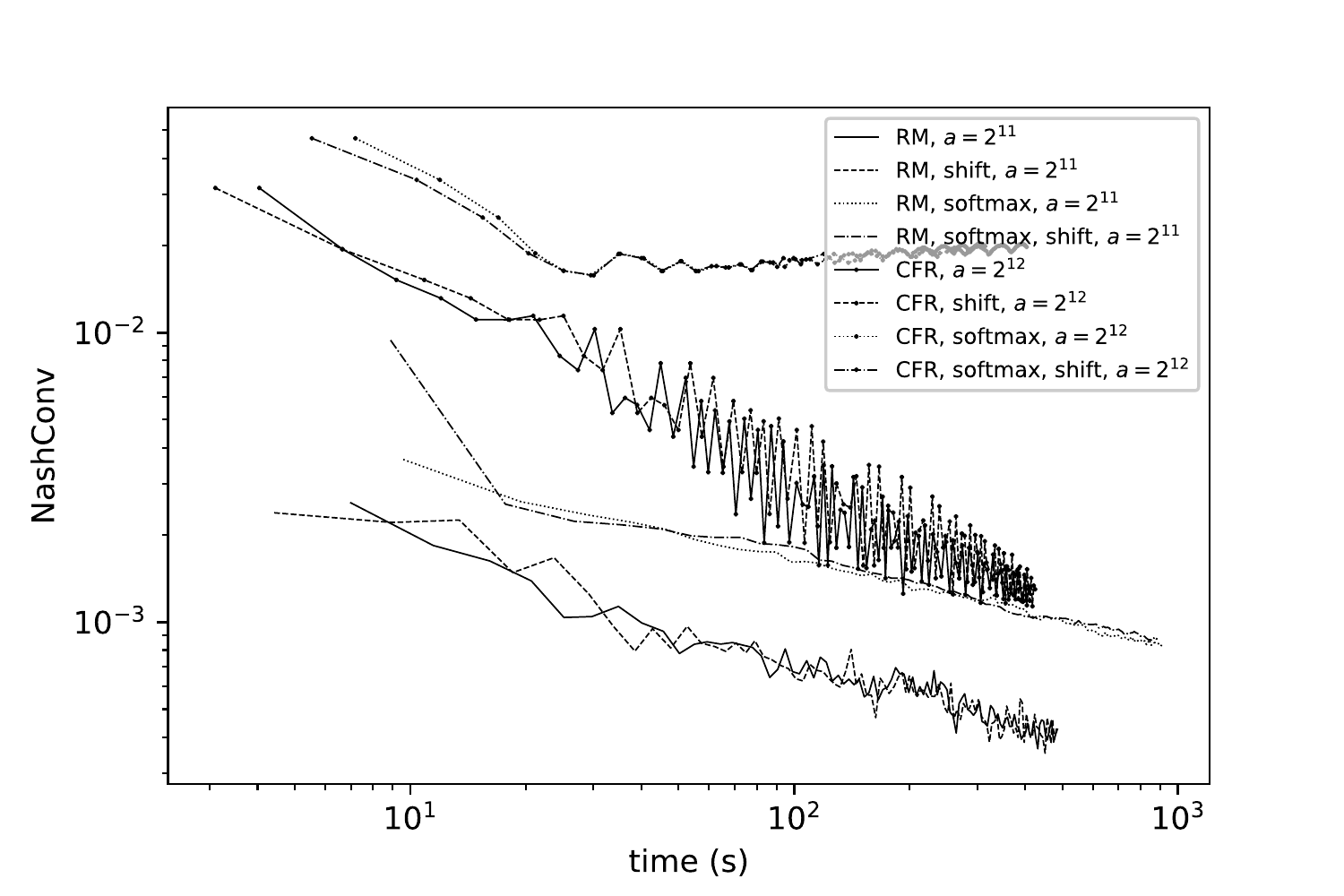}
  \end{minipage}
  \hfill
  \begin{minipage}[t]{0.48\textwidth}
    \centering
    \includesvg[width=1.1\linewidth]{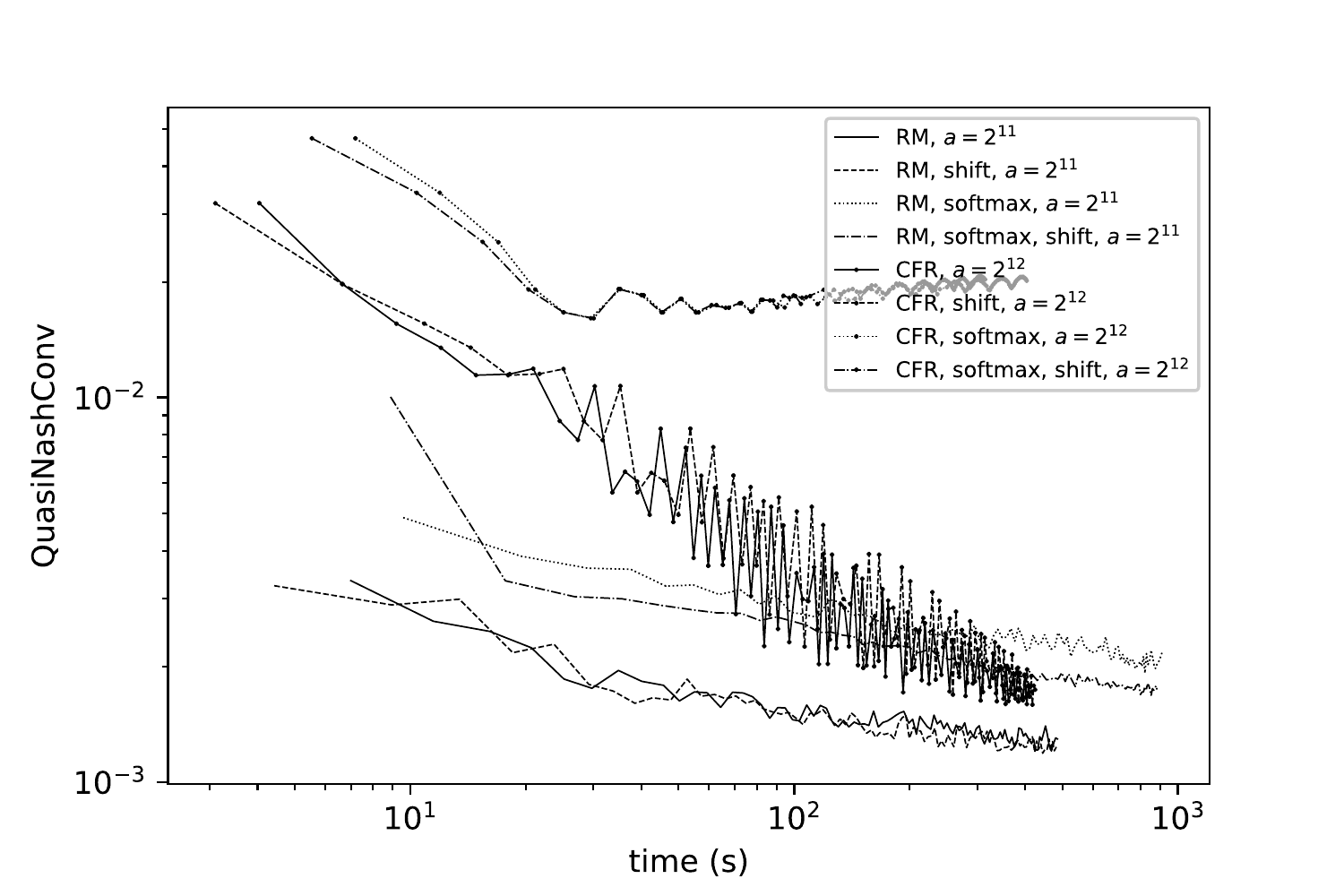}
  \end{minipage}
  \caption{We evaluate the performance of different algorithms in finding the Nash equilibrium of the CfR game in the standard setting with $P=R=1$, $\tau=0$, and $\rho=0$, without sharing. We reduce the game to a grid of $a=2^{12}$ actions, and evaluate the $\textsc{QuasiNashConv}$ metric on $2^{15}$ points. The algorithms are run for $10^4$ iterations, where an iteration is defined as an update to the strategy. For CFR, an iteration involves updating the regrets based on the counterfactual outcomes of the game, while for regret matching, an iteration is defined as $a$ steps of the sampling, play, and update process. Both CFR and regret matching do $\mathcal{O}(a^2)$ operations per iteration. We do not include the computation of the utility matrix in the computation time.}
  \label{fig:rm-different}  
\end{figure}

\begin{figure}[htbp]
  \centering
  \begin{minipage}[t]{0.48\textwidth}
    \centering
    \includesvg[width=1.1\linewidth]{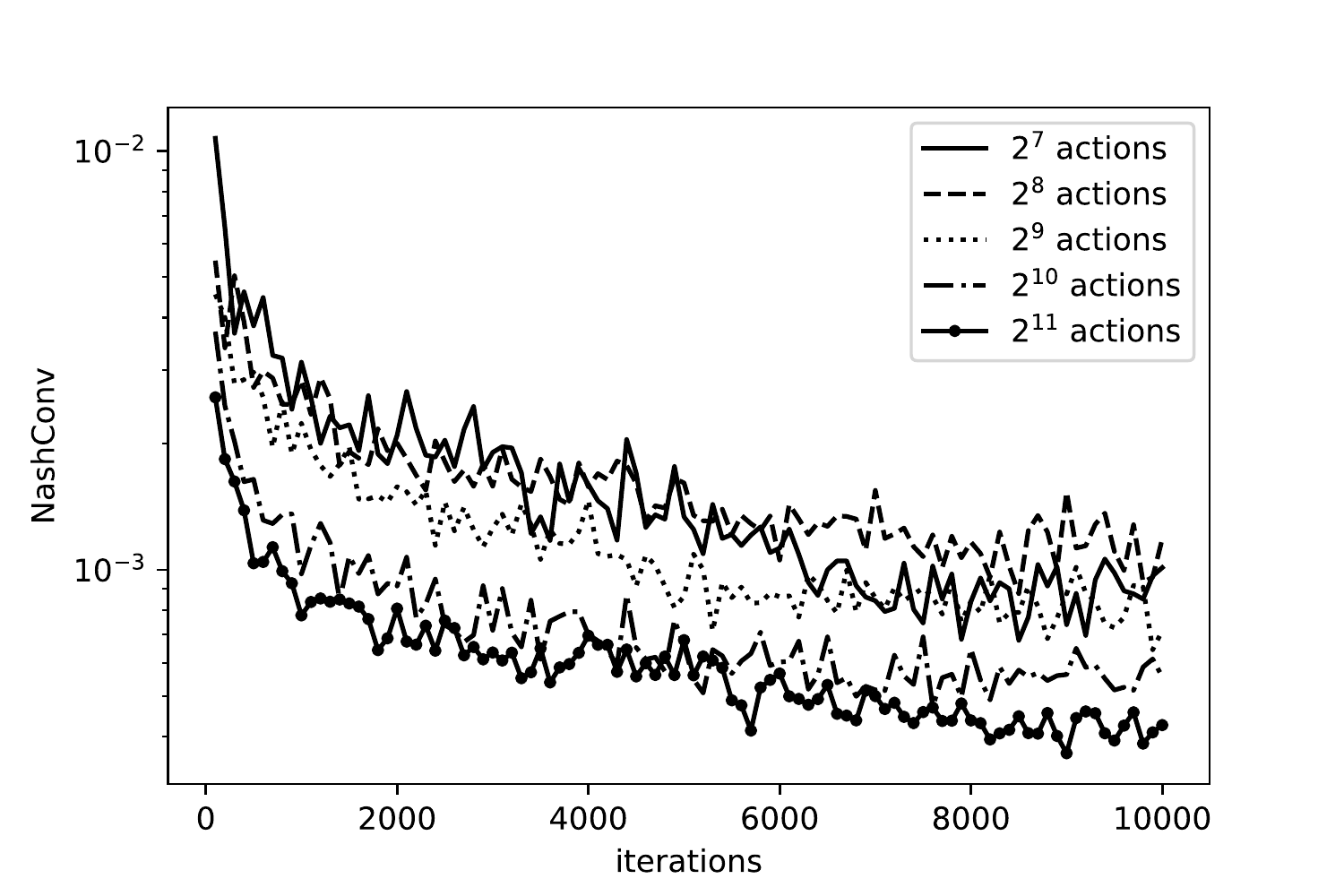}
  \end{minipage}
  \hfill
  \begin{minipage}[t]{0.48\textwidth}
    \centering
    \includesvg[width=1.1\linewidth]{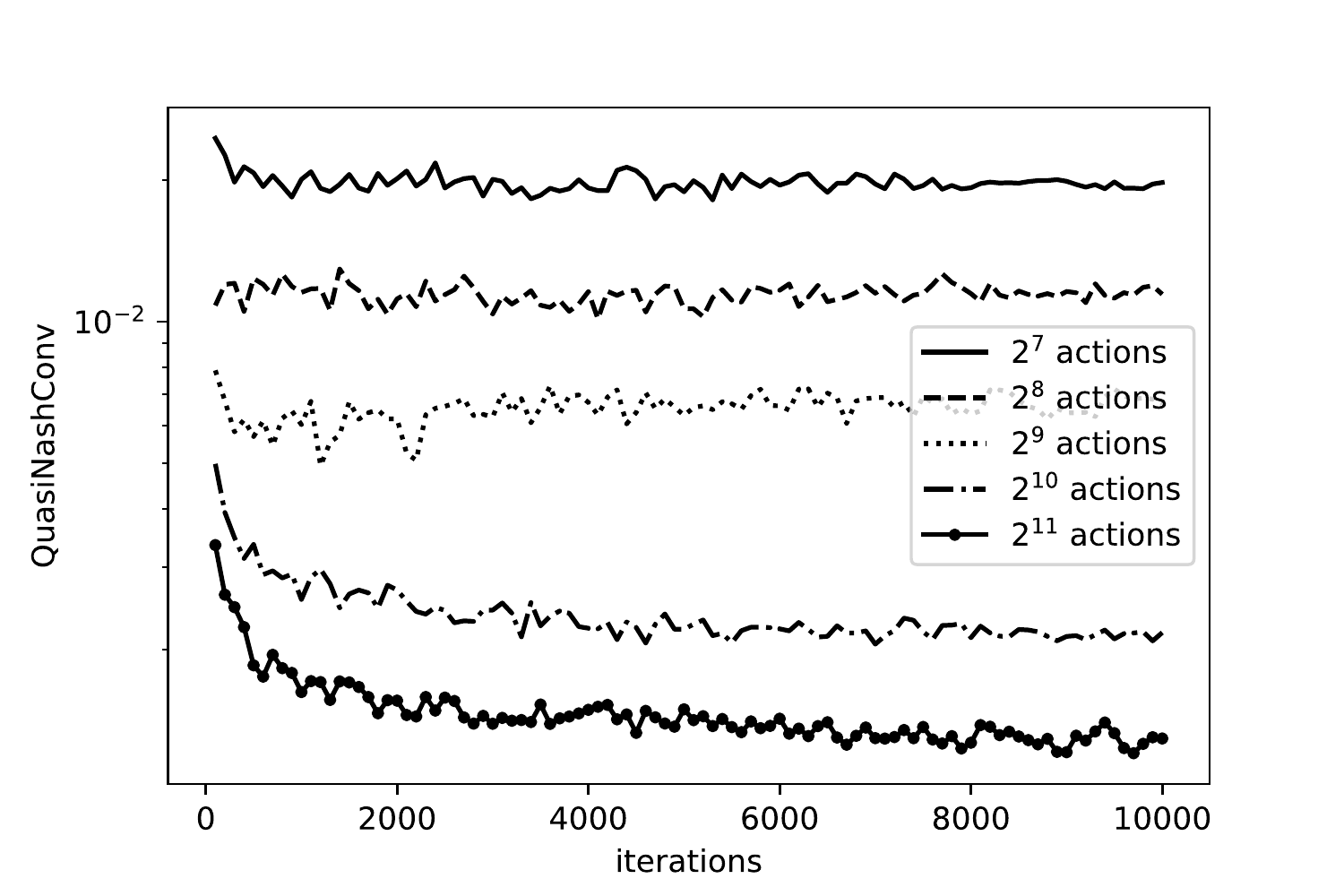}
  \end{minipage}
  \caption{We evaluate the performance of the vanilla regret matching algorithm in finding the Nash equilibrium of the CfR game in the standard setting with $P=R=1$, $\tau=0$, and $\rho=0$, without sharing. The quality of the resulting equilibrium depends mainly on the number of actions $a$. We run the algorithm for $t$ iterations, where the time complexity of the algorithm is $\mathcal{O}(t a^2)$ and the memory complexity is $\mathcal{O}(a^2)$ as it requires computing the reward matrix. We evaluate the $\textsc{QuasiNashConv}$ metric on $8a$ points.}
  \label{fig:rm-size}  
\end{figure}

Our results demonstrate that penalties and market frictions have a significant impact on the strategic behavior of actors in competition. Specifically, we find that penalties $P$ decrease both the average risk taken $\bar r$ by the players and their total reward $u$, while market frictions $\tau$ not only decrease the average risk taken but also increase the total reward. We also find that market frictions have a greater impact on the total reward in high-penalty environments. In particular, in particularly inefficient markets with high $\tau$, increasing penalties can improve cooperation and the total reward, as illustrated in Figure \ref{fig:result-noise}.

\begin{figure}[htbp]
  \centering
  \begin{minipage}[t]{0.48\textwidth}
    \centering
    \includesvg[width=1.1\linewidth]{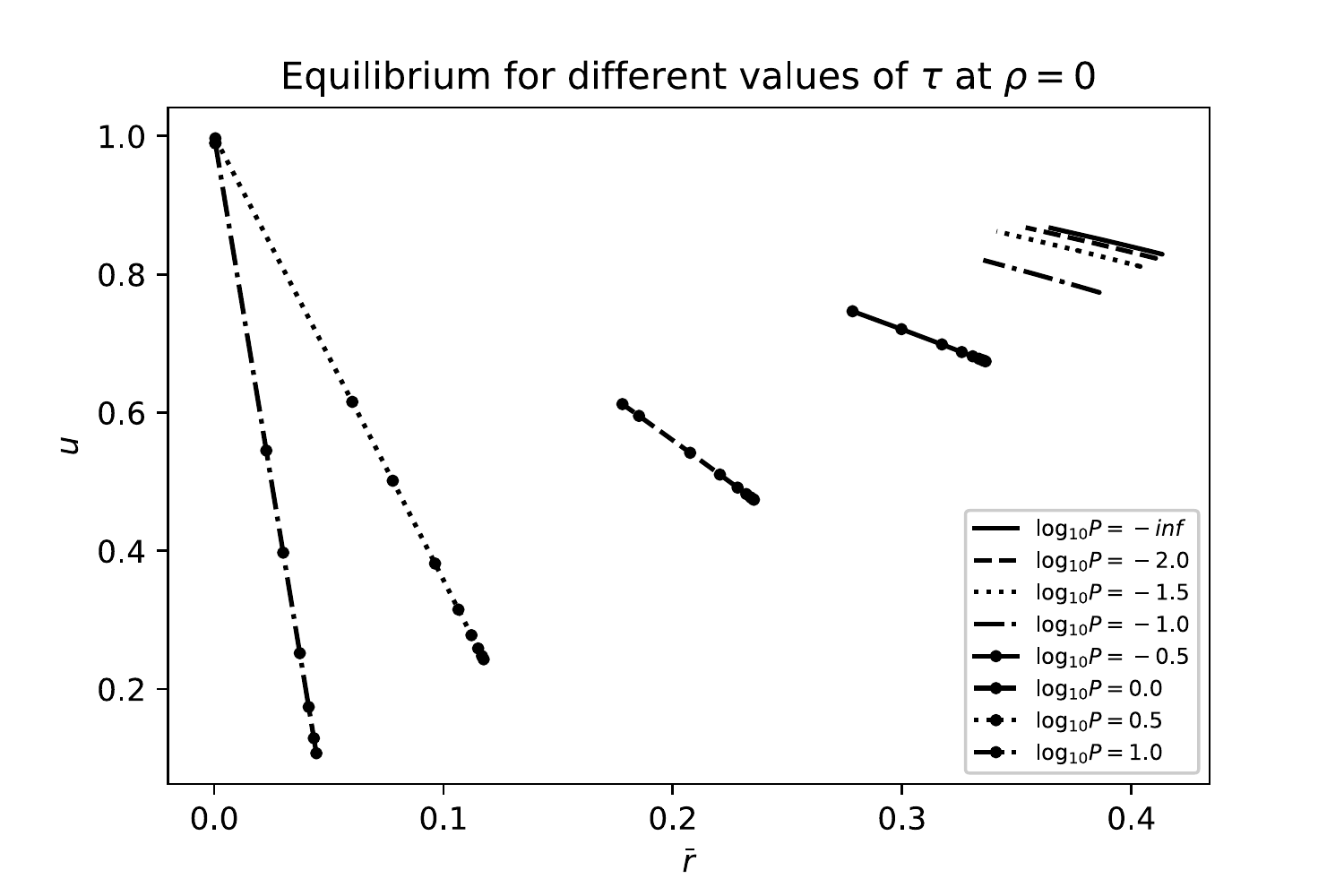}
  \end{minipage}
  \hfill
  \begin{minipage}[t]{0.48\textwidth}
    \centering
    \includesvg[width=1.1\linewidth]{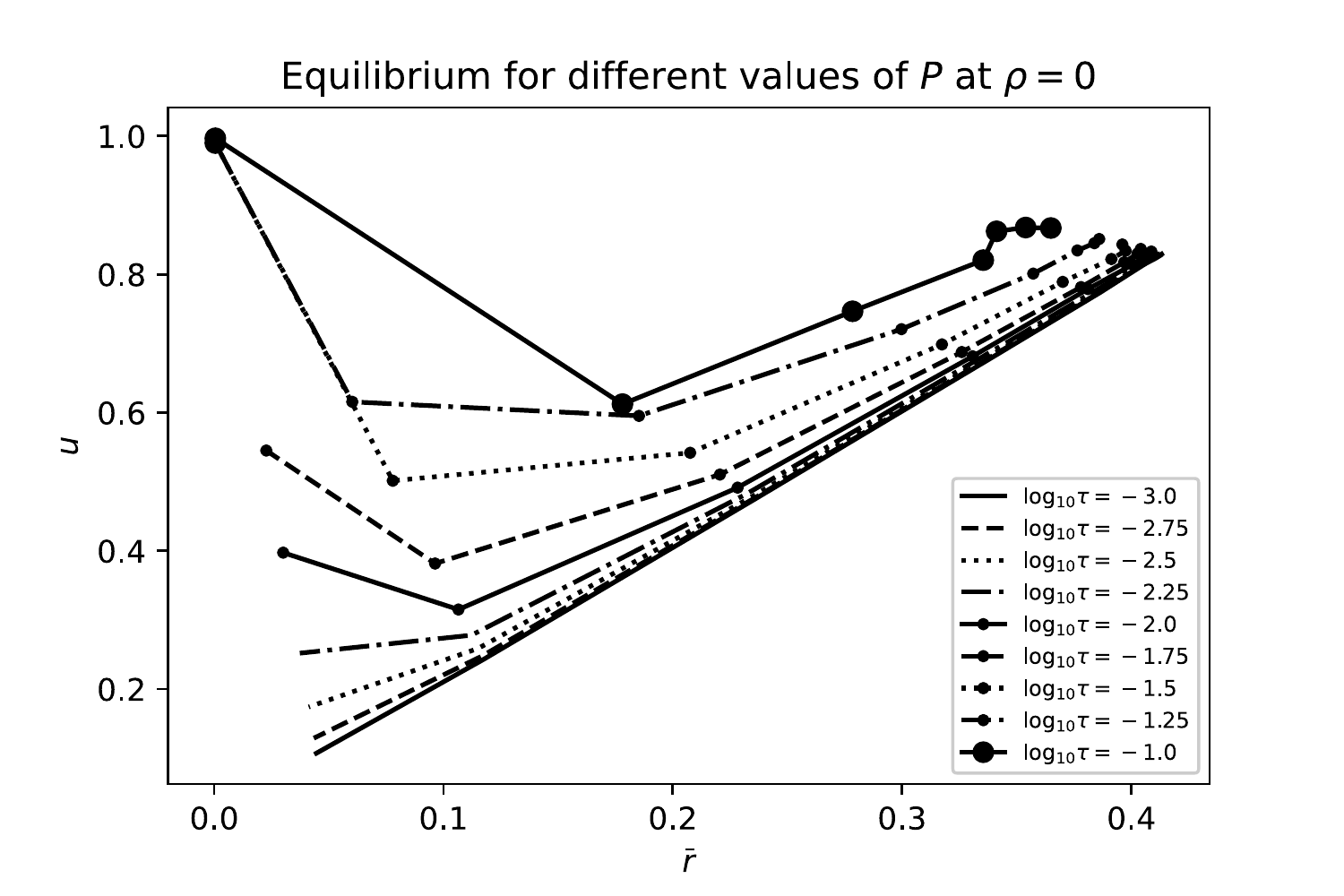}
  \end{minipage}
  \caption{
  The two figures depict the relationship between the total reward $u$ and the average risk taken $\bar r$ under different penalty $P$ and market friction $\tau$ settings. The left figure shows constant $P$ level lines, while the right figure shows constant $\tau$ level lines. We observe a linear relationship between $u$ and $\bar r$ when changing $\tau$ at constant $P$. We find that this behavior holds for all values of $\rho$. Interestingly, there seems to exist a linear relationship between $\bar r$ and $u$ when changing $\tau$ for fixed levels of $P$.}
  \label{fig:result-noise}  
\end{figure}

 The correlation between firm risks also has a significant impact on risk-taking behavior and performance. Specifically, players take more risks in environments of negative correlation, which can improve their payoff, while they take less risks in environments of positive correlation. The impact of correlation on performance is negative in efficient markets, but it can become positive in noisy markets, as illustrated in Figure \ref{fig:result-corr}. These findings provide valuable insights into the complex interplay between market structure, risk-taking behavior, and performance in a competitive environment.

\begin{figure}[htbp]
  \centering
  \begin{minipage}[t]{0.48\textwidth}
    \centering
    \includesvg[width=1.1\linewidth]{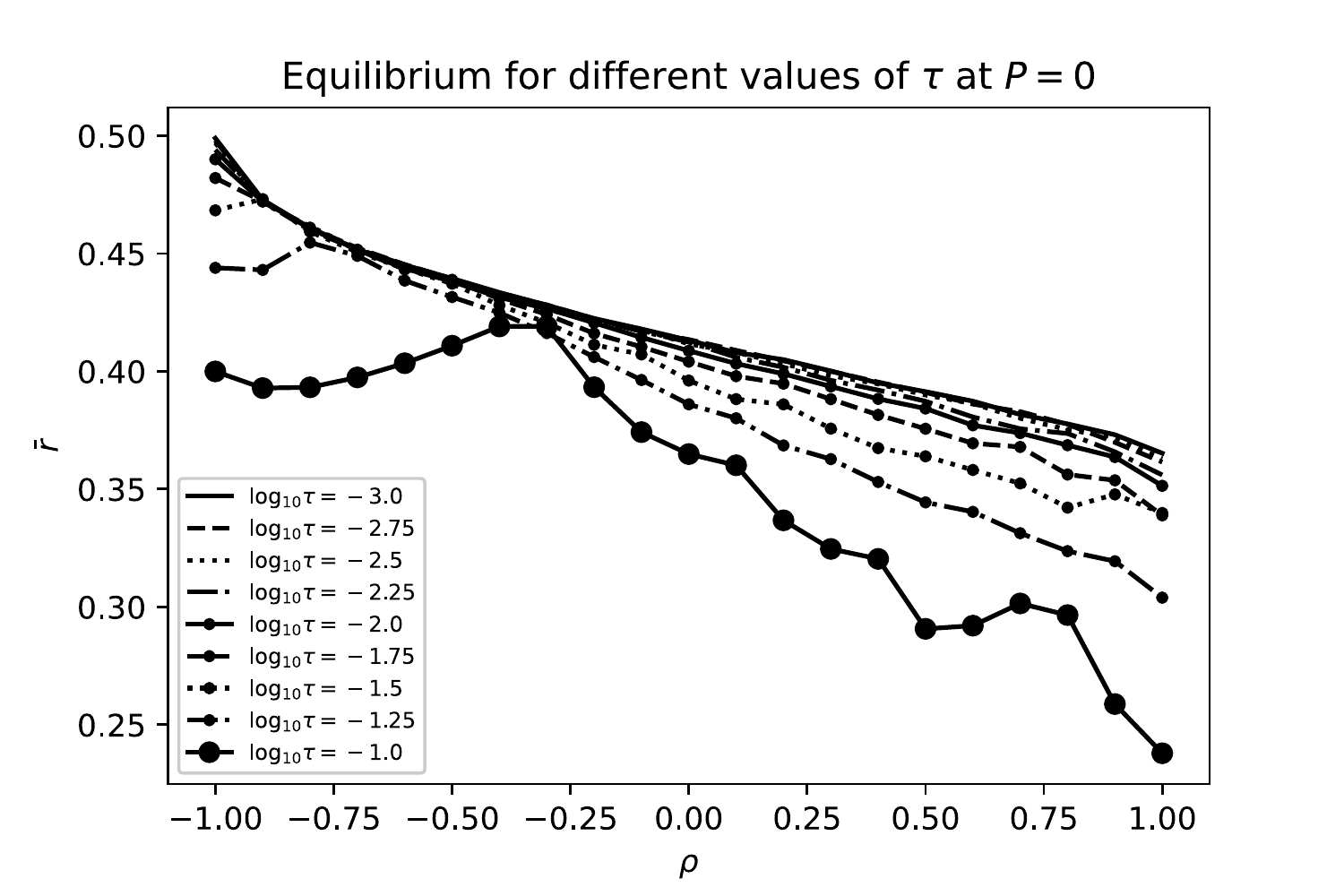}
  \end{minipage}
  \hfill
  \begin{minipage}[t]{0.48\textwidth}
    \centering
    \includesvg[width=1.1\linewidth]{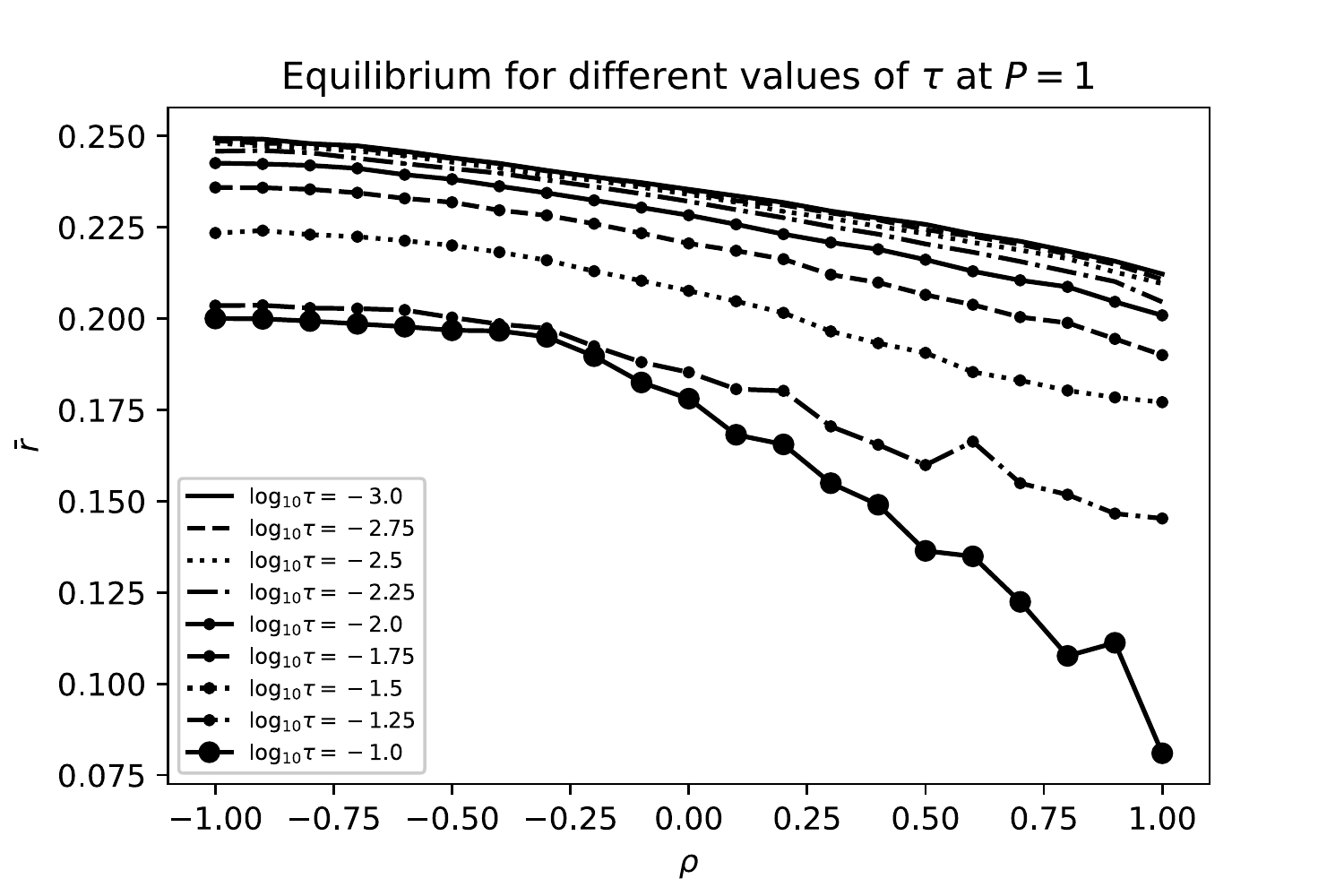}
  \end{minipage}
  \begin{minipage}[t]{0.48\textwidth}
    \centering
    \includesvg[width=1.1\linewidth]{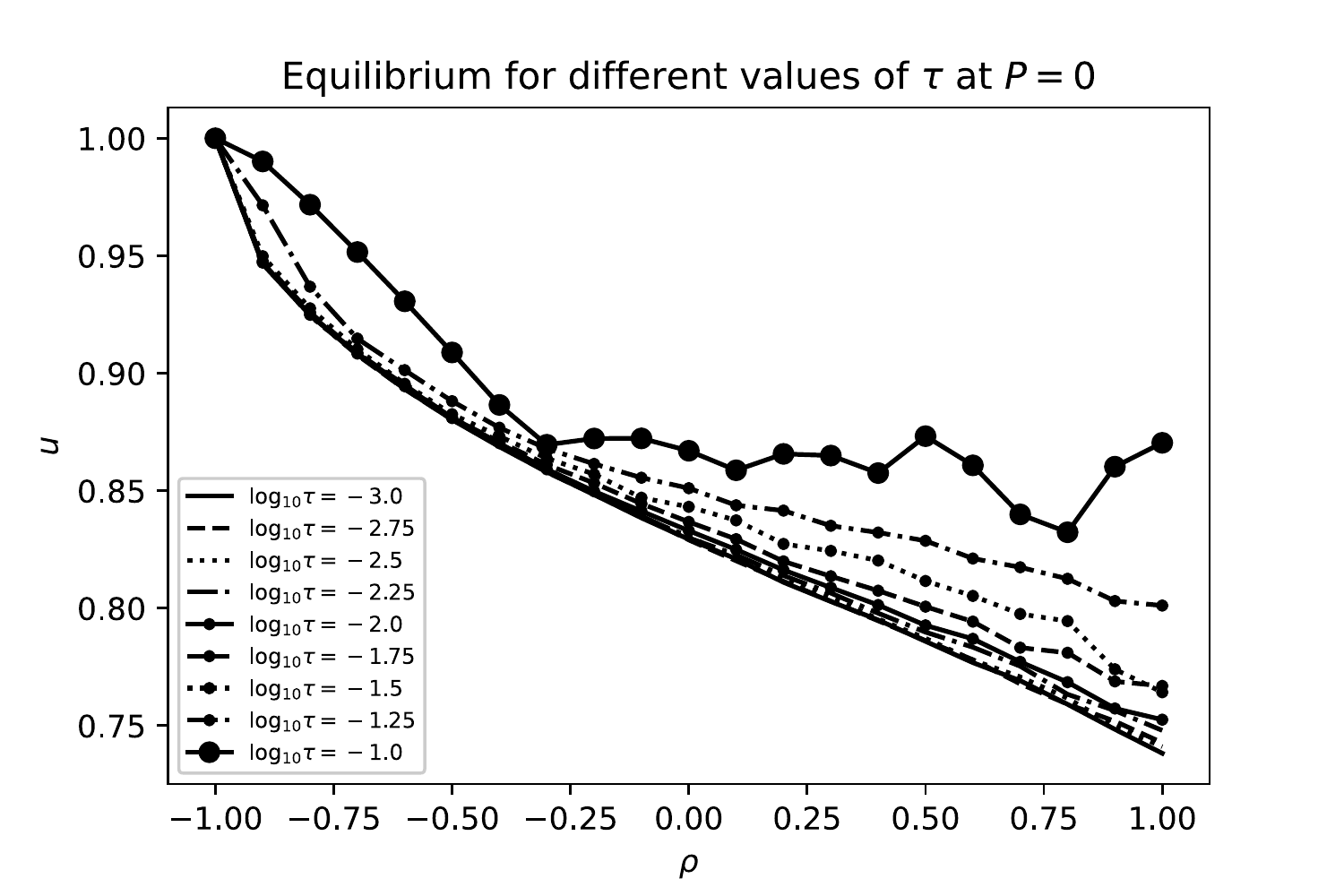}
  \end{minipage}
  \hfill
  \begin{minipage}[t]{0.48\textwidth}
    \centering
    \includesvg[width=1.1\linewidth]{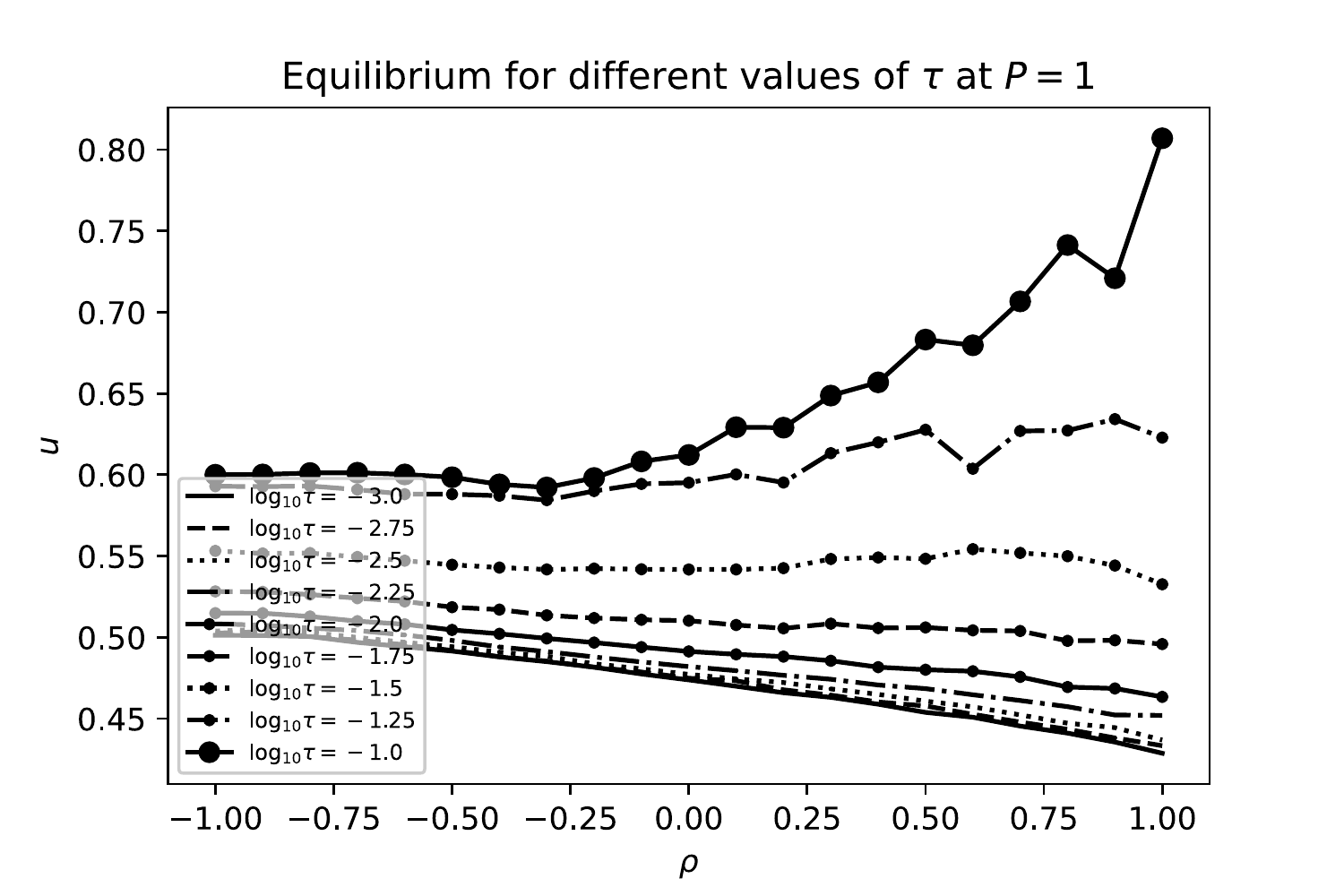}
  \end{minipage}
  \caption{We plot the average risk taken $\bar r$ and the total utility $u$ as functions of the correlation $\rho$ between firm risks.}
  \label{fig:result-corr}  
\end{figure}

\section{Economic Policy Implications and Real-World Case Studies}

In this section, we aim to connect the insights gleaned from our model with broader economic literature and real-world policy implications. Drawing on established economic theories, particularly the Diamond-Dybvig model, we offer a new perspective on how financial institutions manage risk and competition. Further, we delve into how our model's findings can inform current policy debates, such as the regulation of financial advisory commissions in the European Union and contract transferability among life insurers in France. Through this discussion, we illustrate the practical applicability and potential impact of our theoretical framework on shaping economic policies and practices.

\subsection{Ties with the Diamond-Dybvig model}

The Diamond-Dybvig model \citep{diamond1983bank} plays a central role in our understanding of financial intermediaries, especially banks. This seminal model represents a situation where banks provide liquidity transformation services. In a nutshell, it portrays a setting in which depositors need to decide whether to withdraw their money early (impatient depositors) or keep their money in the bank until a later period (patient depositors). Banks offer a contract that allows depositors to withdraw their money early, but at a lower return than if they wait until the end. This can lead to a bank run if too many depositors decide to withdraw their money early.

By allowing financial institutions in our model to ``choose'' a parameter that influences both a utility function and their likelihood of bankruptcy, our approach shares similarities with the Diamond-Dybvig framework. This feature makes our model particularly useful for studying phenomena such as competition among banks over deposit contract interest rates, a scenario that mirrors the dynamic modeled by Diamond and Dybvig.

Despite its parallels with the Diamond-Dybvig framework, our model is distinct, original, and more expansive in scope. It is unconcerned with specific financial metrics or measures, making it more generalizable across different settings. Instead, our model pivots around the broadly applicable concept of ``failure probability''. This fundamental characteristic allows us to abstract away from the complexities of real-world financial instruments and focus on the core strategic interactions of players. By viewing the competition through the lens of failure probability, we can derive insights that are not confined to specific financial instruments or markets, but instead provide a versatile theoretical tool that can be applied across various sectors and scenarios. This innovative feature enhances the relevance and applicability of our model in analyzing strategic risk-taking behavior.

Under certain circumstances, as we show in our study, market frictions, such as customers not optimizing their choice of bank, can increase bank profitability. Although this finding may seem intuitive, our formal model provides a rigorous foundation for this result, focusing not on explicit interest rates but rather on failure probabilities.

Interestingly, our results also shed light on the implications of risk correlation for the equilibrium. In cases where the failure events are negatively correlated, market frictions appear to have little effect on the equilibrium. In contrast, when risks are positively correlated – mirroring the reality of endogenously generated financial risks tied to financial markets – a lack of market friction can spur banks to take on greater risk. This suggests that the optimal situation for both customers and banks arises when the correlation between endogenous risks is zero or negative. Under these conditions, lower frictions can benefit customers without harming banks. On the other hand, if risks are positively correlated, market frictions can harm customers while benefiting banks, creating an incentive for banks to increase these frictions (e.g., by fostering customer loyalty or withholding information).

Moreover, the impact of failure penalties (or the absence of a safety net) can exacerbate this dynamic, suggesting that regulators could mitigate perverse incentives by reducing the penalty parameter $P$ in a positively correlated environment, such as through bailouts or other protective measures (e.g., deposit insurance, liquidity provision, and bank resolution mechanisms). This perspective aligns with a strand of literature that extrapolates from the Diamond-Dybvig model to inform policy making \citep{bhattacharya1985preference,ennis2009bank}.

\subsection{Practical applications in current policy debates}

Outside of the banking sector, our framework has implications for other policy debates. In the European Union, for instance, the Financial Services Commissioner, Mairead McGuinness, recently proposed banning ``inducements'' – commissions paid by banks or insurers to financial advisors who sell their products \citep{Jones2023}. Proponents argue this would enhance transparency and reduce costs. Critics, however, fear it could inhibit access to financial advice. Within our model, increasing transparency equates to reducing the friction parameter $\tau$, which alters the equilibrium, potentially heightening risk for financial institutions but also augmenting returns for consumers. Our findings indicate that regulators might safeguard consumers and influence prices simply by mandating transparency. However, we caution that the impact of such a policy on market frictions is uncertain, as it could inadvertently limit access to information via advisors.

Furthermore, in France, a legislative proposal by MPs Husson and Montgolfier \citep{senat2023} seeks to enhance the transferability of contracts between life insurers, effectively reducing market frictions. Our framework suggests that, in response, insurers could take on more risk due to increased competition. However, the relationship between risk and interest rates may also change, as performance becomes costlier in terms of risk when the duration of contracts decreases because of enhanced transferability. Consequently, while our model predicts an uptick in the risks assumed by insurers, the ultimate benefit to savers remains ambiguous.

\section{Conclusion}

In this study, we thoroughly examined competition models where participants strategically choose their risk levels, with those who take on more risk potentially outperforming their competitors. 

We devised and tested multiple algorithms to solve our game in its discrete form, with vanilla Regret Matching proving to be the most effective. We utilized this efficient implementation to delve into the impacts of penalties, market frictions, and risk correlations on strategic behavior and overall performance. Additionally, we scrutinized the effectiveness of diverse interventions and policies within this competitive landscape.

Our research revealed that market frictions tend to lower the average risk taken while boosting the total reward. Moreover, we found that enhancing failure penalties can foster cooperation and augment the total reward, particularly in inefficient markets. Our exploration also showed that negative correlations among failure events stimulate risk-taking, while positive correlations may discourage it in efficient markets but potentially encourage it in less predictable, noisy markets.

One noteworthy aspect of our study is its parallel with the Diamond-Dybvig framework. Our model, similar to Diamond-Dybvig, examines how financial institutions select parameters influencing their utility functions and likelihood of failure. However, our model is more generalized, focusing on universally applicable notions of failure probabilities, thereby enabling us to study the dynamics of strategic competition in a broader array of scenarios.

We also showcased the adaptability of our model for policy exploration. By imposing policy measures such as transparency requirements or facilitating contract transferability, we demonstrated how policy changes can influence the equilibrium of risk-taking and consequently, the rewards.

Our findings offer substantial insights for economics, finance, and policymaking. By understanding how market frictions and penalties influence competition, firms and governments can make more informed strategic decisions leading to more efficient markets. Moreover, our use of algorithmic solvers for games with continuous action sets illustrates the potential for handling more intricate models lacking closed-form solutions. 

In conclusion, our work provides a robust framework for modeling and analyzing strategic interactions in continuous action games, extending its implications far beyond to enrich economic research and practice.

\section*{Acknowledgement}

We thank \identifier{Aurélie Coursimault}, \identifier{Professor Hélyette Geman}, \identifier{Marc Lanctot}, \identifier{Jules Pondard} and \identifier{Professor Philippe Raimbourg} for helpful discussions and useful comments.

\FloatBarrier
\bibliographystyle{elsarticle-harv} 
\bibliography{refs}

%% The Appendices part is started with the command \appendix;
%% appendix sections are then done as normal sections
\appendix

\section{Proof of Theorem \ref{thm:nashcor}}
\label{proof:nashcor}

We follow broadly the same scheme as Theorem 3.6 of \citet{Lotker2008-tx}, mutatis mutandis since the equations are different. Their proof proceeds by supposing the existence of a Nash equilibrium and deriving properties to characterize it. Contrary to what they claim, we can prove the existence of a Nash equilibrium without any computation using \citet{Dasgupta1986-gu}. Let $(f_1, f_2)$ be the density functions of Player 1 and 2 in a Nash equilibrium. We note $S_1, S_2$ their support.
\begin{proposition}
    
For almost all $x \in S_1$, $f_2(x) \sim \frac{1}{(1-x)^3}$ and conversely.
\end{proposition}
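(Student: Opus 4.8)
The plan is to pin down $f_2$ on $S_1$ by exploiting the indifference condition of Theorem \ref{thm:jules} applied to Player 1, and then to recover the claimed shape by a single differentiation. First I would write out the payoff $u_1(x)$ of the pure move $x$ against Player 2's equilibrium density $f_2$. Reasoning exactly as in the expected-utility Proposition (Player 1 must survive, and then either the opponent took strictly less risk, or took more risk and failed), this is
$$u_1(x) = -xP + (1-x)\,g(x), \qquad g(x) := \int_0^x f_2(y)\,dy + \int_x^1 y\,f_2(y)\,dy,$$
which is just the $n=2$ specialization of the formula appearing in the proof of Theorem \ref{thm:multiple}. By Theorem \ref{thm:jules} there is a null set $\mathcal{Z}$ with $u_1(x) = u_1^*$ for every $x \in S_1 \setminus \mathcal{Z}$, where $u_1^*$ is Player 1's equilibrium payoff. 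Solving $-xP + (1-x)g(x) = u_1^*$ for $g$ then gives $g(x) = h(x)$ on $S_1 \setminus \mathcal{Z}$, with the explicit smooth function $h(x) := \frac{u_1^* + xP}{1-x}$.

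The second step is to differentiate. The function $g$ is absolutely continuous with $g'(x) = (1-x)f_2(x)$ for a.e. $x$, directly from its definition, while $h$ is smooth with $h'(x) = \frac{P + u_1^*}{(1-x)^2}$. Since $g - h$ vanishes on $S_1 \setminus \mathcal{Z}$, the standard measure-theoretic fact that a function differentiable almost everywhere has zero derivative almost everywhere on any set where it is constant yields $(g-h)'(x) = 0$ for a.e. $x \in S_1$. Equating the two expressions for $g'$ gives $(1-x)f_2(x) = \frac{P+u_1^*}{(1-x)^2}$, i.e.
$$f_2(x) = \frac{P + u_1^*}{(1-x)^3} \quad \text{for a.e. } x \in S_1,$$
which is the claim with proportionality constant $P + u_1^* > 0$ (note $u_1^* \ge u_1(0) = \bar r_2 \ge 0$ and $P>0$). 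The converse follows by interchanging the two players via the symmetry $u_2(r_1,r_2) = u_1(r_2,r_1)$ recorded earlier.

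I expect the main obstacle to be the legitimacy of this differentiation, since the indifference relation $g = h$ is only known on $S_1 \setminus \mathcal{Z}$, a set that need not contain an interval a priori. The clean way around this is precisely the lemma quoted above (the derivative of $g - h$ vanishes a.e. on the level set $\{g-h=0\}$, via the Lebesgue density theorem), which avoids having to establish beforehand that $S_1$ is an interval; the interval structure and the endpoint $r_{max}$ are instead pinned down afterwards from the normalization $\int f_2 = 1$ and the definition of $\bar r$. A minor point is that $g'(x) = (1-x)f_2(x)$ and the constancy of $u_1$ are both almost-everywhere statements, so all identities should be read modulo the null set $\mathcal{Z}$, matching the ``up to a set of measure zero'' phrasing of Theorem \ref{thm:nashcor}.
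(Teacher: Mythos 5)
Your proof is correct and takes essentially the same route as the paper's: invoke Theorem \ref{thm:jules} to get $u_1(x)=u_1^*$ for almost all $x$ in the support, solve for the integral term, and differentiate to obtain $f_2(x)=\frac{P+u_1^*}{R(1-x)^3}$ almost everywhere on $S_1$, with the converse by symmetry. Your extra care about the legitimacy of differentiating an identity known only on $S_1\setminus\mathcal{Z}$ (via density points of the level set) is a refinement of a step the paper performs without comment, not a different approach.
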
 
\begin{proof}
Let us simplify the notations by noting $a \sim f_1$ and $b \sim f_2$ the random moves of players 1 and 2.  We note $u_1^*$ the utility of Player 1 and $\bar b$ the expectation of $b$. Then, by Theorem \ref{thm:jules}, for almost all $a\in S_1$,
\begin{align*}u_1^* = u_1(a) &= \mathbb{E}_b[u_1(a, b)]\\
&= \mathbb{E}_b\left[Rb(1-a) - aP  + [a>b](1-a)(1-b)R\right]\\
&= R \bar b (1-a) - a P + (1-a) \int_0^a (1-b) R f_2(b) db\end{align*}
$$\int_0^a (1-b) R f_2(b) db = \frac{u_1^* + aP}{1-a} + R \bar b$$

We derivate to obtain $f_2(a) = \frac{u_1^* + P}{R(1-a)^3} $.
\end{proof}
\begin{proposition}
    With the exception of a set measure zero, $S_1 = S_2$.
\end{proposition}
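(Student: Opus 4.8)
The plan is to leverage the explicit density formula just established. The preceding proposition gives, for almost every $x \in S_1$, the identity $f_2(x) = \frac{u_1^* + P}{R(1-x)^3}$, and symmetrically $f_1(x) = \frac{u_2^* + P}{R(1-x)^3}$ for almost every $x \in S_2$. The whole argument reduces to observing that these right-hand sides are strictly positive, which forces the two supports to coincide up to a null set.

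First I would verify positivity of the constant $u_1^* + P$. Since Player 1 can always play the pure move $r_1 = 0$, which yields $u_1(0) = R\bar b \ge 0$ (where $\bar b$ is the mean of Player 2's move and $[0 > b] = 0$ for $b \in [0,1]$), we have $u_1^* \ge u_1(0) \ge 0$; together with the assumption $P > 0$ this gives $u_1^* + P > 0$, and likewise $u_2^* + P > 0$. Consequently $f_2(x) > 0$ for almost every $x \in S_1$, and $f_1(x) > 0$ for almost every $x \in S_2$.

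Next I would run the elementary measure-theoretic comparison. By definition of support, $f_2 = 0$ almost everywhere on the complement $S_2^c$; but $f_2 > 0$ almost everywhere on $S_1$, so $S_1 \cap S_2^c$ is a null set, i.e. $S_1 \subseteq S_2$ up to measure zero. Applying the same reasoning to the ``conversely'' statement, using $f_1 > 0$ almost everywhere on $S_2$ together with $f_1 = 0$ almost everywhere on $S_1^c$, yields $S_2 \subseteq S_1$ up to measure zero. The two inclusions combine to give $S_1 = S_2$ modulo a set of measure zero.

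I do not expect a genuine obstacle here, since the hard analytic work (deriving the density and justifying the differentiation of the indifference equation, presumably via agreement of absolutely continuous functions at Lebesgue points) is already carried out in the preceding proposition. The only point requiring minimal care is the appeal to the notion of support: one must phrase it as ``$f_2 = 0$ almost everywhere outside $S_2$'' rather than as topological support, so that the argument remains purely measure-theoretic and the conclusion is honestly an equality only up to a set of measure zero, as stated.
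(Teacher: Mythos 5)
Your proposal is correct and takes essentially the same route as the paper, whose entire proof is the one-line observation that the previous proposition forces $f_2(x) \neq 0$ for $x \in S_1$ (and conversely). You merely make explicit two points the paper leaves implicit—the positivity of $u_1^* + P$ via $u_1^* \ge u_1(0) = R\bar b \ge 0$ and $P > 0$, and the measure-theoretic bookkeeping turning pointwise positivity into the two support inclusions—both of which are sound.
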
 
\begin{proof}We apply the previous result that implies $x \in S_1 \implies f_2(x) \neq 0$.\end{proof}
\begin{proposition}
    $\inf S_1 = \inf S_2 = 0$
\end{proposition}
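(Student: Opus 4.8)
The plan is to argue by contradiction: assume the common left endpoint of the support is strictly positive and show that a player could then profitably deviate to the pure strategy $0$. Write $m := \inf S_1 = \inf S_2$; this common value is well defined because $S_1$ and $S_2$ coincide up to a set of measure zero by the previous proposition, so they share the same infimum. Suppose toward a contradiction that $m > 0$.

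First I would evaluate Player 1's payoff against $f_2$ for any move $x \in [0, m]$. Since every $b \in S_2$ satisfies $b \ge m \ge x$, the Iverson bracket $[x > b]$ vanishes almost surely, so the expression for $u_1(x)$ collapses to $u_1(x) = R \bar b (1 - x) - x P$, where $\bar b$ is Player 2's mean move. This is strictly decreasing in $x$ (its derivative is $-R\bar b - P < 0$), so over $[0, m]$ it is maximized at $x = 0$, giving $u_1(0) = R \bar b$. By Theorem \ref{thm:jules}, every pure deviation is bounded by the equilibrium value, so $R \bar b = u_1(0) \le u_1^*$.

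Next I would identify $u_1^*$ with $u_1(m)$. The map $x \mapsto u_1(x)$ is continuous (the integral term is continuous in its upper limit), and because $m = \inf S_1$ there is a sequence $a_n \downarrow m$ with $a_n \in S_1 \setminus \mathcal{Z}$, along which Theorem \ref{thm:jules} gives $u_1(a_n) = u_1^*$; passing to the limit yields $u_1(m) = u_1^*$. Evaluating the payoff formula at $x = m$, the integral $\int_0^m (1-b) R f_2(b)\,db$ vanishes, since $f_2 = 0$ a.e. on $[0, m)$, so $u_1^* = u_1(m) = R\bar b (1 - m) - mP$.

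Combining the two relations gives $R \bar b \le R \bar b (1 - m) - mP$, that is $0 \le -m(R\bar b + P)$. Since $P > 0$ forces $R\bar b + P > 0$, this requires $m \le 0$, contradicting $m > 0$; hence $m = 0$. I expect the only delicate step to be the identification $u_1^* = u_1(m)$: one must guarantee that points of the support arbitrarily close to $m$ and outside the measure-zero exceptional set $\mathcal{Z}$ exist, which holds because the left endpoint of the support of a density is necessarily an accumulation point of the set where the density is positive. Everything else is a routine monotonicity and sign check.
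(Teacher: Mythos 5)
Your proof is correct and follows essentially the same route as the paper's: assume the common infimum $m>0$ and derive a contradiction by comparing the equilibrium payoff at the infimum with the payoff of deviating to $0$. If anything you are more careful than the paper, which directly asserts $u_1(l)=u_1^*$ at the infimum via Theorem \ref{thm:jules} (glossing over the measure-zero exceptional set that your limiting argument along support points handles), and your explicit computation $u_1(0)-u_1(m)=m(R\bar b+P)>0$ makes quantitative the paper's informal claim that playing $0$ decreases risk without compromising the chance of winning the reward.
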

\begin{proof}
Suppose $\inf S_1 = \inf S_2 = l > 0$. By Theorem \ref{thm:jules}, $u_1(l) = u_1^* = \max_x u_1(x)$. But $u_1(0) > u_1(l)$ since playing $0$ decreases the risk of Player 1 without compromising its chances to get the reward. By contradiction, $\inf S_1 = \inf S_2 = 0$.
 \end{proof}
\begin{proposition}
For all intervals $[x_1, x_2]$ with $0 < x_1 < x_2 < \sup S_1$ we have that $\int_{x_1}^{x_2} f_1(x) dx > 0$
 \end{proposition}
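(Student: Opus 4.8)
The plan is to argue by contradiction. Suppose some interval $[x_1,x_2]$ with $0<x_1<x_2<\sup S_1$ carries no mass, i.e. $\int_{x_1}^{x_2} f_1(x)\,dx = 0$, so that $f_1 = 0$ almost everywhere on $[x_1,x_2]$. First I would enlarge this interval to a maximal gap $(\alpha,\beta)$ of the support $S_1$: the largest open interval containing $(x_1,x_2)$ on which $f_1$ vanishes almost everywhere. By construction $0 \le \alpha < x_1 < x_2 < \beta \le \sup S_1$, the support $S_1$ accumulates at both $\alpha$ and $\beta$, and since $S_1 = S_2$ up to a null set (previous proposition) the same holds for $S_2$.

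The key computation is that on this gap Player 2's pure-strategy payoff against $f_1$ is affine with strictly negative slope. Reusing the expression derived above with the roles of the players exchanged,
$$u_2(x) = R\,\bar a\,(1-x) - xP + (1-x)\int_0^x (1-a)\,R\,f_1(a)\,da,$$
where $\bar a$ is the mean of $f_1$. For $x \in (\alpha,\beta)$ the winning integral is frozen at the constant $C := \int_0^\alpha (1-a)\,R\,f_1(a)\,da \ge 0$, so $u_2(x) = (R\bar a + C)(1-x) - xP$, whose derivative $-(R\bar a + C + P)$ is strictly negative because $P>0$. Hence $u_2$ is strictly decreasing on $[\alpha,\beta]$, so $u_2(\alpha) > u_2(\beta)$.

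Next I would show that both endpoint values equal the equilibrium payoff $u_2^*$, contradicting this strict inequality. Since $f_1$ is a density it has no atoms, so $u_2$ is continuous. The point $\beta$ is a limit of points of $S_2$ approached from above, and by Theorem \ref{thm:jules} we have $u_2 = u_2^*$ on $S_2$ outside a null set $\mathcal{Z}$; choosing a sequence in $S_2\setminus\mathcal{Z}$ decreasing to $\beta$ and passing to the limit gives $u_2(\beta) = u_2^*$. If $\alpha > 0$, the same argument from below yields $u_2(\alpha) = u_2^*$, and then $u_2(\alpha)=u_2(\beta)=u_2^*$ contradicts the strict decrease. The case $\alpha = 0$ needs no payoff argument at all: a gap $(0,\beta)$ empty of support together with the absence of atoms forces $\inf S_1 \ge \beta > 0$, contradicting $\inf S_1 = 0$ established earlier. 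Either way we reach a contradiction, so no such $[x_1,x_2]$ exists and $\int_{x_1}^{x_2} f_1(x)\,dx > 0$.

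The main obstacle is the endpoint analysis rather than the affine slope computation: one must justify that $\alpha$ and $\beta$ are genuine accumulation points of $S_2$ at which continuity pins the value to $u_2^*$, carefully handling the measure-zero exceptional set $\mathcal{Z}$ of Theorem \ref{thm:jules} and the boundary case $\alpha = 0$. Everything else reduces to the single observation that once the winning integral is frozen on a gap, the penalty term $-xP$ makes the payoff strictly decreasing, so a gap would let a player strictly improve by shifting toward its left endpoint.
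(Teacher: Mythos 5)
Your proof follows essentially the same route as the paper's: assume a zero-mass interval, enlarge it to a maximal gap of the (common) support, observe that across the gap the payoff is strictly decreasing---the winning probability is frozen while the risk cost $-xP$ grows---and contradict the equality of equilibrium payoffs at support points given by Theorem \ref{thm:jules}. Your write-up is in fact more careful than the paper's one-paragraph version: the explicit affine slope computation, the limiting argument through $S_2 \setminus \mathcal{Z}$ that handles the measure-zero exceptional set at the gap endpoints, and the boundary case $\alpha = 0$ (dispatched via $\inf S_1 = 0$) all justify steps the paper asserts without proof.
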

\begin{proof}
    
 Suppose that there is an interval $[x_1, x_2]$ such that $\int_{x_1}^{x_2} f_1(x) dx =0$. Assume that this interval is maximal so that $x_1, x_2 \in S_1$. We also have that $\int_{x_1}^{x_2} f_2(x) dx = 0$ since $S_1 = S_2$. Hence Player 2 never plays between $x_1$ and $x_2$ and this implies that $u_1(x_1) > u_1(x_2)$ since Player 1 can decrease their risk without compromising its chances to get the reward. This contradicts the fact that $u_1(x_1) = u_1(x_2)$ since they are both in the support of the Nash equilibrium.
\end{proof}
\begin{proposition}
There is no point $x$ with positive probability.
\end{proposition}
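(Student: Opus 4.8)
The plan is to argue by contradiction through a deviation argument in the spirit of the no-atom proofs for all-pay auctions and wars of attrition. Suppose some player---say Player~2---places positive mass $\pi>0$ on a single action $x_0$. The value $x_0=1$ is disposed of immediately, since playing $1$ yields payoff $-P<0\le u_1(0)$ and is never a best response, so I may assume $x_0\in[0,1)$.

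The first step is to quantify the discontinuity that such a mass point induces in Player~1's payoff $u_1(\cdot)$. Using $u_1(r_1,r_2)=r_2(1-r_1)R-r_1P+[r_1>r_2](1-r_1)(1-r_2)R$ and integrating against Player~2's strategy, the atom contributes $\pi x_0(1-r_1)R$ to $u_1(r_1)$ when $r_1<x_0$ and $\pi(1-r_1)R$ when $r_1>x_0$, so $u_1$ jumps upward across $x_0$ by exactly $\pi(1-x_0)^2R>0$ (a half-jump occurring at $x_0$ itself under the shared-tie convention). Intuitively, overbidding the atom by an infinitesimal amount converts a tie or loss into a sure capture of the survival reward against that mass, at negligible extra risk.

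Next I would combine this jump with Theorem~\ref{thm:jules}, which gives $u_1\le u_1^*$ everywhere and $u_1=u_1^*$ on the support, together with the equilibrium identity $\int u_1\,dF_1=u_1^*$. Since the right limit satisfies $u_1(x_0^+)\le u_1^*$, the left limit obeys $u_1(x_0^-)\le u_1^*-\pi(1-x_0)^2R<u_1^*$, so $u_1<u_1^*$ on a whole left-neighborhood $(x_0-\delta,x_0]$. A point where $u_1<u_1^*$ cannot carry any mass of $F_1$ (density or atom), for otherwise $\int u_1\,dF_1<u_1^*$; hence Player~1 vacates $(x_0-\delta,x_0]$ entirely. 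The contradiction then comes from a profitable deviation for Player~2 when $x_0>0$: relocate the atom from $x_0$ to $x_0-\delta/2$. Because Player~1 places no mass in $(x_0-\delta,x_0]$, the set of Player-1 actions that Player~2 outbids is unchanged, so the winning probability $V$ is preserved while the failure risk strictly decreases; writing $U_2(\cdot)$ for Player~2's expected payoff and $m_1$ for Player~1's mean action, $U_2(x_0-\delta/2)-U_2(x_0)=\tfrac{\delta}{2}\bigl(Rm_1+P+RV\bigr)>0$, contradicting optimality of $x_0$.

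The hard part is the boundary case $x_0=0$, where the downward relocation is unavailable and, in fact, the first-order gain from moving up vanishes ($U_2'(0^+)=0$), so the clean local argument degenerates. Here I would either invoke symmetry---in the symmetric equilibrium both players share the atom at $0$, so a slight overbid strictly dominates the tie and the mass at $0$ earns strictly less than $u^*$, contradicting the equilibrium value---or, without assuming symmetry, close the argument globally: the density characterization $f_p(x)\propto(1-x)^{-3}$ established in the preceding proposition, combined with the normalizations $\int f_1=1$ and $\pi+\int f_2=1$ and the boundary value conditions $u_1^*=Rm_2+R\pi$, $u_2^*=Rm_1$, forces $\pi\bigl(1+R/P\bigr)=0$, i.e. $\pi=0$. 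I expect this boundary analysis to be the only delicate point, since the interior case is entirely routine once the jump has been computed.
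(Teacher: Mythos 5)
Your proof is correct, and it takes a genuinely different---and more rigorous---route than the paper's. The paper disposes of atoms in a single step: if Player~1 has an atom at $x$, then Player~2 gains the atom's mass by overbidding at $x+\varepsilon'$ with vanishing extra risk, so $u_2(x+\varepsilon')>u_2(x)$, read off against equilibrium indifference on the common support. Your argument is the classical two-step no-atom proof from all-pay auctions: quantify the upward jump $\pi(1-x_0)^2R$ in the \emph{opponent's} payoff across the atom, use Theorem~\ref{thm:jules} together with $\int u_1\,dF_1=u_1^*$ to conclude that Player~1 vacates a left neighborhood $(x_0-\delta,x_0]$, then relocate the atom owner's own mass downward for a strict gain. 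What this buys is exactly the step the paper elides: $u_2(x+\varepsilon')>u_2(x)$ is not by itself a contradiction, since the opponent's payoff jumps \emph{at} the atom and $u_2(x)<u_2^*$ is perfectly compatible with $x$ lying in the closure of the opponent's support; your observation that the payoff must equal $u^*$ $F$-almost everywhere, in particular at any point carrying mass, is what actually closes the loop. You also isolate the boundary case $x_0=0$, which the paper passes over silently and which is a genuine obstruction for the relocation step (and, as you note, the local first-order gain from moving up vanishes there). Two caveats on your treatment of it: the symmetry-based fix is circular, since symmetry and uniqueness are downstream conclusions of Theorem~\ref{thm:nashcor} rather than hypotheses---you rightly do not lean on it---and your final identity $\pi\bigl(1+R/P\bigr)=0$ is not quite what the system yields. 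The mechanism is nonetheless sound and runs cleanly: an atom $\pi$ at $0$ gives $u_1^*=R(m_2+\pi)$ and $u_2^*=Rm_1$, so writing $f_i(x)=c_i(1-x)^{-3}$ on the common support $[0,h]$ and $J:=\int_0^h x(1-x)^{-3}\,dx$, the indifference conditions give $c_1(1-J)=P/R$ and $c_2(1-J)=P/R+\pi$, hence $c_2>c_1$ (with $J<1$ forced by positivity of $c_1$), while the normalizations $c_1\int_0^h(1-x)^{-3}\,dx=1$ and $\pi+c_2\int_0^h(1-x)^{-3}\,dx=1$ force $c_2<c_1$; together these yield $\pi=0$.
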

\begin{proof}
 Suppose the existence of a point $x$ with positive probability (which means that $f_1(x)$ is a Dirac. Since we determined the expression of $f_1(x)$ for almost every $x$, there is a $\varepsilon > 0$ such that there is no other point with positive probability in $[x, x+\varepsilon]$. Hence, there is $0 < \varepsilon' < \varepsilon$ such that $u_2(x + \varepsilon') > u_2(x)$ since any positive $\varepsilon'$ improves the probability of winning the reward by $\mathbb{P}[\text{Player 1 plays } x]$ and the risk increment goes to $0$ with $\varepsilon'$.\
\end{proof}

\nashcor*

\begin{proof}
Up to a set of measure 0 and on which the probability is 0, we have determined the expression of $f_1$ and $f_2$ above up to a constant. This means that $f_1 = f_2 = f$.
We still have to find the constant and the upper limit of the support.
Let us reiterate the calculations, this time using the fact that $0 \in S$ with $S$ the support of $f$. This means that for any $a\in S$:

\begin{align*}0 &= u_1(a) - u_1(0)\\
&= \mathbb{E}_b[u_1(a, b) - u_1(0, b)]\\
&= \mathbb{E}_b\left[Rb(1-a) - aP  + [a>b](1-a)(1-b)R - Rb \right]\\
&= -a R \bar b - a P + (1-a) \int_0^a R (1-b) f(b) db\end{align*}

$$\int_0^a R (1-b) f(b) db = (R \bar b + P) \frac{a}{1-a}$$

We derivate to obtain:
$$f(a) = \left(\bar b + \frac{P}{R}\right) \frac{1}{(1-a)^3}$$
We now have two unknowns and two unknowns:
$$
\left\{ \begin{array}{ll}
h := \sup S  \text{ such that } \int_0^h f(x) dx = 1\\
\bar b = \int_0^h x f(x) dx
\end{array} \right.
$$
We define $P := \frac{P}{R}$.\\
$$\int_0^h f(x) dx = \frac{\bar b + P}{2} \left(\frac{1}{(h-1)^2}-1\right) = 1$$\\
$$\int_0^h x f(x) dx = \bar b = \frac{\bar b + P}{2} \frac{h^2}{(h-1)^2}$$

We get
$$\bar b = \frac{h^2}{1-(h-1)^2}$$
$$2 \frac{(h-1)^2}{1 - (h-1)^2} - P = \frac{h^2}{1 - (h-1)^2}$$
$$2(h-1)^2 - P + P(h-1)^2 - h^2 = 0$$
$$(P+ 1) h^2 - (2 P+ 4) h + 2 = 0$$
Hence, $h = \frac{2 + P\pm \sqrt{P^2 + 2 P+ 2}}{1 + P}$\\
Since $h<1$, we get
\begin{align*}
    h &= \frac{2 + P- \sqrt{P^2 + 2 P+ 2}}{1 + P}\\
      &= 1 - \frac{\sqrt{(P+ 1)^2 + 1} - 1}{1+P}
\end{align*}
We remark that $(\sqrt{(P+ 1)^2 + 1} + 1)( \sqrt{(P+ 1)^2 + 1} - 1) = (P+1)^2$\\
Hence $h = 1 - \sqrt{\frac{\sqrt{(P+ 1)^2 + 1} - 1}{\sqrt{(P+ 1)^2 + 1} + 1}}$
\begin{align*}\bar b + P&=\frac{2}{\frac{1}{(h-1)^2} - 1}\\
&= \frac{2}{\frac{\sqrt{(P+ 1)^2 + 1} + 1}{\sqrt{(P+ 1)^2 + 1} - 1} - 1}\\
&= \sqrt{(P+ 1)^2 + 1} - 1\end{align*}
Finally, $f(x) = \left[x < 1 - \sqrt{\frac{k - 1}{k + 1}}\right] \frac{ k - 1}{(1-x)^3}$ with $k := \sqrt{(P+ 1)^2 + 1}$\\
From the expressions above, we get $\bar b = k - 1 - P$ and $u^* = R \bar b$.
\end{proof}

\section{\texorpdfstring{Estimation of ${r_{max}}$ and $w$ in the multiplayer setting}{Estimation of rmax and w in the multiplayer setting}}
\label{estimate}

We recall the equations:

$$f(x) = \frac{P+ w}{(n-1)(1-x)^{2+\frac{1}{n-1}} (Px + w)^{1 - \frac{1}{n-1}}}$$
\begin{align*}
        \int_0^{r_{max}} f(x) dx &= 1 \\
        \int_0^{r_{max}} x f(x) dx &= \bar r 
\end{align*}
with $$w := \bar r ^ {n-1}$$

We could estimate the integrals by numerical integration. However $\bar r$ becomes smaller when increasing $n$, which causes $f$ to have a peak at $0$ and renders the integral estimates unreliable.
Computations \footnote{We used Wolfram Alpha with \texttt{ReplaceAll[Integrate[(1/(1 - x)) D[((w + x P)/(1 - x))\^{}(1/(n - 1)), x], x], {n -> 42}]} for various values of $n$, found a pattern and checked that the derivatives match.} give us 
$$\int f(x) = \frac{w + nP(1-x) + Cx}{n (1 - x)(P+w)} \sqrt[n-1]{\frac{Cx + w}{1-x}}$$
$$\int x f(x) = \frac{w - n w  (1-x) + Cx}{n (1-x)(P+w)}\sqrt[n-1]{\frac{Cx+w}{1-x}}$$
Hence
\begin{equation}
\label{eq:multi1}
\frac{w + nP(1-{r_{max}}) + P{r_{max}}}{n (1 - {r_{max}})(P+w)} \sqrt[n-1]{\frac{P{r_{max}} + w}{1-{r_{max}}}} -  \frac{w + nP}{n (P+w)} \sqrt[n-1]{w} = 1
\end{equation}
\begin{equation}
\label{eq:multi2}
\frac{w - n w  (1-{r_{max}}) + P{r_{max}}}{n (1-{r_{max}})(P+w)}\sqrt[n-1]{\frac{P{r_{max}}+w}{1-{r_{max}}}} - \frac{w - n w}{n (P+w)}\sqrt[n-1]{w} = \sqrt[n-1]{w}
\end{equation}

The value of integrals are increasing in ${r_{max}}$ since $f(x)$ is positive, hence for any value of $w$ we can find the corresponding value of ${r_{max}}$ by binary search using \eqref{eq:multi1}. Then we are left with finding the value of $w$ using \eqref{eq:multi2} . We observe experimentally that the function $w \rightarrow \int_0^{{r_{max}}(w)}xf(x)dx - \sqrt[n-1]{w}$ seems to have only one root and that it is positive before that root and negative after that root. We therefore use binary search.

Equation \eqref{eq:multi2} also defines $w$ as a fixed point and the iterative algorithm $w \leftarrow \left(\int_0^{{r_{max}}(w)}x f(x)\right)^{n-1}$ also converges, although it seems to sometimes loop between a few close values because of numerical errors.

We remark that $w$ goes very quickly to $0$ as $n$ increases. This causes numeric errors in the computations of the primitives at $0$ that we fix by storing $\log w$ instead of $w$.  For the same reason, we compute the integrals in log space.

Equation \eqref{eq:multi2} is more interesting as the indefinite integral $F({r_{max}})$ can be negative. Depending on the sign, we store either $\log F({r_{max}})$ or $\log -F({r_{max}})$.

Finally, we apply the Log-Sum-Exp Trick to compute the value of the integral in log space. The final algorithm to find $w$ and ${r_{max}}$ takes less than 1 ms on a laptop for any value of the parameters.

\section{Proof of Theorem \ref{thm:nashwithcor}}
\label{proof:nashwithcor}

\nashwithcor*
\begin{proof}
        We reuse the proof of \ref{proof:nashcor}, only modifying the calculations.
    We note $a$ and $b$ the actions of the players, aka their individual probability of failure noted $r_1$ and $r_2$ above and note $c$ their joint probability of failure noted $\tilde r$ above. $c$ is a function $c(a,b,\rho)$. We note $p$ the probability distribution corresponding to the Nash equilibrium.
    In both cases, we have the equality $$\int_0^1 p(b) c~db + a P= \int_0^a p(b)(1-a-b+c)~db$$

    When $\rho = 1$, $c = \min(a, b)$

    $$\int_0^a p(b) b~db + a \int_a^1 p(b)~db + a P= \int_0^a p(b)(1-a)~db$$

    We derivate wrt $a$:

    $$a p(a) + \int_a^1 p(b)~db - ap(a) + P= p(a)(1-a) - \int_0^a p(b)~db$$

    $$p(x) = \frac{1+P}{1-x}$$

    We solve $\int_0^{r_{max}} p(x)~dx=1$ as $\log(1-r_{max}) = -\frac{1}{P+1}$, thus
    $$r_{max} = 1 - \exp\left(-\frac{1}{P+1}\right)$$
    Finally, a simple computation gives $$\bar r = 1 - (P+1)\left(1-\exp\left(-\frac{1}{P+1}\right)\right)$$

    When $\rho = -1$, $c = \max(0, a+b-1)$. In other terms, $c=0$ if $b<1-a$ and $c=a+b-1$ if $b>1-a$.

    We first treat $a>\frac{1}{2}$, aka $a>1-a$, to show by contradiction that $p(a) = 0$.
    $$\int_{1-a}^1 p(b)(a+b-1)~db+ aP= \int_0^{1-a} p(b)(1-a-b)~db$$
    $$aP= \int_0^1 p(b)(1-a-b)~db = 1 - \bar r - a$$
    This is impossible, thus $p(a) = 0$ for $a> 1/2$.

    We now suppose $a < 1-a$:

    $$\int_{1-a}^1 p(b)(a+b-1)~db + aP= \int_0^a p(b) (1-a-b)~db$$

    $\int_{1-a}^1 p(b)(a+b-1)~db = 0$ since $1-a >  \frac12$.
    We derivate twice to get:

    $$(1-2a)p(a) = \int_0^a p(b)~db + P$$
    $$1-2a) p'(a) - 3p(a) = 0$$

    The first equation gives $p(0) = P$. Combined with the second differential equation, we get:

    $$p(x) =\frac{P}{(1-2x)^{3/2}}$$

    Similarly, we solve  $$r_{max} = \frac{1}{2} - \frac{P^2}{2(P+1)^2}$$
    $$\bar r = \frac{1}{2P+2}$$
\end{proof}
%% If you have bibdatabase file and want bibtex to generate the
%% bibitems, please use
%%
\section{Proof of Theorem \ref{thm:pvalue}}
\pvalue*
\begin{proof}
    We suppose that the diameter $d$ is obtained in some unit direction $\vec a$: $$\max \limits_{x\in P}a \cdot x - \min \limits_{x\in P}a \cdot x = d$$
    We define $D$ the line $\left[\argmin \limits_{x\in P}a \cdot x, \argmax \limits_{x\in P}a \cdot x\right]$.
    Then, for any $v$, $$\max \limits_{x\in P}v \cdot x - \min \limits_{x\in P}v \cdot x \ge \max \limits_{x\in D}v \cdot x - \min \limits_{x\in D}v \cdot x = d \times |v \cdot a|$$

Hence, $$\Pr\left[ \max \limits_{x\in P}v \cdot x - \min \limits_{x\in P}v \cdot x < \varepsilon \right] \le \Pr\left[ |v \cdot a| < \frac{\varepsilon}{d} \right]$$

If $v$ is a vector of iid standard Gaussian variables, $v \cdot a \sim \mathcal{N}(0, 1)$ and $(v \cdot a)^2$ follows a $\chi^2$ distribution with 1 degree of freedom.
The sum of multiple iterations will follow:

$$\Pr\left[\sum_i^K (\max \limits_{x\in P}v_i \cdot x - \min \limits_{x\in P}v_i \cdot x)^2 < \varepsilon \right] \le \Pr\left[ \sum_i^K z_i^2 < \frac{\varepsilon}{d} \right]$$

where $z_i \sim \mathcal{N}(0, 1)$ are independent normal variables.

$\sum_i^K z_i^2$ follows a $\chi^2$ distribution with $K$ degrees of freedom. With $F_{\chi^2}(x, K)$ the cumulative distribution function, we get:

$$\Pr\left[\sum_i^K (\max \limits_{x\in P}v_i \cdot x - \min \limits_{x\in P}v_i \cdot x)^2 < \varepsilon \right] \le F_{\chi^2}\left(\frac{\varepsilon}{d}, K\right)$$
\end{proof}

\section{Linear equilibrium in finite approximations with \texttt{shift~=~false}}
\label{sec:linear-noshift}

\begin{figure}[H]
  \centering
  \begin{minipage}[t]{0.48\textwidth}
    \centering
    \includesvg[width=1.1\linewidth]{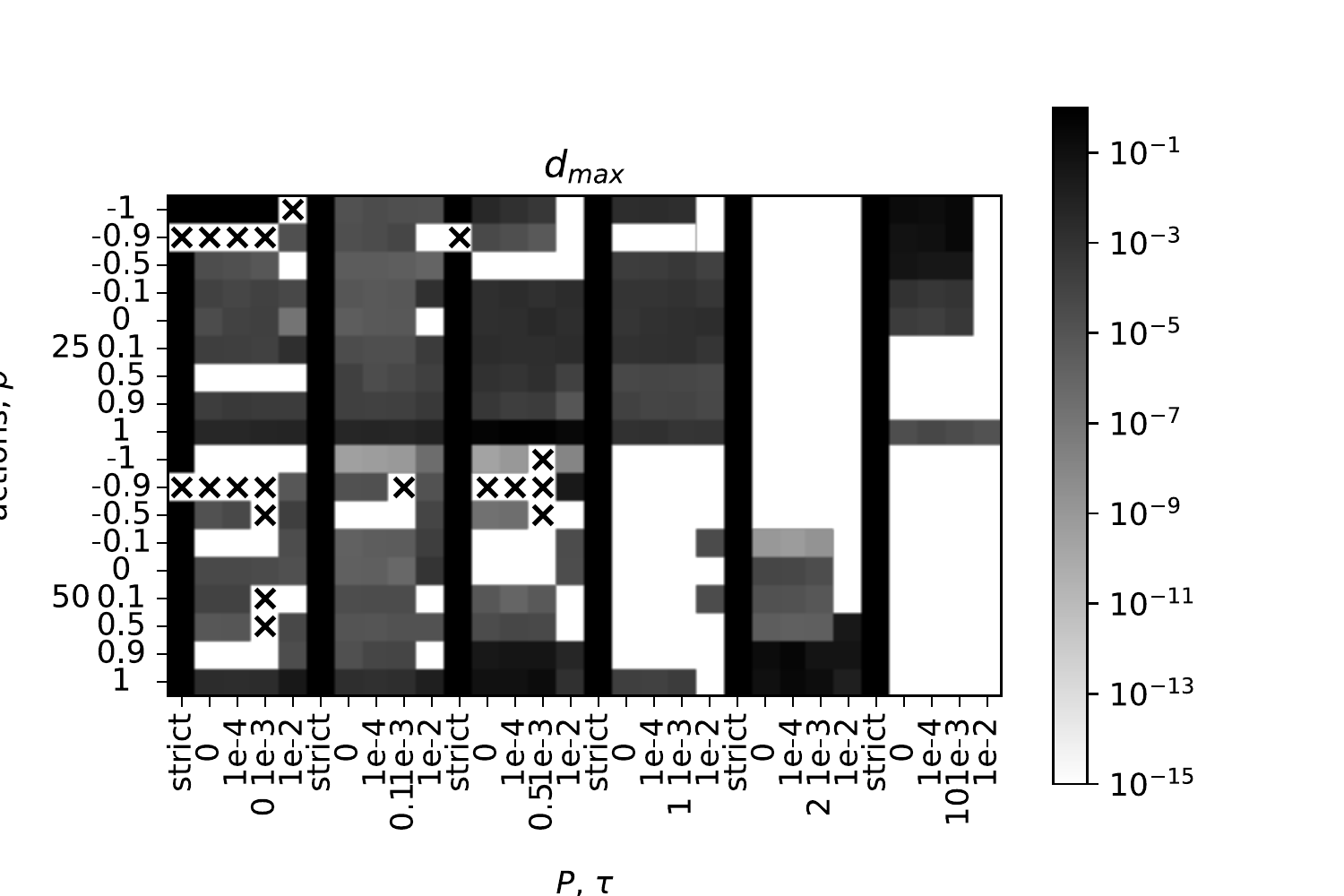}
  \end{minipage}
  \hfill
  \begin{minipage}[t]{0.48\textwidth}
    \centering
    \includesvg[width=1.1\linewidth]{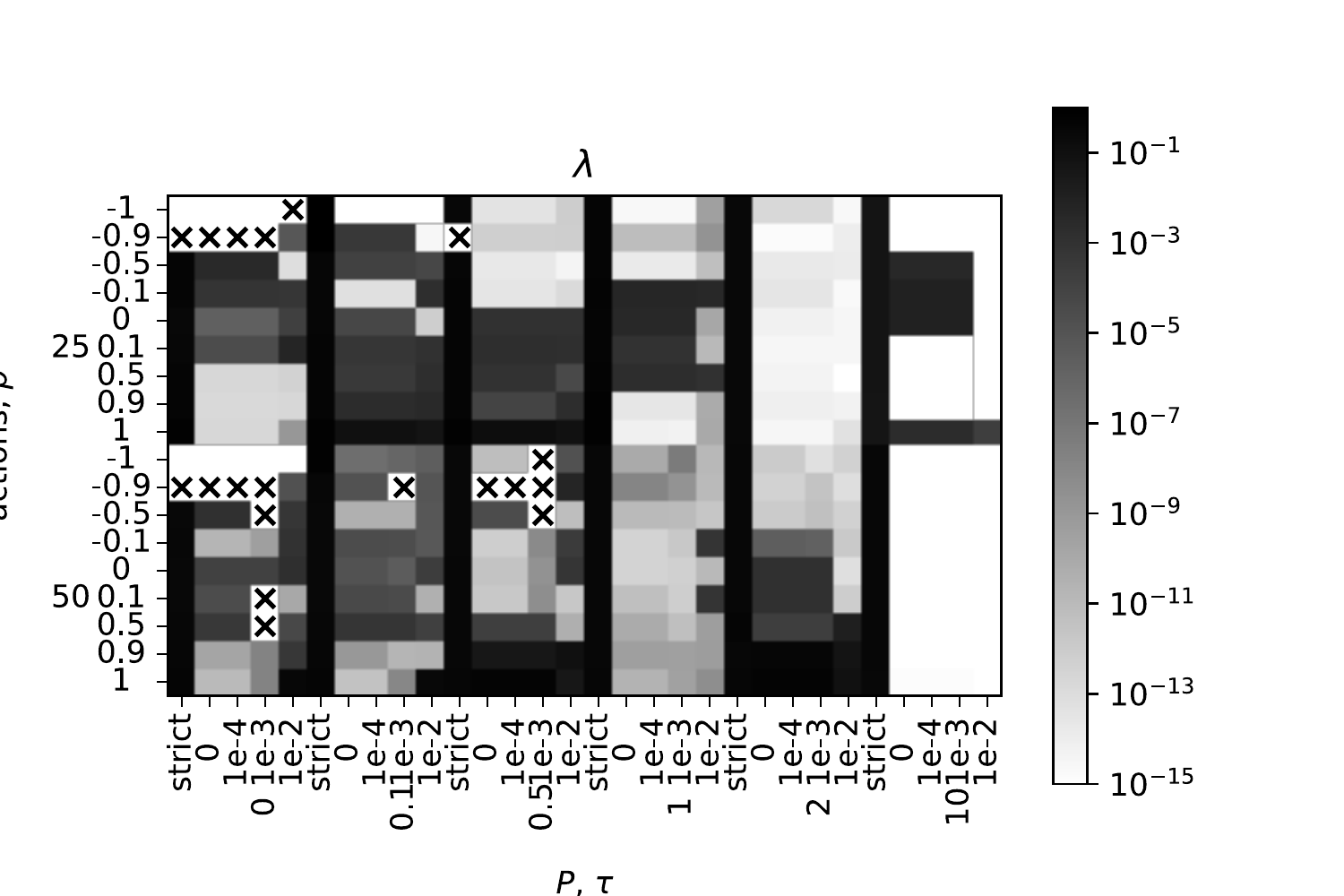}
  \end{minipage}
\end{figure}

%% else use the following coding to input the bibitems directly in the
%% TeX file.

% \begin{thebibliography}{00}

% %% \bibitem[Author(year)]{label}
% %% Text of bibliographic item

% \bibitem[ ()]{}

% \end{thebibliography}
\end{document}